\newcommand{\implementationurl}{\url{https://github.com/seba--/incremental}}
\newcommand{\OR}{\ensuremath{\;|\;}}
\newcommand{\WHERE}{\OR}
\newcommand{\ctxcolor}{\textblack}
\newcommand{\typecolor}{\textblack}
\newcommand{\typectxcolor}[1]{\textblack{\ensuremath{#1}}}
\newcommand{\ofType}{\ensuremath{\mathop{:}}}
\renewcommand{\to}{\mathop{\rightarrow}}
\newcommand{\ctxplus}[2]{\ensuremath{#1; #2}}
\newcommand{\dom}[1]{\func{dom}(#1)}
\newcommand{\reqs}[0]{\ensuremath{R}}
\newcommand{\ctxNode}[1]{\ensuremath{#1 \mathop{\vdash}\, }}
\newcommand{\typeNode}[1]{\ensuremath{\,\mathop{:} #1}}
\newcommand{\coFNode}[2]{\ensuremath{\mathrel{|} #1 \mathrel{|} #2}}
\newcommand{\judge}[3]{\ensuremath{#1 \mathrel{\vdash} #2 \mathrel{:} #3}}
\newcommand{\judgeM}[2]{\ensuremath{#1 \mathrel{\vdash} #2 }}
\newcommand{\judgeP}[1]{\ensuremath{#1 }}
\newcommand{\judgeOK}[3]{\ensuremath{#1 \mathrel{\vdash} #2 \mathrel{|} \typecolor{#3}}}
\newcommand{\cojudge}[5]{\ensuremath{#1 \mathrel{:} \typecolor{#2} \mathrel{|} \typecolor{#3} \mathrel{|} \ctxcolor{#4} \mathrel{|} \ctxcolor{#5}}}
\newcommand{\cojudgeOM}[4]{\ensuremath{#1\mathrel{|} \typecolor{#2} \mathrel{|} \ctxcolor{#3} \mathrel{|} \ctxcolor{#4}}}
\newcommand{\cojudgeOC}[3]{\ensuremath{#1 \mathrel{|} \typecolor{#2} \mathrel{|} \ctxcolor{#3}}}
\newcommand{\judgeOP}[2]{\ensuremath{#1 \mathrel{|} \typecolor{#2}}}
\newcommand{\mustEqual}[0]{\ensuremath{\overset{!}{=}}}
\newcommand{\rulename}[1]{\ensuremath{\textsc{\scriptsize #1}}\xspace}
\newcommand{\fresh}[1]{\ensuremath{#1\ \term{is\ fresh}}}
\newcommand{\afresh}[1]{\ensuremath{#1\ \term{are\ fresh}}}
\newcommand{\smap}{\ensuremath{\mathop{\!\mapsto\!}}}
\newcommand{\subtpe}{\ensuremath{\mathbin{\!\scalerel{{<}\!}{\colon}\!\!}}}
\newcommand{\domC}{\ensuremath{dom(CT)}}
\newcommand{\satisfy}[0]{\ensuremath{\mathit{satisfies}}}
\newcommand{\func}[1]{\ensuremath{\operatorname{#1}}}
\newcommand{\cname}[1]{\ensuremath{\mathsf{#1}}}
\newcommand{\seq}[1]{\ensuremath{\overline{#1}}}
\theoremstyle{prop}
\newtheorem{prop}[theorem]{Proposition}
\theoremstyle{assump}
\newtheorem{assump}[theorem]{Assumption}
\newcommand{\extendsCT}[2]{\ensuremath{#1 \mathrel{\keyw{extends}} #2}}
\newcommand{\constrCT}[2]{\ensuremath{#1.\keyw{init}(#2)}}
\newcommand{\cond}[0]{\ensuremath{\mathit{cond}}}
\newcommand{\extendsCR}[2]{\ensuremath{#1 \mathrel{.} \mathrel{\keyw{extends}} \mathrel{:} #2}}
\newcommand{\projExt}[1]{\func{projExt}(#1)}
\newcommand{\term}[1]{\ensuremath{\mathrm{#1}}}
\newcommand{\keyw}[1]{\ensuremath{\mathbf{#1}}}
\newcolumntype{C}[1]{>{\centering\arraybackslash}m{#1}} 
\newcolumntype{P}[1]{>{\arraybackslash}m{#1}} 
\newcommand*{\mycleardoublepage}{\clearpage\if@twoside
\ifodd\c@page
  \hbox{}
  \clearpage
\fi
\hbox{}
\thispagestyle{empty}
\clearpage
\fi}
\newcounter{MOPReqCounter}
\colorlet{myblue}{blue}
\colorlet{myred}{red}
\colorlet{mypurple}{violet!150!black}
\colorlet{mygreen}{green!40!black}
\newcommand{\textblack}{\textcolor{black}}
\newenvironment{indented}[1][]%
{\noindent
 \begin{itemize}
 \item[#1]}%
{\end{itemize}}
\newcounter{goal}{}
\def\nvphantom{\v@true\h@false\nph@nt}
\def\nhphantom{\v@false\h@true\nph@nt}
\def\nphantom{\v@true\h@true\nph@nt}
\def\nph@nt{\ifmmode\def\next{\mathpalette\nmathph@nt}%
  \else\let\next\nmakeph@nt\fi\next}
\def\nmakeph@nt#1{\setbox\z@\hbox{#1}\nfinph@nt}
\def\nmathph@nt#1#2{\setbox\z@\hbox{$\m@th#1{#2}$}\nfinph@nt}
\def\nfinph@nt{\setbox\tw@\null
  \ifv@ \ht\tw@\ht\z@ \dp\tw@\dp\z@\fi
  \ifh@ \wd\tw@-\wd\z@\fi \box\tw@}
\definecolor{keyword}{RGB}{96,0,53}
\definecolor{darkgreen}{RGB}{0,128,0}
\definecolor{listingsFrame}{gray}{0.6}
\lstdefinestyle{normal}{%
  xleftmargin=\parindent,
}
\lstdefinestyle{nospace}{%
  aboveskip=0em,
  belowskip=0em,
  xleftmargin=0em,
  xrightmargin=0em,
}
\lstdefinestyle{nohspace}{%
  xleftmargin=0em,
  xrightmargin=0em,
}
\lstdefinestyle{scriptsize}{%
  basicstyle=\scriptsize\sf,
}
\lstdefinestyle{footnote}{%
  basicstyle=\footnotesize\sf,
}
\lstdefinestyle{figureframe}{%
  xleftmargin=0pt,
  frame=single,
  rulecolor=\color{listingsFrame},
  numbersep=6pt,
}
\lstdefinestyle{inlineframe}{%
  xleftmargin=2pt,
  xrightmargin=2pt,
  frame=single,
  rulecolor=\color{listingsFrame},
  numbersep=6pt,
}
\newcommand{\mytilde}{%
  \texttt{\resizebox{.48em}{1ex}{\hbox{$\sim$}}}%
}
\lstdefinelanguage{LaTeX}[]{Tex}{%
  language=Tex,
  morekeywords={emph, label, ref, begin, end, part, chapter, section,
                subsection, subsubsection, paragraph, subparagraph, cite},
}
\lstdefinelanguage{FJ}[]{Java}{%
  moredelim=[is][\sffamily]{!}{!},
  moredelim=[is][\rmfamily\itshape]{'}{'}
}
\lstdefinelanguage{SugarJ}[]{Java}{%
  language=Java,
  morekeywords={sugar, extension, context, free, syntax, desugarings, sorts, signature,
    constructors, rules, strategies, assert, as, editor, services, colorer, folding,
    outliner, checks, recursive, errors, warnings, where,
    css, outlining, rec, color, completions, completion, template,
    layout, analyses, then},
  mathescape=true,
  deletestring=[b]',
  morecomment=[l]{//},
}
\lstdefinelanguage{SugarJXML}[]{SugarJ}{%
  deletecomment=[l]{//},
  morecomment=[s]{<!--}{-->}
}
\lstdefinelanguage{SugarMDD}[]{SugarJ}{%
  morekeywords={model, transformation},
}
\lstdefinelanguage{SugarJATM}[]{SugarMDD}{%
  morekeywords={statemachine, initial, state, events},
}
\lstdefinelanguage{SugarJEntity}[]{SugarMDD}{%
  morekeywords={entity},
}
\lstdefinelanguage{SugarJATMEntity}[]{SugarJATM}{%
  morekeywords={data, entity},
}
\lstdefinelanguage{SugarJTemplate}[]{SugarMDD}{%
  morekeywords={in, \$for, template},
  mathescape=false
}
\lstdefinelanguage{SugarJFeature}[]{SugarMDD}{%
  morekeywords={featuremodel, features, constraint, config, enable, disable, variable},
  otherkeywords={\#ifdef},
}
\lstdefinelanguage{scala}{
  morekeywords={abstract,case,catch,class,def,%
    do,else,extends,false,final,finally,%
    for,if,implicit,import,match,mixin,%
    new,null,object,override,package,%
    private,protected,requires,return,sealed,%
    super,this,throw,trait,true,try,%
    type,val,var,while,with,yield},
  sensitive=true,
  morecomment=[l]{//},
  morecomment=[n]{/*}{*/},
  morestring=[b]",
  morestring=[b]',
  morestring=[b]"""
}
\lstdefinelanguage{SDF}{%
  morekeywords={context, free, syntax, sorts, signature,
    constructors, rules, strategies, module, imports, exports},
  escapechar=_,
  mathescape=true,
  comment=[l]{\%\%},
  morestring=[b]",
}
\lstdefinelanguage{MyPython}[]{Python}{%
  escapechar=\%,
  mathescape=true,
}
\lstdefinelanguage{MyHaskell}[]{Haskell}{%
  escapechar=\%,
  mathescape=true,
  deletekeywords={Nothing,Just,False,True,putStrLn,fail,fromJust,lookup,Num,exp,free,snd,String,
  return,error,otherwise,not,show,read,Eval,Read,readsPrec,print},
}
\lstdefinelanguage{SugarHaskell}[]{MyHaskell}{%
  morekeywords={context, free, syntax, desugarings, sorts, signature,
    constructors, rules, strategies, lexical, reject},
  mathescape=false,
}
\lstdefinelanguage{SugarHaskellArrows}[]{SugarHaskell}{%
  morekeywords={proc},
}
\lstdefinelanguage{EBNF}{
  morestring=[b]"
}
\lstdefinelanguage{Constraint}{
}
\lstdefinelanguage{Plain}{}
\lstdefinelanguage{Pseudo}{
  keywords={foreach,match,case,in,return,if,else},
  mathescape=true,
}
\lstdefinelanguage{Questionnaire}[]{Java}{%
  morekeywords = {questionnaire, question, value, Boolean, String, Integer, group,
    if, else, define, ask}
}
\lstdefinelanguage{SugarFomega}{
  keywords = {module, val, type, mu, if, then, else, case, of,
    fold, unfold, true, false, as, public, import, syntax, desugaring, typing, context, free, let, in, end, forall, do},
  mathescape = true,
  morestring=[b]",
}
\tikzstyle{invisible} = []
\tikzstyle{model}
\tikzstyle{transformation}
\tikzstyle{meta} = [ double ]
\tikzstyle{generated} = [ dashed ]
\tikzstyle{mopdependency}
\tikzstyle{instance}
\tikzstyle{document}
\tikzstyle{component}
\tikzstyle{code}
\tikzstyle{process}
\tikzstyle{point}
\tikzstyle{flow diagram}
\tikzstyle{ast}
\tikzstyle{ast node}
\tikzstyle{dependency}
\tikzstyle{red}
\tikzstyle{blue}
\tikzstyle{green}
\tikzstyle{note}
\tikzstyle{zoomed arrow}
\tikzstyle{zoom}
\tikzstyle{uml class}
\tikzstyle{uml package}
\tikzstyle{uml dependency}
\tikzstyle{double arrow}
\newdimen\nodedistance
\tikzstyle{name graph}
\tikzstyle{namenode}
\tikzstyle{synthesized}
\tikzstyle{ref}
\tikzstyle{badref}
\newdimen\nodedistance
\tikzstyle{exp tree}
\tikzstyle{exp node}
\tikzstyle{link}
\tikzstyle{ctx node}
\tikzstyle{ctx edge}
\tikzstyle{type node}
\tikzstyle{type edge}
\newcommand{\distanceTop}{7.15pt}
\newcommand{\distanceBottom}{-2.55pt}
\newcommand{\distanceLeft}{-0.5pt}
\newcommand{\distanceRight}{0.5pt}
\newcommand{\drawrect}[4][]{\begin{tikzpicture}[remember picture, overlay]
\draw[layout box, #1]
  ($(#2) +(\distanceLeft, \distanceTop) + (-#4, #4)$) rectangle
  ($(#3) + (\distanceRight, \distanceBottom) + (#4, -#4)$);
\end{tikzpicture}}
\newcommand{\minirect}{\hbox to 9pt{\drawrect[thin,scale=0.3,black]{0pt, 15pt}{26pt, -5pt}{0pt}}}
\tikzstyle{layout box}
\tikzstyle{mini layout box}
\tikzstyle{annotation}
\tikzstyle{code annotation}
\newcommand{\selectKeywordFont}{\bfseries}
\newcommand{\selectStringLitFont}{\tt}
\newcommand{\selectIdentifierFont}{\relax}
\newcommand{\selectOperatorFont}{\tt}
\newcommand{\tokenHeight}{2ex}
\newcommand{\tokenDepth}{.5ex}
\tikzstyle{pretty print}
\tikzstyle{new line/helper}
\tikzstyle{new line tight/helper}
\tikzstyle{new line}[\tikzchainprevious]
\tikzstyle{new line tight}[\tikzchainprevious]
\tikzstyle{tab forward/helper}
\tikzstyle{tab forward}
\tikzstyle{indent}[1em]
\tikzstyle{identifier}
\tikzstyle{operator}
\tikzstyle{keyword}
\tikzstyle{stringlit}
\newcommand{\Empty}{}
\newcommand{\ignore}[1]{\relax}
\newcommand{\ppBox}[2][]{{
  \newcommand{\KW}[1]{
    \node[keyword] {##1};
    \HandleToken{\tikzchaincurrent}}

  \newcommand{\ID}[1]{
    \node[identifier] {##1};
    \HandleToken{\tikzchaincurrent}}

  \newcommand{\OP}[1]{
    \node[operator] {##1};
    \HandleToken{\tikzchaincurrent}}

  \newcommand{\SP}{
    \node[on chain] { };
    \HandleInsensitiveToken{\tikzchaincurrent}}

  \newcommand{\LB}[1]{
    \coordinate[on chain] (##1);}

  \newcommand{\TB}[1]{
    \coordinate[tab forward=##1];
  }

  \newcommand{\STR}[1]{
    \node[stringlit] {##1};
    \HandleToken{\tikzchaincurrent}}

  \renewcommand{\\}{
    \coordinate[new line=\NewlineNode];
    \let\HandleToken=\HandleTokenSubsequentLine}

  \newcommand{\newlineTight}{
    \coordinate[new line tight=\NewlineNode];
    \let\HandleToken=\HandleTokenSubsequentLine}

  \let\FirstToken=\Empty
  \newcommand{\AdjustFirst}[1]{
    \ifx\FirstToken\Empty
    \edef\FirstToken{##1}
    \fi
  }

  \let\LastToken=\Empty
  \newcommand{\AdjustLast}[1]{
    \edef\LastToken{##1}
  }

  \let\LeftToken=\Empty
  \newcommand{\AdjustLeft}[1]{
    \ifx\LeftToken\Empty
      \edef\LeftToken{##1}
    \else
      \pgf@process{\pgfpointanchor{##1}{west}}
      \setlength{\pgf@xa}{\pgf@x}
      \pgf@process{\pgfpointanchor{\LeftToken}{west}}
      \setlength{\pgf@xb}{\pgf@x}
      \ifdim\pgf@xa<\pgf@xb
        \edef\LeftToken{##1}
      \fi
    \fi
  }

  \let\RightToken=\Empty
  \newcommand{\AdjustRight}[1]{
    \ifx\RightToken\Empty
      \edef\RightToken{##1}
    \else
      \pgf@process{\pgfpointanchor{##1}{east}}
      \setlength{\pgf@xa}{\pgf@x}
      \pgf@process{\pgfpointanchor{\RightToken}{east}}
      \setlength{\pgf@xb}{\pgf@x}
      \ifdim\pgf@xa>\pgf@xb
        \edef\RightToken{##1}
      \fi
    \fi
  }

  \newcommand{\HandleTokenFirstLine}[1]{
    \AdjustFirst{##1}
    \AdjustRight{##1}
    \AdjustLast{##1}
  }

  \newcommand{\HandleTokenSubsequentLine}[1]{
    \AdjustLeft{##1}
    \AdjustRight{##1}
    \AdjustLast{##1}}

  \newcommand{\HandleInsensitiveToken}[1]{
  }

  \newcommand{\HandleToken}[1]{}

  \newcommand{\ppSubBox}[2][]{
    \coordinate[on chain];
    {
      \let\NewlineNode\tikzchaincurrent

      \let\FirstToken=\Empty
      \let\LastToken=\Empty
      \let\LeftToken=\Empty
      \let\RightToken=\Empty

      \let\HandleToken=\HandleTokenFirstLine

      ##2

      \ifx\LeftToken\Empty
        \path [##1]
          (\FirstToken.north west) rectangle
          (\LastToken.south east);
      \else
        \path [##1]
          (\RightToken.east |- \FirstToken.north) --
          (\FirstToken.north west) --
          (\FirstToken.south west) --
          (\LeftToken.west |- \FirstToken.south) --
          (\LeftToken.west |- \LastToken.south) --
          (\LastToken.south east) --
          (\LastToken.north east) --
          (\RightToken.east |- \LastToken.north) --
          cycle;
      \fi

      \global\let\InnerFirstToken=\FirstToken
      \global\let\InnerLastToken=\LastToken
      \global\let\InnerLeftToken=\LeftToken
      \global\let\InnerRightToken=\RightToken

      \global\let\InnerHandleToken=\HandleToken
    }
    \HandleToken{\InnerFirstToken}
    \ifx\HandleToken\HandleTokenFirstLine
    \let\HandleToken=\InnerHandleToken
    \fi
    \ifx\InnerLeftToken\Empty
    \else
    \HandleToken{\InnerLeftToken}
    \fi
    \HandleToken{\InnerRightToken}
    \HandleToken{\InnerLastToken}
  }

  \newcommand{\IN}[1]{\coordinate[indent]; ##1}
  \newcommand{\DE}[1]{\coordinate[indent=-1em]; ##1}

  \let\ppBox=\ppSubBox
  \ppBox[#1]{#2}}}
\renewcommand{\paragraph}[1]{\vspace{1ex}\noindent\textbf{#1}}
\begin{document}

\title{A Co-contextual Type Checker\newline for Featherweight Java (incl. Proofs)}
\titlerunning{A Co-contextual Type Checker for Featherweight Java (incl. Proofs)}

\author[1]{Edlira Kuci}
\author[2]{Sebastian Erdweg}
\author[1]{Oliver Bra\v{c}evac}
\author[1]{Andi Bejleri}
\author[1,3]{Mira Mezini}
\affil[1]{Technische Universit{\"a}t Darmstadt, Germany} 
\affil[2]{TU Delft, The Netherlands}
\affil[3]{Lancaster University, UK}

\authorrunning{E. Kuci, S. Erdweg, O. Bra\v{c}evac, A. Bejleri, and M. Mezini} 

\Copyright{Edlira Kuci, Sebastian Erdweg, Oliver Bra\v{c}evac, Andi Bejleri, and Mira Mezini}

\subjclass{D.3.3 Language Constructs and Features, F.3.1 Specifying and Verifying and Reasoning about Programs, F.3.2 Semantics of Programming Languages}
\keywords{type checking; co-contextual;
constraints; class table; Featherweight Java}

\maketitle

\begin{abstract}
  This paper addresses compositional and incremental type checking for object-oriented programming
  languages. Recent work achieved incremental type checking for structurally typed
  functional languages through \emph{co-contextual typing rules}, a constraint-based formulation
  that removes any context dependency for expression typings. However,
  that work does not cover key features of object-oriented languages: Subtype polymorphism, nominal
  typing, and implementation inheritance. Type checkers encode these features in the form of class
  tables, an additional form of typing context inhibiting incrementalization.

  In the present work, we demonstrate that an appropriate co-contextual notion to class tables exists,
  paving the way to efficient incremental type checkers for object-oriented languages. 
  This yields a novel formulation of Igarashi et al.'s Featherweight Java (FJ) type system,
  where we replace class tables by the dual concept of class table requirements and
  class table operations by dual operations on class table requirements.
  We prove the equivalence of FJ's type system and our co-contextual formulation.
  Based on our formulation, we implemented an incremental FJ type checker
  and compared its performance against javac on a number of realistic example programs.
 
\end{abstract}



\section{Introduction}
\label{sec:introduction}

Previous work~\cite{Erdweg15} presented a \emph{co-contextual formulation} of the PCF type
system with records, parametric polymorphism, and subtyping
by duality of the traditional contextual formulation.  The contextual formulation is based on a typing context and operations
for looking up, splitting, and extending the context. The co-contextual formulation replaces the
typing context and its operations with the dual concepts of context requirements and operations for
generating, merging, and satisfying requirements. This enables bottom-up type checking that
starts at the leaves of an expression tree.  Whenever a traditional type checker would look up
variable types in the typing context, the bottom-up co-contextual type checker generates fresh type
variables and generates context requirements stating that these type variables need to be bound to
actual types; it merges and satisfies these requirements as it visits the syntax tree upwards to the
root.  The co-contextual type formulation of PCF enables incremental type checking giving rise to
order-of-magnitude speedups~\cite{Erdweg15}.

These results motivated us to investigate co-contextual formulation of the type systems for
statically typed object-oriented (OO) languages, the state-of-the-art programming technology for
large-scale systems. We use Featherweight Java~\cite{Igarashi01}~(FJ) as a
representative calculus for these languages. Specifically, we consider two research questions: (a) Can we
formulate an equivalent co-contextual type system for FJ by duality to the traditional formulation,
and (b) if yes, how to define an incremental type checker based on it with significant speedups?
Addressing these questions is an important step towards a general theory of incremental type
checkers for statically typed OO languages, such as Java, C$\sharp$, or Eiffel.

We observe that the general principle of replacing the typing context and its operations with
co-contextual duals carries over to the \emph{class table}.
The latter is propagated top-down and completely specifies the available classes in the program,
e.g., member signatures and super classes. Dually, a co-contextual type checker
propagates \emph{class table requirements} bottom-up. This data structure
specifies requirements on classes and members and accompanying operations for
generating, merging, and removing these requirements.

However, defining appropriate merge and remove operations on co-contextual class table requirements
poses significant challenges, as they substantially differ from the equivalent operations on context
requirements.
Unlike the global namespace and structural typing of PCF,
FJ features context dependent member signatures (subtype polymorphism), a declared type hierarchy
(nominal typing), and inherited definitions (implementation inheritance).  

\begin{figure*}
\begin{tikzpicture}[exp tree, align=left]
    \node (add) at (0,2.3) {$+$};
    \node (addL) at  (-2.2, 2.3) {$\keyw{new}\ \cname{List}().\mathit{add}(1).\mathit{size}()$}; 
    \node (addR) at (2.8, 2.3) {$\keyw{new}\ \cname{LinkedList}().\mathit{add}(2).\mathit{size}();$};
    \node (R1L) at (-1.9,1.5) {$(R_1)\;\; \cname{List}.init()$\hfill $\hspace{1.5cm} $}; 
    \node (R1R) at (3,1.5) {$(R_4)\;\; \cname{LinkedList}.init()$\hfill $\hspace{1.5cm} $}; 
    \node (R2L) at (-2,1) {$(R_2)\;\; \cname{List}.\mathit{add}: \cname{Int}\rightarrow U_1$}; 
    \node (R2R) at (2.9,1) {$(R_5)\;\; \cname{LinkedList}.\mathit{add}: \cname{Int}\rightarrow U_2$}; 
    \node (R3L) at (-1.6, 0.5) {$(R_3)\;\; U_1.\mathit{size}: () \rightarrow U_3$\hfill $\hspace{1cm}$};
    \node (R3R) at (2.8, 0.5) {$(R_6)\;\; U_2.\mathit{size}: () \rightarrow U_4$\hfill $\hspace{1cm}$};
\end{tikzpicture}
\caption{Requirements generated from co-contextually type checking the $+$ expression.}
\label{fig:intro}
\end{figure*}

For an intuition of class table requirements and the specific challenges concerning their
operations, consider the example in Figure~\ref{fig:intro}. Type checking the operands of $+$
yields the class table requirements $R_1$ to $R_6$.
Here and throughout the paper we use metavariable $U$
to denote unification variables as placeholders for actual types. For example, the invocation of
method $\mathit{add}$
on \keyw{new} \cname{List}() yields a class table requirement $R_2$.
The goal of co-contextual type checking is to avoid using any context information, hence we cannot
look up the signature of $\cname{List}.\mathit{add}$
in the class table. Instead, we use a placeholder $U_1$
until we discover the definition of $\cname{List}.\mathit{add}$
later on. As consequence, we lack knowledge about the receiver type of any subsequent method call,
such as $\mathit{size}$
in our example. This leads to requirement $R_3$,
which states that (yet unknown) class $U_1$
should exist that has a method $\mathit{size}$
with no arguments and (yet unknown) return type $U_3$.
Assuming $+$ operates on integers, type checking the $+$~operator later unifies $U_3$ and $U_4$ with \cname{Int}, thus refining the class table requirements.

To illustrate issues with merging requirements, consider the requirements $R_3$
and $R_6$
regarding $\mathit{size}$.
Due to nominal typing, the signature of this method depends on $U_1$
and $U_2$,
where it is yet unknown how these classes are related to each other. It might be that $U_1$
and $U_2$
refer to the same class, which implies that these two requirements overlap and the corresponding types of $\mathit{size}$ in $R_3$ and $R_6$ are unified. Alternatively,
it might be the case that $U_1$
and $U_2$
are distinct classes, individually declaring a method $\mathit{size}$. Unifying
the types of $size$ from $R_3$ and $R_6$ would be wrong. Therefore, it is locally indeterminate
whether a merge should unify or keep the requirements separate.

To illustrate issues with removing class requirements, consider the requirement $R_5$. Suppose that we encounter a declaration of $\mathit{add}$ in \cname{LinkedList}. Just removing $R_5$ is not sufficient because we do not know whether \cname{LinkedList} overrides $\mathit{add}$ of a yet unknown superclasss $U$, or not.
Again, the situation is locally indeterminate. In case of overriding, 
FJ requires that the signatures of overriding and overridden methods be identical.
Hence, it would necessary add constraints equating the two signatures.
However, it is equally possible that $\cname{LinkedList}.\mathit{add}$ overrides nothing,
so that no additional constraints are necessary.
If, however, $\cname{LinkedList}$ inherits $\mathit{add}$ from $\cname{List}$ without overriding it, we need to record the inheritance relation between these two classes, in order to be able to replace $U_2$ with the actual return type of $\mathit{size}$.

The example illustrates that a co-contextual formulation for nominal typing with subtype 
polymorphism and implementation inheritance poses new research questions that the 
work on co-contextual PCF did not address. A key contribution of the work presented in 
this paper is to answer these questions. The other key contribution is an 
incremental type checker for FJ based on the co-contextual FJ formulation. 
We evaluate the initial and incremental performance
of the co-contextual FJ type checker on synthesized FJ programs and realistic 
java programs by comparison
to javac and a context-based implementation of FJ.

\noindent
To summarize, the paper makes the following contributions:
\begin{itemize}
\item We present a co-contextual formulation of FJ's type system by duality to the traditional type system formulation by Igarashi et al.~\cite{Igarashi01}. Our formulation replaces the class table by its dual concept of class table requirements and it replaces field/method lookup, class table duplication, and class table extension by the dual operations of requirement generation, merging, and removing. In particular, defining the semantics of merging and removing class table requirements in the presence of nominal types, OO subtype polymorphism, and implementation inheritance constitute a key contribution of this work.
\item We present a method to derive co-contextual typing rules for FJ from traditional ones and provide a proof of equivalence between contextual and co-contextual FJ. 
\item We provide a description of type checker optimizations for co-contextual FJ with incrementalization and a performance evaluation. 
\end{itemize}



\section{Background and Motivation}
\label{sec:backgr-motiv}
In this section, we present the FJ typing rules from~\cite{Igarashi01} and give an example to
illustrate how contextual and co-contextual FJ type checkers work.

\subsection{Featherweight Java: Syntax and Typing Rules}
\label{sec:feath-java-synt}
Featherweight Java~\cite{Igarashi01} is a minimal core language for modeling Java's type system.
Figure~\ref{fig:fj-syntax} shows the syntax of classes, constructors, methods, expressions, and
typing contexts. Metavariables $C$, $D$, and $E$
denote class names and types; $f$
denotes fields; $m$
denotes method names; $\keyw{this}$
denotes the reference to the current object. As is customary, an overline denotes a sequence in the
metalanguage. $\Gamma$ is a set of bindings from variables and \keyw{this} to types.

\begin{figure*}[t]
  \centering
  $
\newcommand{\comment}[1]{\hskip2em\text{#1}}
\begin{array}{c@{\hskip.5em}c@{\hskip.5em}ll}
L & :: = & \keyw{class}\ C\ \keyw{extends}\ D\ \{ \overline{C}\ \overline{f};\ K\ \overline{M}\} & \comment{class\ declaration} \\
K & :: = & C(\overline{C}\ \overline{f})\{ \keyw{super}( \overline{f});\ \keyw{this}.\overline{f} =\overline{f}\} & \comment{constructor} \\
M & :: = & C\ m(\overline{C}\ \overline{x})\{\ \keyw{return}\ e;\} & \comment{method declaration}\\
  e & ::= & x \OR \keyw{this} \OR e.f \OR e.m(\overline{e}) \OR \keyw{new}\ C(\overline{e}) \OR (C)e & \comment{expression}\\[2mm]
  \Gamma & ::= & \emptyset \OR \ctxplus{\Gamma}{x : C} \OR \ctxplus{\Gamma}{\keyw{this} : C} & \comment{typing contexts} 
\end{array}
$
\caption{Featherweight Java syntax and typing context.}
\label{fig:fj-syntax}
\end{figure*}

\begin{figure*}[t]
  \raggedright
  {
  \begin{gather*}
      \inference[\rulename{T-Var}]
        {{\Gamma(x)} = {C}}
        {\judge{\ctxplus{\Gamma}{CT}} {x} {C}}
\hskip2em
      \inference[\rulename{T-Field}]
        {\judge {\ctxplus{\Gamma}{CT}} {e} {C_e} & \func{field}(f_i, C_e ,CT) = C_i}
        {\judge {\ctxplus{\Gamma}{CT}} {e.f_i}{C_i}}
\\[2ex]
      \inference[\rulename{T-Invk}]
        {\judge {\ctxplus{\Gamma}{CT}} {e} {C_e} & \judge{\ctxplus{\Gamma}{CT}} {\seq{e}} {\seq{C}} &  \func{mtype}(m, C_e, CT) = \seq{D}\rightarrow C \quad \seq{C} <: \seq{D}}
        {\judge{\ctxplus{\Gamma}{CT}}{e.m(\seq{e})} {C}}       
\\[2ex]
      \inference[\rulename{T-New}]
      {\judge {\ctxplus{\Gamma}{CT}}{\seq{e}} {\seq{C}} &  \func{fields}(C, CT) = \constrCT{C}{\seq{D}} &  \seq{C} <: \seq{D}}
        {\judge{\ctxplus{\Gamma}{CT}}{\keyw{new}\ C(\seq{e})} {C}}  
\\[2ex]
      \inference[\rulename{T-UCast}]
      {\judge {\ctxplus{\Gamma}{CT}}{e } {D} \quad  D <: C}  
       {\judge {\ctxplus{\Gamma}{CT}}{(C)e} {C}}
\hskip1em
      \inference[\rulename{T-DCast}]
      {\judge{\ctxplus{\Gamma}{CT}}{e} {D} \quad  C <: D \quad C \neq D }  
       {\judge {\ctxplus{\Gamma}{CT}}{(C)e} {C}}  
\\[2ex]
      \inference[\rulename{T-SCast}]
      {\judge {\ctxplus{\Gamma}{CT}}{e} {D} &  C \nless: D & D \nless: C}   
       {\judge {\ctxplus{\Gamma}{CT}}{(C)e} {C}}  
\\[2ex]
      \inference[\rulename{T-Method}]
        {\judge {\ctxplus{\seq{x}: \seq{C}; \keyw{this}: C}{CT}} {e} {E_0}  & E_0 <: C_0 \\
          \func{extends}(C, CT)= D\\
       \term{if}\ \func{mtype}(m, D, CT) = \seq{D}\rightarrow D_0, \term{then}\ \seq{C} = \seq{D};\ C _0= D_0}
        {\judgeM{\ctxplus{C}{CT}} {C \ m(\seq{C}\  \seq{x}) \{\keyw{return}\ e\}\ \term{OK}}}
\\[2ex]
	 \inference[\rulename{T-Class}]
	 {K=C(\seq{D}'\ \seq{g}, \seq{C}'\ \seq{f})\{\keyw{super}(\seq{g}); \keyw{this}.\seq{f} = \seq{f}\} & \func{fields}(D, CT)=\constrCT{D}{\seq D'} \\\judgeM{\ctxplus{C}{CT}}{\seq{M}\ \term{OK}}}
	 {\judgeM{CT}{\keyw{class}\ C\ \keyw{extends}\ D\ \{\seq{C}\ \seq{f}; K\ \seq{M}\}\ \term{OK}}}
\\[2ex]
	\inference[\rulename{T-Program}]
	{ CT = \bigcup_{L'\in\seq{L}}  (\func{addExt}(L')\cup \func{addCtor}(L')\cup \func{addFs}(L') \cup \func{addMs}(L'))  \\
	 (\judgeM{CT}{L'\ \term{OK}})_{L'\in \seq{L}}}
	{\judgeP {\seq{L}\ \term{OK}}}
  \end{gather*}
  }
  \caption{Typing rules of Featherweight Java.}
  \label{fig:fjava-rules}
\end{figure*}


The type system (Figure~\ref{fig:fjava-rules}) ensures that variables, field access, method
invocation, constructor calls, casting, and method and class declarations are well-typed. The typing
judgment for expressions has the form \judge {\ctxplus{\Gamma}{CT}}{e}{C}, where $\Gamma$
denotes the typing context, $CT$
the class table, $e$
the expression under analysis, and $C$
the type of $e$.
The typing judgment for methods has the form \judgeM{\ctxplus{C}{CT}} {M \ \term{OK}}
and for classes \judgeM{CT}{L \ \term{OK}}.

In contrast to the FJ paper~\cite{Igarashi01}, we added some cosmetic changes to the
presentation. For example, the class table $CT$
is an implicit global definition in FJ. Our presentation explicitly propagates $CT$
top-down along with the typing context. Another difference to Igarashi et al.~is in the rule
\rulename{T-New}: Looking up all fields of a class returns a constructor signature, i.e.,
$\func{fields}(C, CT)=\constrCT{C}{\seq D}$
instead of returning a list of fields with their corresponding types.  We made this subtle change
because it clearer communicates the intention of checking the constructor arguments against the
declared parameter types.  Later on, these changes pay off, because they enable a systematic
translation of typing rules to co-contextual FJ (Sections~\ref{sec:co-cont-FJ} and
\ref{sec:co-cont-rules}) and give a strong and rigorous equivalence result for the two type systems
(Section~\ref{sec:theorems-equivalence}).

Furthermore, we explicitly include a typing rule \rulename{T-Program} for programs, which is implicit in Igarashi et al.'s presentation.
The typing judgment for programs has the form \judgeP{\overline{L}\ \term{OK}}: A program is
well-typed if all class declarations are well-typed. The auxiliary functions \func{addExt},
\func{addCtor}, \func{addFs}, and \func{addMs} extract the supertype, constructor, field
and method declarations from a class declaration into entries for the class table.
Initially, the class table is empty, then it is gradually extended
 with information from every class declaration by using the above-mentioned auxiliary functions.
This is to emphasize that we view the class table as an additional
form of typing context, having its own set of extension operations.
We describe the class table extension operations and their co-contextual duals formally in
Section~\ref{sec:co-cont-FJ}.

\subsection{Contextual and Co-Contextual Featherweight Java by Example}
\label{sec:coco-fj-example}

\begin{wrapfigure}[4]{r}{.35\textwidth}
\vspace{-1em}
\begin{lstlisting}[language=FJ, escapeinside={*}{*}]
class !List! extends !Object! {
*$\;\;$*!Int! *$size()$* {*$\ldots$*}
*$\;\;$*!List! *$add(\cname{Int}\ a)$*{*$\ldots$*}
}
class !LinkedList! extends !List! { }
\end{lstlisting}
\vspace{-1em}
\end{wrapfigure} 
We revisit the example from the introduction to illustrate that, in absence of context information, maintaining requirements on class members is non-trivial:
\begin{lstlisting}[language=FJ,backgroundcolor=\color{white},escapeinside={*}{*}]
*\hspace{-15pt}*new !List!().'add'(1).'size'() + new !LinkedList!().'add'(2).'size'().
\end{lstlisting}
Here we assume the class declarations on the right-hand side: \cname{List} with methods $\mathit{add}()$ and $\mathit{size}()$ and \cname{LinkedList} inheriting from \cname{List}.
As before, we assume there are typing rules for numeric \cname{Int} literals and the $+$ operator over \cname{Int} values.
We use \cname{LList} instead of \cname{LinkedList} in Figure~\ref{fig:ast} for space reasons. 

\indent
Figure~\ref{fig:ast}~(a) depicts standard type checking with typing contexts in FJ.
The type checker in FJ visits the syntax tree ``down-up'', starting at the root. Its inputs
(propagated downwards) are the context $\Gamma$, class table $CT$, and the current subexpression $e$. Its output
(propagated upwards) is the type $C$ of the current subexpression. The output is
computed according to the currently applicable typing rule, which is 
determined by the shape of the current subexpression.
The class table used by the standard type checker contains classes \cname{List} and \cname{LinkedList} shown above.
The type checker retrieves the signatures for the method invocations of $\mathit{add}$ and $\mathit{size}$ from the class table $CT$.

To recap, while type checking constructor calls, method invocations, and field accesses the context and the class table flow top-down; types of fields/methods are looked up in the class table. 
\begin{figure*}[h!]
\centering
\begin{subfigure}[b]{\textwidth}
      \begin{tikzpicture}[exp tree]
\tikzstyle{every node}=[font=\scriptsize]
   \node (add) at (0,3) {$\keyw{new}\ \cname{List}().add(1).size() + \keyw{new}\ \cname{LList}().add(2).size()$};
   \node (lsize) at  (-3, 2) {$.size()$}; 
   \node (rsize) at (3, 2) {$.size()$};
   \node (ladd) at (-3,1) {$.add()$}; 
   \node (radd) at (3,1) {$.add()$}; 
   \node (list) at (-4.7, 0) {$\keyw{new}\ \cname{List}()$};
   \node (one) at (-1.3, 0) {$1$};
   \node (llist) at (1.6, 0) {$\keyw{new}\ \cname{LList}()$};
   \node (two) at (4.6, 0) {$2$};
     
   \node[ctx node, left=of add] (add-ctx) {\ctxNode{\ctxplus{\Gamma}{CT}}};
   \node[type node, right=of add] (add-type){\typeNode{\cname{Int}}}; 
   \node[ctx node, left=of lsize] (lsize-ctx) {\ctxNode{\ctxplus{\Gamma}{CT}}};
   \node[type node, right=of lsize] (lsize-type){\typeNode{\cname{Int}}}; 
   \node[ctx node, left=of rsize] (rsize-ctx) {\ctxNode{\ctxplus{\Gamma}{CT}}}; 
   \node[type node, right=of rsize] (rsize-type){\typeNode{\cname{Int}}}; 
   \node[ctx node, left=of ladd] (ladd-ctx) {\ctxNode{\ctxplus{\Gamma}{CT}}}; 
   \node[type node, right=of ladd] (ladd-type){\typeNode{\cname{List}}};
   \node[ctx node, left=of radd] (radd-ctx) {\ctxNode{\ctxplus{\Gamma}{CT}}}; 
   \node[type node, right=of radd] (radd-type){\typeNode{\cname{List}}};  
   \node[ctx node, left=of list] (list-ctx) {\ctxNode{\ctxplus{\Gamma}{CT}}}; 
   \node[type node, right=of list] (list-type){\typeNode{\cname{List}}};  
   \node[ctx node, left=of llist] (llist-ctx) {\ctxNode{\ctxplus{\Gamma}{CT}}}; 
   \node[type node, right=of llist] (llist-type){\typeNode{\cname{LList}}}; 
   \node[ctx node, left=of one] (one-ctx) {\ctxNode{\ctxplus{\Gamma}{CT}}}; 
   \node[type node, right=of one] (one-type){\typeNode{\cname{Int}}}; 
   \node[ctx node, left=of two] (two-ctx) {\ctxNode{\ctxplus{\Gamma}{CT}}}; 
   \node[type node, right=of two] (two-type){\typeNode{\cname{Int}}};  

   \path (add) edge (lsize); \path (add) edge (rsize); 
   \path (lsize) edge (ladd); \path (rsize) edge (radd);
   \path (ladd) edge (list); \path (ladd) edge (one);
   \path (radd) edge (llist); \path (radd) edge (two);

   \node (ctx-flow-top) at (-5,2.8) {}; \node (ctx-flow-bot) at (-5,0.1) {};
   \draw [->] (ctx-flow-top) -- (ctx-flow-bot) node[left,pos=.5,align=right] {contexts, \\class table flow \\top-down};
   \node (type-flow-bot) at (4.9,2.8) {}; \node (type-flow-top) at (4.9,0.1) {};
   \draw [->] (type-flow-top) -- (type-flow-bot) node[right,pos=.5,align=left] {types flow\\ bottom-up};
\end{tikzpicture}
      \subcaption{Contextual type checking propagates contexts and class tables top-down.}
      \label{fig:check-contextual}
  \end{subfigure}
   \qquad
    \begin{subfigure}[b]{\textwidth}
     \begin{tikzpicture}[exp tree]
\tikzstyle{every node}=[font=\scriptsize]
  \node (add) at (-1,3.7) {$\keyw{new}\ \cname{List}().add(1).size() + \keyw{new}\ \cname{LList}().add(2).size()$};
  \node (lsize) at  (-4, 2.2) {$.size()$}; 
  \node (rsize) at (2.1, 2.2) {$.size()$};
  \node (ladd) at (-4,1.2) {$.add()$}; 
  \node (radd) at (2.1,1.2) {$.add()$}; 
  \node (list) at (-6, 0) {$\keyw{new}\ \cname{List()}$};
  \node (one) at (-2, 0) {$1$};
  \node (llist) at (0.5,0) {$\keyw{new}\ \cname{LList()}$};
  \node (two) at (4.5, 0) {$2$};
  \node[type node, right=of add] (add-type){$:$}; 
  \node[ctx node, right=of add-type, align=left] (add-ctx) {\begin{tabular}{ll}
    & \\ 
    &\\
    &\\
    &\\
  \hspace{-0.2cm}{\scriptsize$\cname{Int}\mathrel{|}\emptyset\mathrel{|}$} & \hspace{-0.4cm}{\scriptsize$\cname{List}.init(), \cname{LList}.init(),$} \\
    & \hspace{-0.4cm}{\scriptsize$ \cname{List}.add:\cname{Int}\rightarrow U_1,$}\\ 
    & \hspace{-0.4cm}{\scriptsize$\cname{LList}.add:\cname{Int}\rightarrow U_3,$} \\
    & \hspace{-0.4cm}{\scriptsize$ U_1.size \ofType () \rightarrow U_2,$} \\
    & \hspace{-0.4cm}{\scriptsize$U_3.size \ofType () \rightarrow U_4$} \\
  \end{tabular}};
  \node[type node, right=of lsize] (ls-type){\typeNode{U_2}}; 
  \node[ctx node, right=of ls-type] (ls-ctx) {\coFNode{\emptyset}{U_1.size \ofType () \rightarrow U_2}};
  \node[type node, right=of rsize] (rs-type){\typeNode{U_4}}; 
  \node[ctx node, right=of rs-type] (rs-ctx) {\coFNode{\emptyset}{U_3.size \ofType () \rightarrow U_4}};
  \node[type node, right=of ladd] (la-type){\typeNode{U_1}}; 
  \node[ctx node, right=of la-type] (la-ctx) {\begin{tabular}{ll}
    & \\ 
  \hspace{-0.2cm}{\scriptsize$\cname{List}\mathrel{|}\emptyset\mathrel{|}$} &  \hspace{-0.4cm}{\scriptsize$\cname{List}.init(), $} \\
    &  \hspace{-0.5cm} {\scriptsize$ \cname{List}.add:\cname{Int}\rightarrow U_1$}\\ 
  \end{tabular}};
  \node[type node, right=of radd] (ra-type){\typeNode{U_3}}; 
  \node[ctx node, right=of ra-type] (ra-ctx) {\begin{tabular}{ll}
    & \\ 
  \hspace{-0.2cm}{\scriptsize$\cname{LList}\mathrel{|}\emptyset\mathrel{|}$} &  \hspace{-0.3cm}{\scriptsize$\cname{LList}.init(), $} \\
    &  \hspace{-0.4cm} {\scriptsize$ \cname{LList}.add:\cname{Int}\rightarrow U_3$}\\ 
  \end{tabular}};
  \node[type node, right=of list] (l-type){\typeNode{\cname{List}}}; 
  \node[ctx node, right=of l-type] (l-ctx) {\coFNode{\emptyset}{\cname{List}.init()}};
  \node[type node, right=of llist] (ll-type){\typeNode{\cname{LList}}}; 
  \node[ctx node, right=of ll-type] (ll-ctx) {\coFNode{\emptyset}{\cname{LList}.init()}};
  \node[type node, right=of one] (o-type){\typeNode{\cname{Int}}}; 
  \node[ctx node, right=of o-type] (o-ctx) {\coFNode{\emptyset}{\emptyset}};
  \node[type node, right=of two] (t-type){\typeNode{\cname{Int}}}; 
  \node[ctx node, right=of t-type] (t-ctx) {\coFNode{\emptyset}{\emptyset}};

  \path (add) edge (lsize); \path (add) edge (rsize); 
  \path (lsize) edge (ladd); \path (rsize) edge (radd);
  \path (ladd) edge (list); \path (ladd) edge (one);
  \path (radd) edge (llist); \path (radd) edge (two);

  \node (type-flow-bot) at (-5,4) {};
  \node (type-flow-top) at (-5,0.6) {};
  \draw[->] (type-flow-top) -- (type-flow-bot) node[left,pos=.5, align=right] {types,\\ context reqs.,\\  class table reqs.,\\ flow bottom-up};

\end{tikzpicture}
 
      \subcaption{Co-contextual type checking propagates context and class table requirements bottom-up.}
      \label{fig:check-cocontextual}
      \end{subfigure}
\caption{Contextual and co-contextual type checking.}
\label{fig:ast}
\end{figure*}
Figure~\ref{fig:ast}~(b) depicts type checking of the same expression in co-contextual FJ. Here, the
type checker starts at the leaves of the tree with no information about the context or the class
table. The expression type $T$, the context requirements $R$,
 and class table requirements $CR$ all are
outputs and only the current expression $e$ is input to the type checker, making the type checker context-independent.
At the leaves, we do not know the signature of the
constructors of \cname{List} and \cname{LinkedList}. Therefore, we generate requirements for the constructor calls
$\cname{List}.init()$ and $\cname{LinkedList}.init()$ and propagate them as class table requirements. For each
method invocation of $\mathit{add}$ and $\mathit{size}$ in the tree, we generate requirements on the receiver type and propagate them together with the requirements of the subexpressions.

In addition to generating requirements and propagating them upwards as shown 
in Figure~\ref{fig:ast}~(b), a co-contextual type checker also \emph{merges requirements} when they have compatible receiver types.
In our example, we have two requirements for method $\mathit{add}$ and two requirements for method $\mathit{size}$.
The requirements for method $\mathit{add}$ have incompatible ground receiver types and therefore cannot be merged.
The requirements for method $\mathit{size}$ both have placeholder receivers and therefore cannot be merged just yet.
However, for the $\mathit{size}$ requirements, we can already extract a conditional constraint that must hold if the requirements become mergeable, namely $(U_2 = U_4\ \mathit{if} \ U_1 = U_3)$.
This constraint ensures the signatures of both $\mathit{size}$ invocations are equal in case their receiver types $U_1$ and $U_3$ are equal.
This way, we enable early error detection and incremental solving of constraints.
Constraints can be solved continuously as soon as they have been generated in order to not wait for the whole program to be
 type checked. We discuss incremental type checking in more detail in Section~\ref{sec:implementation}.

After type checking the $+$ operator, the type checker encounters the class declarations of \cname{List} and \cname{LinkedList}.
When type checking the class header \cname{LinkedList} \keyw{extends} \cname{List}, we have to record
the inheritance relation between the two classes because methods can be invoked by \cname{LinkedList}, but
declared in \cname{List}. For example, if \cname{List} is not known to be a superclass of \cname{LinkedList} and given the declaration \cname{List}.$\mathit{add}$, then we cannot just yet satisfy the requirement \cname{LinkedList}.$\mathit{add}: Num\rightarrow U_3$. Therefore, we duplicate the requirement regarding $\mathit{add}$ having as receiver \cname{List}, i.e., $\cname{List}.\mathit{add}: Num \rightarrow U_3$. By doing so, we can deduce the actual type of $U_3$ for the given declaration of $\mathit{add}$ in \cname{List}. Also, requirements regarding $\mathit{size}$ are duplicated.

In the next step, the method declaration of $\mathit{size}$ in \cname{List} is type checked. Hence, we consider all requirements regarding $\mathit{size}$, i.e, $U_1.\mathit{size} : () \rightarrow U_2$ and $U_3.\mathit{size}:() \rightarrow U_4$. The receivers of $mathit{size}$ in both requirements are unknown. We cannot yet satisfy these requirements because we do not know whether $U_1$ and $U_3$ are equal to \cname{List}, or not. To solve this, we introduce conditions as part of the requirements, to keep track of the relations between the unknown required classes and the declared ones. By doing so, we can deduce the actual types of $U_2$ and $U_4$, and satisfy the requirements later, when we have more information about $U_1$ and $U_3$.
 
Next, we encounter the method declaration $\mathit{add}$ and satisfy the corresponding requirements. After satisfying the requirements regarding $\mathit{add}$, the type checker can infer the actual types of $U_1$ and $U_3$. Therefore, we can also satisfy the requirements regarding $\mathit{size}$. 

To summarize, during the co-contextual type checking of constructor calls, method invocations, and field accesses, the requirements flow
bottom-up. Instead of looking up types of fields/methods in the class
table, we introduce new class table requirements. These requirements are satisfied
when the actual types of fields/methods become available.



\section{Co-Contextual Structures for Featherweight Java}
\label{sec:co-cont-FJ}

In this section, we present the dual structures and operations for the co-contextual formulation of
FJ's type system. Specifically, we introduce bottom-up propagated \emph{context and class table
  requirements}, replacing top-down propagated typing contexts and class tables. 

\subsection{Class Variables and Constraints}
\label{sec:class-vars}

For co-contextual FJ, we reuse the syntax of FJ in Figure~\ref{fig:fj-syntax}, but extend the
type language to \emph{class types}:
\\[1ex]
$\newcommand{\comment}[1]{\hskip2em\text{#1}}
\begin{array}{c@{\hskip.5em}c@{\hskip.5em}ll}
  \multicolumn{3}{l}{U,V,\ldots} & \comment{Class Variable} \\
  T & ::= & C \OR U & \comment{Class Type} 
\end{array}
$\\[.7ex]
We use constraints for refining class types, i.e., co-contextual FJ is a constraint-based type
system.
That is, next to class names, the type system may assign \emph{class variables}, designating
unknowns in constraints. We further assume that there are countably many class variables, equality
of class variables is decidable and that class variables and class names are disjoint.

During bottom-up checking, we propagate sets $S$ of constraints: 
\\[1ex]
$
\newcommand{\comment}[1]{\hskip2em\text{#1}}
\begin{array}{c@{\hskip.5em}c@{\hskip.5em}ll}
  s  & ::= & T = T \OR T\neq T \OR T <: T \OR T\nless T \OR T = T\ \mathrm{if}\; \cond & \comment{constraint} \\ 
  S  & ::= & \emptyset \OR S;s & \comment{constraint set}
\end{array}
$\\[.7ex]
A constraint $s$
either states that two class types must be equal, non-equal, in a subtype relation, non-subtype, or
equal if some condition holds, which we leave underspecified for the moment.

\subsection{Context Requirements}
\label{sec:context-requirements}
A typing context is a set of bindings from variables to types, while a context requirement is a set
of bindings from variables to class variables $U$.
Below we show the operations on typing contexts and their co-contextual
correspondences, reproduced from~\cite{Erdweg15}. 
Operations on typing context are lookup, extension, and duplication; their respective requirement
context duals are: generating, removing, and
merging.
Co-contextual FJ adopts context requirements and operations for method parameters and \keyw{this} unchanged.

  \begin{tabular}[t]{l@{\hskip0.5em}l}
    \toprule
    Contextual & Co-contextual \\
    \midrule
    Context syntax $\Gamma ::= \emptyset \OR \ctxplus{\Gamma}{x \ofType T}$ & Requirements $\reqs \subset x \times T$ map variables to their types \\
    Context lookup $\Gamma(x) = T$ & Requirement introduction $\reqs = \{x \ofType U\}$ with \\
                                                           & fresh unification variable $U$ \\
    Context extension $\ctxplus{\Gamma}{x \ofType T}$ & Requirement satisfaction $\reqs - x$\, if $(\reqs(x) = T)$ holds \\
    Context duplication $\Gamma \to\, (\Gamma, \Gamma)$ & Requirement merging $merge_R(R_1, R_2) = \reqs|_S$\\
                                                          & if all constraints $(T_1 = T_2) \in S$ hold \\
    Context is empty $\Gamma = \emptyset$ & No unsatisfied requirements $\reqs \mustEqual \emptyset$ \\
    \bottomrule
  \end{tabular}

\subsection{Structure of Class Tables and Class Table Requirements}
\label{sec:struct-class-tabl}

\begin{figure}[t]
  \newcommand{\comment}[1]{\hskip2em\text{#1}}
  \setlength{\columnsep}{-4.5em}
  \begin{multicols}{2}
    \textbf{Contextual}\\[1em]
$\begin{array}{l@{\hspace{0.3em}}l@{\hspace{2pt}}l@{\hspace{-15pt}}l}
\mathit{CT}    & ::= & \emptyset                          & \comment{class table} \\
               & \OR & \mathit{CTcls}\cup \mathit{CT}        & \comment{} \\
\mathit{CTcls} & ::= &                                    & \comment{def. clause} \\
                           
               & \OR &          \extendsCT{C}{D}          & \comment{extends clause} \\
               & \OR & \constrCT{C}{\seq{C}}              & \comment{ctor clause} \\
               & \OR & C.f : C'                            & \comment{field clause}\\
               & \OR & C.m : \seq{C} \rightarrow C'        & \comment{method clause} \\
 \end{array}$\\

\columnbreak
\textbf{Co-Contextual}\\[1em]
$\begin{array}{l@{\hspace{0.3em}}l@{\hspace{2pt}}l@{\hspace{-15pt}}l}
\mathit{CR}   & ::= & \emptyset                            & \comment{class table req.} \\
              & \OR & (\mathit{CReq, cond})\cup \mathit{CR}   & \comment{} \\
\mathit{CReq} & ::= &                                      & \comment{class req.}\\
              & \OR & \extendsCR{T}{T'}                    & \comment{inheritance req.} \\
              & \OR & \constrCT{T}{\seq{T}}                & \comment{ctor req.} \\
              & \OR & T.f : T'                              & \comment{field req.}\\
              & \OR & T.m : \seq{T} \rightarrow T'          & \comment{method req.} \\
              & \OR & (T.m: \seq{T} \rightarrow T')_{opt}   & \comment{optional method req.} \\
\mathit{cond} & ::= & \emptyset \OR T = T' ;  \mathit{cond} & \comment{condition} \\
              & \OR & T \neq T' ;  \mathit{cond}            &                
\end{array}$
\end{multicols}
\caption{Class Table and Class Table Requirements Syntax.}
\label{fig:ct-ctreqs-syntax}
\end{figure}


In the following, we describe the dual notion of a class table, called \emph{class table
  requirements} and their operations.  We first recapitulate the structure of FJ class
tables~\cite{Igarashi01}, then stipulate the structure of class table
requirements. Figure~\ref{fig:ct-ctreqs-syntax} shows the syntax of both.
A class table is a collection of class definition clauses $\mathit{CTcls}$ defining the available
classes.\footnote{To make the correspondence to class table requirements more obvious, we show a
decomposed form of class tables. The original FJ formulation~\cite{Igarashi01} groups clauses by
the containing class declaration.}  A clause is a class name $C$ followed by either the
superclass, the signature of the constructor, a field type, or a method signature of $C$'s
definition.

As Figure~\ref{fig:ct-ctreqs-syntax} suggests, class tables and definition clauses in FJ have a
counterpart in co-contextual FJ. Class tables become \emph{class table requirements} $\mathit{CR}$,
which are collections of pairs $(\mathit{CReq}, \mathit{cond})$, where $\mathit{CReq}$ is a
\emph{class requirement} and $\mathit{cond}$ is its \emph{condition}.
Each class definition clause has a corresponding class requirement $\mathit{CReq}$, which is
one of the following:
\begin{itemize}
\item A \emph{inheritance requirement} \extendsCR{T}{T'}, i.e., class type $T$ must inherit from $T'$.
  
\item A \emph{constructor requirement} \constrCT{T}{\seq{T}'}, i.e., class type $T$'s
  constructor signature must match $\seq{T}'$.
\item A \emph{field requirement} $T.f: T'$, i.e., class $T$ (or one of its \emph{supertypes}) must declare
  field $f$ with class type $T'$.
  
\item A \emph{method requirement} $T.m: \seq{T}' \to T''$, i.e., class $T$ (or one of its
  \emph{supertypes}) must declare method $m$ matching signature $\seq{T}'\to T''$.

\item An \emph{optional method requirement} $(T.m: \seq{T}' \rightarrow T'')_{opt}$, i.e., if the
  class type $T$ declares the method $m$, then its signature must match $\seq{T}' \rightarrow T''$. While type checking method declarations, this requirement is used to ensure that method overrides in subclasses are well-defined. 
 An optional method requirement is used as a counterpart of the conditional method lookup in rule \rulename{T-Method} of standard FJ, 
 i.e., $if\ \func{mtype}(m, D, CT) = \bar{D}\rightarrow D_0, then\ \bar{C} = \bar{D};\ C _0= D_0$, 
 where $D$ is the superclass of the class $C$, in which the method declaration $m$ under scrutiny is type checked, 
 and $\bar C,\ C_0$ are the parameter and returned types of $m$ as part of $C$.
\end{itemize}

\noindent A condition $\mathit{cond}$ is a conjunction of equality and nonequality constraints on class
types. Intuitively, $(\mathit{CReq}, \mathit{cond})$ states that if the condition $\mathit{cond}$ is
satisfied, then the requirement $\mathit{CReq}$ must be satisfied, too. Otherwise, we have unsolvable constraints, indicating a typing error. 
With conditional requirements and constraints, we address the feature of nominal typing and inheritance for co-contextual FJ.
In the following, we will describe their usage.

\subsection{Operations on Class Tables and Requirements}

\begin{figure*}[t]
  \centering
  \begin{tabular}[t]{l@{\hskip0.5em}l}
    \toprule
    \textbf{Contextual}                                    &\textbf{ Co-contextual} \\
    \midrule
    Field name lookup $\func{field}(f_i,C, CT) = C_i$             & Class requirement for field \\
                                                           &  \quad $(C.f_i : U,\emptyset )$\\
    Fields lookup $\func{fields}(C, CT) = \constrCT{C}{\seq{C}}$  & Class requirement for constructor \\
                                                           & \quad$(\constrCT{C}{\seq{U}},\emptyset )$\\
    Method lookup $\func{mtype}(m, C, CT) = \seq{C}\rightarrow C$ & Class requirement for method \\
                                                           & \quad $(C.m: \seq{U}\rightarrow U,\emptyset )$\\
    Conditional method override                            & Optional class requirement for method \\                         
    \quad $if\ \func{mtype}(m, C, CT) = \seq{C}\rightarrow C$     & \quad $(C.m: \seq{U}\rightarrow U,\emptyset )_{opt}$\\       
    Super class lookup $\func{extends}(C, CT) = D$                & Class requirement for super class \\
                                                           & \quad $(\extendsCR{C}{U},\emptyset )$ \\
    Class table duplication $CT \to\, (CT, CT)$            & Class requirement merging \\
                                                           & \quad $ merge_{CR}(CR_1, CR_2)= CR|_S$\\
                                                           & \quad if all constraints in $S$ hold\\
  \bottomrule
  \end{tabular}
  \caption{Operations on class table and their co-contextual correspondence.}
  \label{fig:correspondenceCT}
 \vspace{-1em}
\end{figure*}

In this section, we describe the co-contextual dual to FJ's class table operations as outlined in Figure~\ref{fig:correspondenceCT}.
We first consider FJ's lookup operations on class tables, which appear in premises of typing rules
shown in Figure~\ref{fig:fjava-rules} to look up (1) fields, (2) field lists, (3) methods and (4)
superclass lookup. The dual operation is to introduce a
corresponding class requirement for the field, list of fields, method, or superclass. 

Let us consider
closely field lookup, i.e., $\func{field}(f_i, C, CT) = C_i$, meaning that class $C$ in the class
table $CT$ has as member a field $f_i$ of type $C_i$. We translate it to the dual operation of introducing a
new class requirement $(C.f_i: U, \emptyset)$. Since we do not have any
information about the type of the field, we choose a \emph{fresh}
class variable $U$ as type of field $f_i$. At the time of introducing a new requirement, its
condition is empty.

Consider the next operation $\func{fields}(C, CT)$, 
which looks up all field members of a class. This lookup is used in the constructor call rule \rulename{T-New}; 
the intention is to retrieve the \emph{constructor signature} in order to type check the subtyping relation between 
this signature and the types of expressions as parameters of the \emph{constructor call}, i.e., $\bar C <: \bar D$ (rule \rulename{T-New}). 
As we can observe, the field names are not needed in this rule, only their types. 
Hence, in contrast to the original FJ rule~\cite{Igarashi01}, we deduce the constructor signature 
from fields lookup, rather than field names and their corresponding types, 
i.e., $\func{fields}(C, CT)=\constrCT{C}{\bar D}$.
The dual
operation on class requirements is to add a new class requirement for the 
constructor, i.e., $(\constrCT{C}{\bar{U}}, \emptyset)$. Analogously, the class table operations for method signature
lookup and super class lookup map to corresponding class table requirements.

Finally, standard FJ uses class table duplication to forward the class table to all parts of an FJ program, thus ensuring all parts are checked against the same context.
The dual co-contextual operation, $merge_{CR}$,
merges class table requirements originating from different parts of the program.
Importantly, requirements merging needs to assure all parts of the program require compatible
inheritance, constructors, fields, and methods for any given class.
To merge two sets of requirements, we first identify the field and
method names used in both sets and then compare the classes they belong to. The result of merging two
sets of class requirements $CR_1$ and $CR_2$ is a new set $CR$ of class requirements and a set of
constraints, which ensure compatibility between the two original sets of overlapping requirements.
Non-overlapping requirements get propagated unchanged to $CR$ whereas potentially overlapping requirements receive special treatment depending on the requirement kind.

 \begin{figure*}[t]
   	    \begin{flalign*}
    	& CR_m =\{ (T_1.m: \seq{T_1}\rightarrow  T'_1, cond_1 \cup (T_1 \neq T_2))  \\
    	& \hskip4em \cup (T_2.m :\seq{T_2}\rightarrow T'_2, cond_2 \cup (T_1 \neq T_2)) \\
        & \hskip4em \cup (T_1.m: \seq{T_1}\rightarrow T'_1, cond_1 \cup cond_2 \cup (T_1 = T_2))  \\
        & \hskip4em \WHERE (T_1.m :\seq{T_1}\rightarrow T'_1, cond_1) \in CR_1  \wedge 
        (T_2.m : \seq{T _2}\rightarrow T'_2, cond_2) \in CR_2  \}\\
\\
        & S_m =\{(T'_1 = T'_2\ if\ T_1 = T_2) \cup (\seq{T_1 = T_2}\ if\ T_1 = T_2) \\
        & \hskip4em \WHERE (T_1.m: \seq{T_1}\rightarrow T'_1, cond_1) \in CR_1
          \wedge (T_2.m: \seq{T_2} \rightarrow T'_2, cond_2) \in CR_2\} 
     \end{flalign*}
     \caption{Merge operation of method requirements $CR_1$ and $CR_2$.}
     \label{fig:mergeM}
\end{figure*} 

The full merge definition appears in Appendix \ref{sec:appendixA}; 
Figure~\ref{fig:mergeM} shows the merge operation for overlapping method requirements, 
which results in a new set of requirements $CR_m$ and constraints $S_m$.
To compute $CR_m$, we identify method requirements on the equally-named methods $m$ in both sets and distinguish two cases.
First, if the receivers are different $T_1 \neq T_2$, then the requirements are not actually overlapping. We retain the two requirements unchanged, except that we remember the failed condition for future reference.
Second, if the receivers are equal $T_1 = T_2$, then the requirements are actually overlapping. We merge them into a single requirement and produce corresponding constraints in $S_m$.
One of the key benefits of keeping track of conditions in class table requirements is that often these conditions allow us to discharge requirements early on when their conditions are unsatisfiable.
In particular, in Section~\ref{sec:implementation} we describe a compact representation of conditional requirements that facilitates early pruning and is paramount for good performance.
However, the main reason for conditional class table requirements is their removal, which we discuss subsequently.

\subsection{Class Table Construction and Requirements Removal}

Our formulation of the contextual FJ type system differs in the presentation of
the class table compared to the original paper~\cite{Igarashi01}.
Whereas Igarashi et al. assume that the class table is a pre-defined static structure,
we explicitly consider its formation through a sequence of operations.
The class table is initially empty and gradually extended with class table clauses $CTcls$ for each 
class declaration $L$ of a program. Dual to that, class table requirements are initially unsatisfied and gradually removed. We define an operation for \emph{adding} clauses to the class table
and a corresponding co-contextual dual operation on class table requirements for \emph{removing} 
requirements.
Figure~\ref{fig:ctcorrespondence} shows a collection 
of adding and removing operations for every possible kind of class table clause $CTcls$.

In general, clauses are added to the class table starting from superclass to subclass declarations. 
For a given class, the class header with \keyw{extends} is added before the other clauses. 
Dually, we start removing requirements that correspond to clauses of a subclass, followed by
those corresponding to clauses of superclass declarations. 
For a given class, we first remove requirements corresponding 
to method, fields, or constructor clauses, then those corresponding to the 
class header \keyw{extends} clause. 
Note that our sequencing still allows for mutual class dependencies. For example, the following is a valid sequence of clauses where \cname{A} depends on \cname{B} and vice versa:
\begin{lstlisting}[language=FJ,backgroundcolor=\color{white}, mathescape]
class !A! extends !Object!; class !B! extends !Object!; !A!.'m': $() \rightarrow$ !B!; !B!.'m': $() \rightarrow$ !A!.
\end{lstlisting}

\noindent
The full definition of the addition and removal operations for all cases of clause definition appears in Appendix \ref{sec:appendixA}; 
Figure~\ref{fig:removeM} presents the definitions of adding and removing method and \keyw{extends} clauses.  

\begin{figure*}[t]
  \centering
  \begin{tabular}[t]{l@{\hskip0.5em}l}
    \toprule
    Contextual & Co-contextual \\
    \midrule
    Class table empty $CT = \emptyset$ & Unsatisfied class requirements $CR$ \\
     Adding extend $\func{addExt}(L, CT)$ & Remove extend $\func{removeExt}(L, CR)$  \\
       Adding constructor $\func{addCtor}(L, CT)$ & Remove constructor $\func{removeCtor}(L, CR)$  \\
     Adding fields $\func{addFs}(L, CT)$ & Remove fields $\func{removeFs(L, CR)}$  \\
     Adding methods $\func{addMs}(L, CT)$ & Remove methods $\func{removeMs(L, CR)}$  \\
    \bottomrule
  \end{tabular}
  \caption{Constructing class table and their co-contextual correspondence.}
  \label{fig:ctcorrespondence}
\end{figure*}

\begin{figure*}[t]
\begin{flalign*}
& \func{addMs}(C, \overline{M}, CT) = \overline{C.m: \seq{C} \rightarrow C'} \cup CT \\
&\func{removeM}( C , C'\ m(\seq{C}\ \seq{x}) \ \{\keyw{return} \ e\}, CR) = CR'|_{S} \\
&where \; CR' = \{(T.m :\seq{T}\rightarrow T',cond \cup (T \neq C)) \WHERE (T.m :\seq{T}\rightarrow T', \cond) \in CR \}\\
& \hskip6em \cup (CR\setminus \seq{(T.m :\seq{T}\rightarrow T', \cond)} )\\
& \hskip2em S = \{(T' =  C' \ if\ T= C) \cup (\seq{T = C}\ if\ T= C) \WHERE (T.m :\seq{T}\rightarrow T', \cond) \in CR \} \\
&\func{removeMs}(C, \seq{M}, CR ) = CR'|_{S}  \\
& where\ CR' = \{ CR_m \WHERE (C'\ m(\seq{C}\ \seq{x}) \ \{\keyw{return} \ e\}) \in\overline{M} \\
& \hskip6em \wedge \func{removeM}(C, C'\ m(\seq{C}\ \seq{e}) \ \{\keyw{return} \ e\}, CR) = CR_m|_{S_m} \}   \\
& \hskip4em S = \{S_m \WHERE (C'\ m(\seq{C}\ \seq{x}) \ \{\keyw{return} \ e\}) \in\overline{M}  \\
& \hskip6em \wedge \func{removeM}(C, C'\ m(\seq{C}\ \seq{x}) \ \{\keyw{return} \ e\}, CR) = CR_m|_{S_m} \} 
\end{flalign*}
\vskip-1em
\begin{flalign*}
&\func{addExt}(\keyw{class}\ C\ \keyw{extends}\ D, CT)= ( C\ \keyw{extends}\ D)\cup CT \\
&\func{removeExt}(\keyw{class}\ C\ \keyw{extends}\ D, CR) = CR'|_S \\
&where\ CR'= 
\!\begin{aligned}[t]
&\{(T.\keyw{extends}: T',cond \cup (T \neq C))  \WHERE (T.\keyw{extends}: T', cond) \in CR \} \\
&\cup \{(T.m: \seq{T}\rightarrow T', cond\cup(T\neq C)) \\
& \hskip0.7em \cup(D.m:\seq{T} \rightarrow T', cond \cup(T = C))\WHERE  (T.m: \seq{T}\rightarrow T', cond) \in CR \}\\
&\cup \{(T.m : \seq{T}\rightarrow T', cond \cup (T\neq C))_{opt} \\
& \hskip0.7em \cup (D.m: \seq{T} \rightarrow T', cond \cup (T = C))_{opt} \\
& \hskip0.7em \WHERE  (T.m: \seq{T}\rightarrow T', cond)_{opt} \in CR \} \\
&\cup \{(T.f:  T', cond \cup (T\neq C)) \cup (D.f: T', cond \cup (T = C))\\
& \hskip0.7em \WHERE  (T.f: T', cond) \in CR \} \\
& \hskip-2em S = \{(T' = D\ if\ T= C)  \WHERE (T.\keyw{extends} :T', \cond) \in CR \} 
\end{aligned}
\end{flalign*}
\caption{Add and remove operations of method and extends clauses.}
\label{fig:removeM}
\end{figure*}

\paragraph{Remove operations for method clauses.}
The function \func{removeMs} removes a list of methods by applying the function \func{removeM} to each of them. 
\func{removeM} removes a single method declaration defined in class $C$.
To this end, \func{removeM} identifies requirements on the same method name $m$ and refines their receiver to be different from the removed declaration's defining class.
That is, the refined requirement $(T.m : \ldots, cond \cup (T \neq C))$ only requires method $m$ if the receiver $T$ is different from the defining class $C$.
If the receiver $T$ is, in fact, equal to $C$, then the condition of the refined requirement is unsatisfiable and can be discharged.
To ensure the required type also matches the declared type, \func{removeM} also generates conditional constraints in case $T = C$.
Note that whether $T = C$ can often not be determined immediately because $T$ may be a placeholder type $U$.

We illustrate the removal of methods using the class declaration of \cname{List} shown in Section~\ref{sec:coco-fj-example}.
Consider the class requirement set $CR =$ $(U_1.\mathit{size}()\rightarrow U_2,\emptyset )$.
Encountering the declaration of method $\mathit{add}$ has no effect on this set because there is no requirement on $\mathit{add}$.
However, when encountering the declaration of method $\mathit{size}$, we refine the set as follows:
$$removeM(\cname{List}, \cname{Int}\ \mathit{size}()\ \{\ldots\}, CR) = \{(U_1.\mathit{size}:()\rightarrow U_2, \cname{U_1}\neq \cname{List} )\}|_S$$
with a new constraint $S = \{U_2 =\cname{Int}\ \mathit{if}\ \cname{U_1} = \cname{List}\}$.
Thus, we have satisfied the requirement in $CR$ for $U_1 = \cname{List}$, only leaving the requirement in case $U_1$ represents another type.
In particular, if we learn at some point that $U_1$ indeed represents \cname{List}, we can discharge the requirement because its condition is unsatisfiable.
This is important because a program is only closed and well-typed if its requirement set is empty.

\paragraph{Remove operations for extends clauses.}
The function \func{removeExt} removes the \keyw{extends} clauses ($C.\ \keyw{extends}\ D$). This function, in addition to identifying the requirements regarding \keyw{extends} and following the same steps as above for \func{removeM}, duplicates all requirements for fields and methods. The duplicate introduces a requirement the same as the existing one, but with a different receiver, which is the superclass $D$ that potentially declares the required fields and methods. The conditions also change. We add to the existing requirements an inequality condition ($T\neq C$), to encounter the case when the receiver $T$ is actually replaced by $C$, but it is required to have a certain field or method, which is declared in $D$, the superclass of $T$. This requirement should be discharged because we know the actual type of the required field or method, which is inherited from the given declaration in $D$. Also, we add an equality condition to the duplicate requirement  $T = C$, because this requirement will be discharged when we encounter the actual declarations of fields or methods in the superclass.

We illustrate the removal of \keyw{extends} using the class declaration 
\cname{LinkedList} \keyw{extends} \cname{List}. 
Consider the requirement set $CR = (U_3.\mathit{size}: () \rightarrow U_4, \emptyset )$. We encounter the declaration for
\cname{LinkedList} and the requirement set changes as follows: 
\begin{align*}
  &\func{removeExt}(\keyw{class}\ \cname{LinkedList}\ \keyw{extends}\ \cname{List}, CR) =\\
   &\quad\{(U_3. \mathit{size}:() \rightarrow U_4, U_3\neq \cname{LinkedList}), (\cname{List}.\mathit{size}:() \rightarrow U_4, U_3 = \cname{LinkedList})\}|_S,
\end{align*}
where 
$S = \emptyset$. $S$ is empty, because there are no requirements on \keyw{extends}. If we learn at some point that $U_3 = \cname{LinkedList}$, then the requirement $(U_3. \mathit{size}:() \rightarrow U_4, U_3\neq \cname{LinkedList})$ is discharged because its condition is unsatisfiable. Also, if we learn that $\mathit{size}$ is declared in \cname{List}, then $(\cname{List}.\mathit{size}:() \rightarrow U_4, U_3 = \cname{LinkedList})$ is discharged applying \func{removeM}, as shown above, and $U_4$ can be replaced by its actual type. 

\paragraph{Usage and necessity of conditions.} As shown throughout this section, conditions play an
important role to enable merging and removal of requirements over nominal receiver types and to support inheritance. Because of nominal typing, field and method lookup depends on the name of the defining context and we do not know the actual type of the
receiver class when encountering a field or method reference. Thus, it is impossible to deduce their types until more information is
known. Moreover, if a class is required to have fields/methods, which are actually declared in a superclass
of the required class, then we need to deduce their actual type/signature and meanwhile fulfill the
respective requirements.
For example, considering the requirement $\cname{U_3}.\mathit{size}:()\rightarrow U_4$,
if $U_3 = \cname{LinkedList}$, $\cname{LinkedList}\;\keyw{extends}\; \cname{List}$, and $\mathit{size}$ is declared in $\cname{List}$, then we have to deduce the actual type of $U_4$
and satisfy this requirement. To overcome these obstacles we need additional structure to
maintain the relations between the required classes and the declared ones,
and also to reason about the partial fulfillment of requirements. 
Conditions come to place as the needed structure to maintain these relations and indicate the fulfillment
of requirements.



\section{Co-Contextual Featherweight Java Typing Rules}
\label{sec:co-cont-rules}
In this section we
derive co-contextual FJ's typing rules systematically from FJ's typing rules. The main idea is to
transform the rules into a form that eliminates any context dependencies that require top-down
propagation of information.

Concretely, context and class table requirements
(Section~\ref{sec:co-cont-FJ}) in output positions to the right replace typing contexts and class
tables in input positions to the left. Additionally, co-contextual FJ propagates constraint sets $S$
in output positions. Note that the program typing judgment does not change, because programs are
closed, requiring neither typing context nor class table inputs. Correspondingly, neither context
nor class table requirements need to be propagated as outputs.

\begin{figure}[h!]
  {
  \begin{gather*}
      \inference[\rulename{TC-Var}]
        {\fresh{U}}
        {\cojudge {x} {U} {\emptyset} {x:U} {\emptyset}}
\\[2ex]
   \inference[\rulename{TC-Field}]
     {\cojudge {e} {T_e} {S_e} {R_e} {CR_e} & CR|_{S_f} = \func{merge}_{CR}(CR_e, (T_e.f_i : U, \emptyset ))  \\
       \fresh{U}}
      {\cojudge {e.f_i} {U} {S_e \cup  S_{f}} {R_e} {CR }} 
\\[2ex]
   \inference[\rulename{TC-Invk}]
   {\cojudge {e} {T_e}  {S_e} {R_e} {CR_e} &  \seq{\cojudge {{e}} {{T}} {{S}} {{R}} {{CR}}} \\ 
   CR_m = (T_e.m: \seq{U} \rightarrow U', \emptyset  ) & \seq{S_s = \{{T} <: {U}\}} & {\afresh{U',\ \seq{U}}} \\
  R'|_{\typectxcolor{S_r}} =  \func{merge}_R(R_e, \seq{R})  &
     CR'|_{S_{cr}} = \func{merge}_{CR}(CR_e, CR_m , \seq{CR}) }
  {\cojudge {e.m(\seq{e})} {U'} {\seq{S} \cup S_e \cup  \seq{S_s}  \cup S_r \cup S_{cr}} {R'} {CR'}}
  \\[2ex]
   \inference[\rulename{TC-New}]
   { \seq{\cojudge {{e}} {{T}} {{S}} {{R}} {{CR}}} & CR_f=(\constrCT{C}{\seq{U}},\emptyset ) & 
        \seq{S_s= \{{T} <: {U}}\}\\
  \fresh{\seq{U}} &  R'|_{S_r} = \func{merge}_R(\seq{R})  &
  CR'|_{S_{cr}} = \func{merge}_{CR}(CR_f, \seq{CR}) }
 {\cojudge {\keyw{new}\ C(\seq{e})} {C} { \seq{S} \cup \seq{S_s}\cup \typectxcolor{S_r} \cup \typectxcolor{S_{cr}}} { R'} {CR'}}
\\[2ex] 
  \inference[\rulename{TC-UCast}]
  {\cojudge {e} {T_e} {S_e} {R_e} {CR_e}  &  \typecolor{S_s} = \{T_e <: C\}}
  {\cojudge {(C) e} {C} {S_e \cup S_s} {R_e} {CR_e}}
\\[2ex]
\inference[\rulename{TC-DCast}]
      {\cojudge {e} {T_e} {S_e} {R_e} {CR_e} &  S_s = \{ C <: T_e\} & S_n = \{ C \neq T_e\}}  
       {\cojudge {(C) e} {C} {S_e \cup S_s \cup S_n} {R_e} {CR_e}}  
\\[2ex]
\inference[\rulename{TC-SCast}]
   {\cojudge {e} {T_e} {S_e} {R_e} {CR_e} &  \typecolor{S_s} = \{ C \nless: T_e\} & \typecolor{S'_s} = \{ T_e \nless: C\}}   
   {\cojudge {(C) e} {C} {S_e \cup S_s \cup S'_s } {R_e} {CR_e}}  
\\[2ex]
\inference[\rulename{TC-Method}]
	{\cojudge {e} {T_e} {S_e} {R_e} {CR_e} &  \seq{\typecolor{S_x} = \{ C = R_e(x)\ \OR\ x \in \dom {R_e} \}}  \\   
	 \typecolor{S_c}=\{ U_c = R_e(\keyw{this})\ \OR\ \keyw{this} \in \dom{R_e}\} &  \typecolor{S_s} = \{T_e <: C_0\}\\
	   \ctxcolor{R_e} - \ctxcolor{\keyw{this}} - \ctxcolor{\seq{x}} = \emptyset  & \afresh{U_c,\ U_d}\\
	    	CR|_{\typectxcolor{S_{cr}}} = \func{merge}_{CR}(CR_e, (\extendsCR{U_c}{U_d}, \emptyset ) , (U_d.m:\seq{C} \rightarrow C_0, \emptyset )_{opt}) }
	 {\cojudgeOM {C_0\ m(\seq{C}\ \seq{x})\ \{ \keyw{return}\ e\} \ \term{OK}}  {S_e \cup S_s \cup S_c \cup \typecolor{S_{cr}} \cup \typecolor{\seq{S}_x}} {U_c}{ CR}}
  \\[2ex]
\inference[\rulename{TC-Class}] 
 { K=C(\seq{D}'\ \seq{g}, \seq{C}'\ \seq{f})\{\keyw{super}(\seq{g}); \keyw{this}.\seq{f} = \seq{f}\} & \seq{\cojudgeOM {{M} \ \term{OK}}  {{S}} {{U}} {{CR}}} \\
    CR'|_{ \typectxcolor{S_{cr}}} =  \func{merge}_{CR}((\constrCT{D}{\seq{D}'}, \emptyset ), \seq{CR}) \quad \seq{\typecolor{S_{eq}} = \{U = C\}}} 
 {\cojudgeOC {\keyw{class}\ C\ \keyw{extends}\ D \{\seq{C}\ \seq{f};\ K\ \seq{M}\}\ \term{OK}} {\seq{S} \cup \seq{S}_{eq}  \cup  \typectxcolor{S_{cr}}} {CR' }}
 \\[2ex]
 \inference[\rulename{TC-Program}]
	{\seq{\cojudgeOC{{L}\ \term{OK}}{{S}}{{CR}}} & \func{merge}_{CR}(\seq{CR})= CR'|_{S'} \\
	\biguplus_{ L' \in \seq{L}}( \func{removeMs}(CR', L')\uplus \func{removeFs}(CR', L')\uplus \func{removeCtor}(CR', L') \\
	   \uplus \func{removeExt}(CR', L') ) = \emptyset|_{S} }
	{\judgeOP {\seq{L}\ \term{OK}}{\seq{S} \cup S'\cup S}}
 \end{gather*}
 }
 \caption{A co-contextual formulation of the type system of Featherweight Java.}
  \label{fig:co-fjava}
\end{figure}


Figure~\ref{fig:co-fjava} shows the co-contextual FJ typing rules (the reader may
want to compare against contextual FJ in Figure \ref{fig:fjava-rules}). In what follows, we will
discuss the rules for each kind of judgment. 

\subsection{Expression Typing}
\label{sec:expression-typing}

Typing rule \rulename{TC-Var} is dual to the standard variable lookup rule \rulename{T-Var}. It
marks a distinct \emph{occurrence} of $x$ (or the self reference $\keyw{this}$) by assigning a fresh
class variable $U$. Furthermore, it introduces a new context requirement $\{x : U\}$, as the dual
operation of context lookup for variables $x$ ($\Gamma(x) = C$) in \rulename{T-Var}. Since the
latter does not access the class table, dually, \rulename{TC-Var} outputs empty class table
requirements.

Typing rule \rulename{TC-Field} is dual to \rulename{T-Field} for field accesses. The latter
requires a field name lookup (\func{field}), which, dually, translates to a new class requirement for the
field $f_i$, i.e., $(T_e.f_i : U, \emptyset)$ (cf. Section~\ref{sec:co-cont-FJ}).  Here, $T_e$ is
the class type of the receiver $e$. $U$ is a fresh class variable, marking a distinct occurrence of
field name $f_i$, which is the class type of the entire expression. Furthermore, we merge the new
field requirement with the class table requirements $CR_e$ propagated from $e$. The result of merging
is a new set of requirements $CR$ and a new set of constraints $S_{cr}$. Just as the context
$\Gamma$ is passed into the subexpression $e$ in \rulename{T-Field}, we propagate the context
requirements for $e$ for the entire expression. Finally, we propagate both the constraints $S_e$
for $e$ and the merge constraints $S_f$ as the resulting output constraints.

Typing rule \rulename{TC-Invk} is dual to \rulename{T-Invk} for method invocations.
Similarly to field access, the dual of method lookup is introducing a requirement for the
method $m$ and merge it with the requirements from the
premises. 
Again, we choose fresh class variables for the method signature $\seq{U}\rightarrow U'$, marking a
distinct occurrence of $m$. We type check the list $\seq{e}$ of parameters, adding a subtype
constraint $\seq{T} <: \seq{U}$, corresponding to the subtype check in \rulename{T-Invk}.
Finally, we merge all context and class table requirements propagated from the receiver $e$ and the
parameters $\seq{e}$, and all the constraints.

Typing rule \rulename{TC-New} is dual to \rulename{T-New} for object creation. We add a new class
requirement $\constrCT{C}{\seq{U}}$ for the constructor of class $C$, corresponding to the $fields$
operation in FJ. We cannot look up the fields of $C$ in the class table, therefore we assign fresh
class variables $\seq{U}$ for the constructor signature. We add the subtyping constraint
$\seq{T} <: \seq{U}$ for the parameters, analogous to the subtype check in \rulename{T-New}. As in
the other rules, we propagate a collective merge of the propagated requirement
structures/constraints from the subexpressions with the newly created requirements/constraints.

Typing rules for casts, i.e., \rulename{TC-UCast}, \rulename{TC-DCast} and \rulename{TC-SCast} are
straightforward adaptions of their contextual counterparts following the same principles. 
These three type rules do overlap. We do not distinguish them in the formalization, 
but to have an algorithmic formulation, we implement different node names for each of them. 
That is, typing rules for casts are syntactically distinguished.

\subsection{Method Typing}
\label{sec:method-typing}

The typing rule \rulename{TC-Method} is dual to \rulename{T-Method} for checking method
declarations. For checking the method body, the contextual version extends the empty typing context
with entries for the method parameters $\seq{x}$ and the self-reference $\keyw{this}$, which is
implicitly in scope. Dually, we remove the requirements on the parameters and self-reference in
$R_e$ propagated from the method body. Corresponding to extending an empty context, the removal
should leave no remaining requirements on the method body. Furthermore, the equality constraints
$\seq{S_x}$ ensure that the annotated class types for the parameters agree with the class types in
$R_e$.\footnote{Note that a parameter $x$ occurs in the method body if and only if there is a
  requirement for $x$ in $R_e$ (i.e., $x\in\dom{R_e}$), which is due to the bottom-up
  propagation. The same holds for the self-reference $\keyw{this}$.} This corresponds to binding the
parameters to the annotated classes in a typing context. Analogously, the constraints $S_c$ deal
with the self-reference. For the latter, we need to know the associated class type, which in the
absence of the class table is at this point unknown. Hence, we assign a fresh class variable $U_c$
for the yet to be determined class containing the method declaration. The contextual rule
\rulename{T-Method} further checks if the method declaration correctly overrides another method
declaration in the superclass, that is, if it exists in the superclass must have the same type. 
We choose another fresh class variable $U_d$ for the
yet to be determined superclass of $U_c$ and add appropriate supertype and optional method
override requirements. We assign to the optional method requirement $U_d.m$ the type of $m$ 
declared in $U_c$. If later is known that there exists a declaration for $m$ in the actual 
type of $U_d$, the optional requirement is considered and equality constraints are generated. 
These constraints ensure that the required type of $m$ in the optional requirement is the same as 
the provided type of $m$ in the actual superclass of $U_c$. 
Otherwise this optional method requirement is invalidated and not considered. 
By doing so, we enable the feature of subtype polymorphism for co-contextual FJ. 
Finally, we add the subtype constraint ensuring that the method body's type 
is conforming to the annotated return type.  

\subsection{Class Typing}
\label{sec:class-typing}
Typing rule \rulename{TC-Class} is used for checking class
declarations. A declaration of a given class $C$ provides definite information on the
identity of its superclass $D$, constructor, fields, and methods signatures. Dual to the fields
lookup for superclass $D$ in \rulename{T-Class}, we add the constructor requirement
$\constrCT{D}{\seq{D}'}$. We merge this requirement with all requirements generated from type
checking $C$'s method declarations $\seq{M}$. Recall that typing of method $m$ yields a
distinct class variable $U$ for the enclosing class type, because we type check each method 
declaration independently. Therefore, we add the constraints $\seq{\{U= C\}}$, effectively completing
the method declarations with their enclosing class $C$. 

\subsection{Program Typing}
\label{sec:program-typing}
Type rule \rulename{TC-Program} checks a list of class declarations $\seq{L}$. Class declarations 
of all classes provide a definite information on the identity of their super classes, constructor, 
fields, methods signatures. Dual to adding clauses in the class table by constructing it, 
we remove requirements with respect to the 
provided information from the declarations. Hence, dually to class table being fully extended 
with clauses from all class declarations, requirements are empty. The result of removing different 
clauses is a new set of requirement and a set of constraints. Hence, we use notation $\uplus$ to 
express the union of the returned tuples (requirements and constraints), 
i.e., $CR|_S \uplus CR'|_{S'}= CR\cup CR'|_{S\cup S'}$
 After applying remove to the set 
of requirements, the set should be empty at this point. A class requirement 
is discharged from the set, either when performing remove operation
(Section~\ref{sec:co-cont-FJ}), or when it is satisfied (all conditions hold). 

As shown, we can systematically derive co-contextual typing rules for Featherweight Java through duality.


\section{Typing Equivalence}
\label{sec:theorems-equivalence}
In this section, we prove the typing equivalence of expressions, methods, classes, and programs between FJ and co-contextual FJ. That is, 
(1) we want to convey that an expression, method, class and program is type checked in FJ if and only if it is type checked in co-contextual FJ, and (2) that there is a correspondence relation 
between typing constructs for each typing judgment. 

We use $\sigma$ to represent substitution, which is a set of bindings from class variables to class types $(\{U\smap C\})$. \projExt{CT} is a function that given a class table $CT$ returns the immediate subclass relation $\Sigma$ of classes in $CT$. That is, $\Sigma := \{(\cname{C_1}, \cname{C_2})\WHERE (\cname{C_1}\ \keyw{extends}\ \cname{C_2}) \in CT \}$. Given a set of constraints $S$ and a relation between class types $\Sigma$, where $\projExt{CT} = \Sigma$, then the solution to that set of constraints is a substitution, i.e., $\func{solve}(S,\Sigma)=\sigma$. Also we assume that every element of the $class\ table$, i.e., super types, constructors, fields and methods types are class type, namely ground types. 
Ground types are types that cannot be substituted. 

Initially, we prove equivalence for expressions.
Let us first delineate the \emph{correspondence relation}. 
Correspondence states that $a)$ the types of expressions are the same in both formulations, $b)$ provided variables in context are more than required ones in context requirements and $c)$ provided class members are more than required ones. Intuitively, an expression to be well-typed in co-contextual FJ should have all requirements satisfied. Context requirements are satisfied when for all required variables, we find the corresponding bindings in context. Class table requirements are satisfied, 
when for all valid requirements, i.e., all conditions of a requirement hold, we can find a corresponding declaration in a class of the same type as the required one, or in its superclasses.
The relation between class table and class requirements is formally defined in the Appendix~\ref{sec:appendixB}.
 \begin{definition}[Correspondence relation for expressions]
Given judgments \judge{\ctxplus{\Gamma}{CT}}{e}{C}, \cojudge{e}{T}{S}{R}{CR}, and $\func{solve}(\Sigma,S) = \sigma$, where 
$\projExt{CT}=\Sigma$.  
The correspondence relation between $\Gamma$ and $R$, $CT$ and $CR$, written $(C, \Gamma, CT)\vartriangleright \sigma(T, R, CR)$, is defined as: 
\begin{enumerate}[a)]
	\item $C = \sigma(T)$
	\item $\Gamma \supseteq \sigma(R)$
	\item $CT\ satisfies\ \sigma(CR)$ 
\end{enumerate}	
\end{definition}

We stipulate two different theorems to state both directions of equivalence for expressions.
 \begin{theorem}[Equivalence of expressions: $\Rightarrow$]
 Given $e,\ C, \ \Gamma ,\ CT, $ if $\judge{\ctxplus{\Gamma}{CT}}{e}{C}$, 
then there exists $T, \ S,\ R,\ CR,\ \Sigma,\ \sigma$, where $\projExt{CT}=\Sigma$ and $\func{solve}(\Sigma,S)=\sigma$, such that $\cojudge{e}{T}{S}{R}{CR}$ holds, $\sigma$ is a ground solution and $(C, \Gamma, CT)\vartriangleright \sigma(T, R, CR)$ holds. 
\label{theo: ExprER}
 \end{theorem}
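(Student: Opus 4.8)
The plan is to proceed by structural induction on the derivation of $\judge{\ctxplus{\Gamma}{CT}}{e}{C}$, with one case per contextual typing rule of Figure~\ref{fig:fjava-rules}. In each case I apply the corresponding co-contextual rule of Figure~\ref{fig:co-fjava} to obtain a derivation of $\cojudge{e}{T}{S}{R}{CR}$, invoking the induction hypotheses on the immediate subexpressions. Since the co-contextual rules introduce \emph{fresh} class variables, I may assume (by $\alpha$-renaming) that the class variables generated in sibling subderivations are pairwise disjoint; hence the substitutions delivered by the IHs have disjoint domains and combine into a single substitution. I then build the witness $\sigma$ as the union of the IH substitutions extended so that every newly introduced class variable is mapped to the \emph{ground} type that the contextual premises dictate: in \rulename{TC-Field} the fresh $U$ goes to $C_i = \func{field}(f_i,C_e,CT)$; in \rulename{TC-Invk} the fresh $U',\seq{U}$ go to the components of $\func{mtype}(m,C_e,CT)=\seq{D}\to C$; in \rulename{TC-New} to the components of $\func{fields}(C,CT)=\constrCT{C}{\seq{D}}$; in \rulename{TC-Var} we simply set $\sigma=\{U\mapsto C\}$ with $\Gamma(x)=C$.

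With $\sigma$ fixed, each case splits into two verifications. First, that $\sigma$ is a ground solution of the output constraints, i.e.\ $\func{solve}(\Sigma,S)=\sigma$ with $\Sigma=\projExt{CT}$. The constraints fall into three groups: (i) constraints inherited from subderivations, handled by the IH and unaffected by the extension of $\sigma$ because the new variables are fresh; (ii) subtype, non-subtype, and (in)equality constraints such as $\seq{T}<:\seq{U}$ in \rulename{TC-Invk}/\rulename{TC-New} or $C<:T_e$, $C\neq T_e$ in \rulename{TC-DCast}, which under $\sigma$ become exactly the relations asserted by the corresponding contextual premises (e.g.\ $\seq{C}<:\seq{D}$), using that the subtyping induced by $\Sigma$ coincides with that of $CT$; and (iii) the constraints emitted by $\func{merge}_R$ and $\func{merge}_{CR}$, discharged by the merge lemmas below. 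Second, that the correspondence $(C,\Gamma,CT)\vartriangleright\sigma(T,R,CR)$ holds: clause a) is immediate by construction of $\sigma$ on the result variable together with the IH; clause b) follows because $\Gamma\supseteq\sigma(R_e)$ (and $\sigma(\seq{R})$) by the IHs while $\func{merge}_R$ only equates class variables that $\sigma$ already interprets consistently via $\Gamma$; clause c), $CT$ satisfies $\sigma(CR)$, reduces via the $CR$-merge lemma to the IH ($CT$ satisfies $\sigma(CR_e)$) together with the fact that $CT$ satisfies the singleton requirement created in the rule, which is just a restatement of the successful contextual lookup ($\func{field}$, $\func{mtype}$, $\func{fields}$, $\func{extends}$).

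The technical heart of the argument, and the step I expect to be the main obstacle, is a pair of \emph{merge soundness} lemmas: (1) if $\Gamma\supseteq\sigma(R_i)$ for all $i$ then $\func{merge}_R(\seq{R_i})=R|_S$ with $\sigma$ satisfying $S$ and $\Gamma\supseteq\sigma(R)$; and (2) if $CT$ satisfies $\sigma(CR_i)$ for all $i$ and $\sigma$ is ground, then $\func{merge}_{CR}(\seq{CR_i})=CR|_S$ with $\sigma$ satisfying $S$ and $CT$ satisfying $\sigma(CR)$. Lemma~(1) is routine. Lemma~(2) is delicate because of the conditional and optional requirements (Figure~\ref{fig:mergeM} and Appendix~\ref{sec:appendixA}); the observation that makes it work is that a \emph{ground} $\sigma$ totally decides every condition. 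For a pair of overlapping method requirements $T_1.m:\seq{T_1}\to T'_1$ and $T_2.m:\seq{T_2}\to T'_2$: either $\sigma(T_1)\neq\sigma(T_2)$, so exactly the two condition-guarded copies retained in $CR_m$ are active and are satisfied by hypothesis; or $\sigma(T_1)=\sigma(T_2)=C$, so both original requirements resolve against the \emph{same} $\func{mtype}(m,C,CT)$, which is functional, forcing $\sigma(T'_1)=\sigma(T'_2)$ and $\sigma(\seq{T_1})=\sigma(\seq{T_2})$ and hence validating the conditional constraints in $S_m$. The optional-method case and the constructor/field/\keyw{extends} cases are analogous.

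Finally, I note that the base case \rulename{TC-Var} and the three cast rules \rulename{TC-UCast}, \rulename{TC-DCast}, \rulename{TC-SCast} are discharged uniformly: for \rulename{TC-Var} all three correspondence clauses are immediate under $\sigma=\{U\mapsto C\}$ and $S=\emptyset$, and for the casts the only new constraints belong to group (ii), while $R$ and $CR$ are propagated unchanged so b) and c) follow directly from the IH. Pulling the cases together, $\sigma$ is ground, $\func{solve}(\Sigma,S)=\sigma$, and $(C,\Gamma,CT)\vartriangleright\sigma(T,R,CR)$, which is the claim.
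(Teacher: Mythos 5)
Your proposal is correct and follows essentially the same route as the paper's proof: induction on the contextual typing derivation, a substitution built by extending the IH substitutions with fresh variables mapped to the ground types delivered by the contextual lookups (relying on disjointness of fresh variables across sibling subderivations, which the paper records as Proposition~\ref{prop:indep} and Corollary~\ref{cor:assoc}), and the two merge-soundness lemmas you isolate as the technical heart, which are exactly the paper's Lemmas~\ref{lem:merge-cons}, \ref{lem:mergeCR-cons}, and \ref{lem:satisfy}, including the case split on whether the ground substitution equates the two receiver types.
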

\begin{theorem}[Equivalence of expressions: $\Leftarrow$]
	Given $e,\ T,\ S,\ R ,\ CR,\ \Sigma,$ if $\cojudge{e}{T}{S}{R}{CR}$, $\func{solve}(\Sigma ,S) = \sigma$, and $\sigma$ is a ground solution, then there exists $C,\ \ctxcolor{ \Gamma,\ CT}$, such that \\
$\judge{\ctxplus{\Gamma}{CT}}{e}{C}$, $(C, \Gamma, CT)\vartriangleright\sigma(T, R, CR)$ and  $\projExt{CT}=\Sigma$.
\label{theo: ExprEL}
 \end{theorem}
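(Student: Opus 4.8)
The plan is to prove Theorem~\ref{theo: ExprEL} by induction on the derivation of \cojudge{e}{T}{S}{R}{CR} (equivalently, on the structure of $e$), after strengthening the claim so that the induction is self-supporting. I would prove: \emph{for every} typing context $\Gamma'$ with $\Gamma' \supseteq \sigma(R)$ \emph{and every} class table $CT'$ with $\projExt{CT'} = \Sigma$ that satisfies $\sigma(CR)$ in the sense of Appendix~\ref{sec:appendixB}, the contextual judgment $\judge{\ctxplus{\Gamma'}{CT'}}{e}{\sigma(T)}$ holds. The theorem itself then follows by instantiating this with concrete witnesses: $C := \sigma(T)$, $\Gamma := \sigma(R)$, and $CT$ the \emph{canonical} class table assembled from $\Sigma$ (as its \keyw{extends} clauses) together with, for each pair $(\mathit{CReq}, \cond) \in CR$ whose condition holds under $\sigma$, the member clause $\sigma(\mathit{CReq})$ installed directly on the class it names. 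The expression rules of Figure~\ref{fig:co-fjava} never emit inheritance or optional-method requirements, so $CR$ constrains only fields, methods and constructors and cannot clash with $\Sigma$; hence this $CT$ exists, is ground (as $\sigma$ is a ground solution), has $\projExt{CT} = \Sigma$, and satisfies $\sigma(CR)$. Clauses (a)--(c) of the correspondence relation then hold by construction of $C$, $\Gamma$, and $CT$.

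Two auxiliary lemmas prepare the induction. \emph{Soundness of \func{solve}:} if $\func{solve}(\Sigma, S) = \sigma$ and $s \in S$, then $\sigma(s)$ is valid over every $CT'$ with $\projExt{CT'} = \Sigma$ --- equality and nonequality constraints reduce to syntactic (in)equality of ground class names, subtype and non-subtype constraints to FJ's $<:$ over $CT'$ (which for such a $CT'$ is exactly the reflexive, transitive closure of $\Sigma$), and a conditional constraint $T = T'\ \keyw{if}\ \cond$ holds whenever $\sigma \models \cond$. \emph{Soundness of \func{merge}:} if $\func{merge}_{CR}(CR_1, \dots, CR_n) = CR|_S$ (Figure~\ref{fig:mergeM} and Appendix~\ref{sec:appendixA}) and $\sigma \models S$, then every $CT'$ satisfying $\sigma(CR)$ also satisfies each $\sigma(CR_i)$; dually, if $\func{merge}_R(R_1, \dots, R_n) = R|_S$ and $\sigma \models S$, then $\sigma(R) \supseteq \sigma(R_i)$ for all $i$. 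The crux is that merging loses no information: two overlapping requirements are either fused --- emitting an equality constraint that $\sigma$ validates, so the fused clause carries the correct type under $\sigma$ --- or kept separate under mutually exclusive conditions, so that exactly one copy is valid under $\sigma$ and has the correct receiver.

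With these in hand the induction is a routine case analysis pairing each co-contextual rule with its contextual counterpart. For \rulename{TC-Var}/\rulename{T-Var}: $R = \{x \ofType U\}$, so $\Gamma'(x) = \sigma(U) = \sigma(T)$. For \rulename{TC-Field}/\rulename{T-Field}: by \func{merge}-soundness $CT'$ satisfies $\sigma(CR_e)$, so the strengthened induction hypothesis gives $\judge{\ctxplus{\Gamma'}{CT'}}{e}{\sigma(T_e)}$; and since $CT'$ satisfies the merged-in requirement $(T_e.f_i \ofType U, \emptyset)$, we get $\func{field}(f_i, \sigma(T_e), CT') = \sigma(U)$, so \rulename{T-Field} applies. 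The cases \rulename{TC-Invk}/\rulename{T-Invk} and \rulename{TC-New}/\rulename{T-New} are analogous: the fresh variables of the method, respectively constructor, signature are pinned down by the satisfied class requirement to exactly the tuples \func{mtype}, respectively \func{fields}, must return, while the subtype constraints $\seq{T} <: \seq{U}$ become the premises $\seq{C} <: \seq{D}$ by \func{solve}-soundness. The three cast rules \rulename{TC-UCast}, \rulename{TC-DCast}, \rulename{TC-SCast} translate verbatim to \rulename{T-UCast}, \rulename{T-DCast}, \rulename{T-SCast}, their side conditions again discharged by \func{solve}-soundness. In every case the correspondence relation for the conclusion is reassembled from the induction hypotheses plus the facts just used.

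The main obstacle I anticipate is the \func{merge}-soundness lemma together with its interface to the satisfaction relation of Appendix~\ref{sec:appendixB}: one must check carefully that the conditional requirements manufactured by merging interact correctly with the notion of a \emph{valid} requirement (one whose condition holds under $\sigma$), that this satisfaction relation is strong enough to actually determine the results of \func{field}, \func{mtype} and \func{fields} (not merely the existence of some reachable member), and that it is preserved by \func{merge}; and --- more delicately --- that installing a required member \emph{directly} on its class in the canonical $CT$ never disturbs an inherited lookup on which another requirement depends, so that all lookups performed in the reconstructed contextual derivation return precisely the $\sigma$-images of the fresh variables introduced on the co-contextual side. The residual bookkeeping --- propagating groundness of $\sigma$ across every fresh variable, and the weakening implicit in $\Gamma' \supseteq \sigma(R)$ (a standard property of FJ expression typing) --- is routine.
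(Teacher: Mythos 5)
Your strategy is genuinely different from the paper's: you prove a statement universally quantified over all pairs $(\Gamma', CT')$ satisfying the requirements and then exhibit one canonical class table at the very end, whereas the paper constructs $CT$ \emph{incrementally inside the induction} -- each case extends the class tables delivered by the induction hypotheses with (at most) one new clause, using a Class Table Weakening lemma and two ``compatible clause'' lemmas (Lemmas~\ref{lem:cCTnCR} and~\ref{lem:cCTcCR}) to show the extension still satisfies everything accumulated so far. The restructuring is legitimate in principle, but it concentrates the entire difficulty of the theorem into two claims you assert without proof, and as stated one of them fails. Your canonical $CT$ -- ``for each valid $(\mathit{CReq},\cond)\in CR$ install $\sigma(\mathit{CReq})$ directly on the class it names'' -- is not well-defined: a requirement $T.f:T'$ is satisfied (rule \rulename{S-Field}) when $\func{field}(f,\sigma(T),CT)=\sigma(T')$, i.e.\ when $f$ is found on $\sigma(T)$ \emph{or a supertype}, and two valid requirements $(U_1.f:U_2,\cdot)$, $(U_3.f:U_4,\cdot)$ with $\sigma(U_1)$ a strict subtype of $\sigma(U_3)$ are never fused by $\func{merge}_{CR}$ (it unifies only on receiver \emph{equality}), so nothing forces $\sigma(U_2)=\sigma(U_4)$; installing $f$ on both classes then either yields an ill-formed FJ class table or makes one of the two lookups return the wrong type. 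The case analysis on whether a compatible clause already sits \emph{above} the new receiver (reuse it) or \emph{below / elsewhere} (add a fresh clause) is exactly what the paper's two compatibility lemmas do, and your construction cannot skip it.

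The second load-bearing claim, your ``merge-soundness,'' points in the opposite direction from the lemma the paper actually proves. The paper's Lemma~\ref{lem:satisfy} is the forward direction (if $CT$ satisfies both $\sigma(CR_1)$ and $\sigma(CR_2)$ then it satisfies the merge), which is what an incremental construction consumes. Your strengthened IH needs the backward direction: $CT'$ satisfies $\sigma(\func{merge}_{CR}(CR_1,CR_2))$ implies $CT'$ satisfies each $\sigma(CR_i)$. That direction is delicate because a component requirement $(T_1.f:T'_1,\cond_1)$ with $\sigma\models\cond_1$ survives into the merged set only with its condition augmented by $T_1\neq T_2$ or by $\cond_2\cup(T_1=T_2)$; if $\sigma(T_1)=\sigma(T_2)$ while $\sigma\not\models\cond_2$, neither variant is valid under $\sigma$, and satisfaction of the merged set then says nothing about the original requirement. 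One can plausibly argue this configuration never arises along a chain of merges that starts from empty conditions, but that is a non-trivial invariant you would have to formulate and carry through the induction, not a one-line appeal to ``merging loses no information.'' Until both of these are filled in, the proposal is a plan rather than a proof.
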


Theorems \ref{theo: ExprER} and \ref{theo: ExprEL} are proved by induction on the typing judgment of expressions. 
The most challenging aspect consists in proving the relation between the class table and class table requirements. 
In Theorem \ref{theo: ExprER}, the class table is given and the requirements are a collective merge of the propagated requirement from the subexpressions with the newly created requirements.
In Theorem \ref{theo: ExprEL}, the class table is not given,therefore we construct it through the information 
retrieved from \emph{ground class requirements}.
We ensure class table correctness and completeness with respect to the given requirements. First, we ensure that the class table we construct is correct, i.e., types of \keyw{extends}, fields, and methods clauses we add in the class table are equal to the types of the same \keyw{extends}, fields, and methods if they already exist in the class table. 
Second, we ensure that the class table we construct is complete, i.e., the constructed class table satisfies all given requirements. 

Next, we present the theorem of equivalence for methods. The difference from expressions is that there is no context, therefore no relation between context and context requirements is required. Instead, the fresh class variable introduced in co-contextual FJ as a placeholder for the actual class, where the method under scrutiny is type checked in, after substitution should be the same as the class where the method is type checked in FJ.

 \begin{theorem}[Equivalence of methods: $\Rightarrow$]
Given $m,\ C,\ CT, $ if $\judgeM{\ctxplus{C}{CT}}{C_0 \ m(\seq{C}\  \seq{x}) \{return\ e\}\\ OK}$, 
then there exists $ S,\ T,\ CR,\ \Sigma,\ \sigma$, where $\projExt{CT}=\Sigma$ and $\func{solve}(\Sigma, S)= \sigma$, such that\\
$\cojudgeOM {C_0\ m(\seq{C}\ \seq{x})\ \{ return\ e_0\} \ OK}  {S} {T}{CR}$ holds, $\sigma$ is a ground solution and \\
$(C, CT)\vartriangleright_m \sigma(T, CR)$ holds. 
 \end{theorem}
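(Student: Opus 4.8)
The plan is to give a \emph{direct, non-inductive} derivation: rule \rulename{T-Method} is the only rule deriving a method judgment, so there is a single case, and the substantive work is delegated to Theorem~\ref{theo: ExprER} applied to the method body. Concretely, I would first \emph{invert} the hypothesis $\judgeM{\ctxplus{C}{CT}}{C_0\ m(\seq{C}\ \seq{x})\{\keyw{return}\ e\}\ \term{OK}}$ via \rulename{T-Method}, obtaining: a body sub-derivation $\judge{\ctxplus{\seq{x}:\seq{C};\keyw{this}:C}{CT}}{e}{E_0}$ with $E_0 <: C_0$; a superclass $D$ with $\func{extends}(C, CT) = D$; and the override side-condition that if $\func{mtype}(m, D, CT) = \seq{D}\rightarrow D_0$ then $\seq{C} = \seq{D}$ and $C_0 = D_0$. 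Applying Theorem~\ref{theo: ExprER} to the body gives $T_e, S_e, R_e, CR_e, \Sigma, \sigma_e$ with $\projExt{CT}=\Sigma$, $\func{solve}(\Sigma, S_e)=\sigma_e$ a ground solution, $\cojudge{e}{T_e}{S_e}{R_e}{CR_e}$, and the expression correspondence: $E_0 = \sigma_e(T_e)$, $\ctxplus{\seq{x}:\seq{C};\keyw{this}:C} \supseteq \sigma_e(R_e)$, and $CT$ satisfies $\sigma_e(CR_e)$.

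Next I would assemble the co-contextual derivation with \rulename{TC-Method}. Pick $U_c, U_d$ fresh with respect to everything produced so far and set $T := U_c$. The premise $R_e - \keyw{this} - \seq{x} = \emptyset$ follows from $\ctxplus{\seq{x}:\seq{C};\keyw{this}:C} \supseteq \sigma_e(R_e)$, since substitution does not rename context-requirement variables, so $\dom{R_e} \subseteq \{\seq{x}, \keyw{this}\}$. The remaining premises are just the rule's generated constraint sets $\seq{S_x}$, $S_c$, $S_s = \{T_e <: C_0\}$ together with the merge $CR|_{S_{cr}} = \func{merge}_{CR}(CR_e, (\extendsCR{U_c}{U_d},\emptyset), (U_d.m:\seq{C}\rightarrow C_0,\emptyset)_{opt})$, all of which are determined. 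This yields $\cojudgeOM{C_0\ m(\seq{C}\ \seq{x})\{\keyw{return}\ e\}\ \term{OK}}{S}{U_c}{CR}$ with $S = S_e\cup S_s\cup S_c\cup S_{cr}\cup\seq{S}_x$.

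It then remains to extend the solution and check the method correspondence (which, dropping the context clause, asks for $C = \sigma(T)$ and $CT$ satisfies $\sigma(CR)$). Put $\sigma := \sigma_e \cup \{U_c \smap C, U_d \smap D\}$, which is ground since $\sigma_e$ is and $C, D$ are ground types. I would verify that $\sigma$ solves $S$ summand by summand: $S_e$ because $\sigma$ extends $\sigma_e$ and $S_e$ does not mention the fresh $U_c, U_d$; $S_s$ because $\sigma(T_e) = E_0$ and $E_0 <: C_0$ is derivable in $CT$, hence admissible under $\Sigma = \projExt{CT}$; each $S_{x_i}$ because the context inclusion forces $\sigma_e(R_e(x_i)) = C_i$ for $x_i \in \dom{R_e}$; and $S_c$ because $\sigma(U_c) = C = \sigma_e(R_e(\keyw{this}))$ whenever $\keyw{this} \in \dom{R_e}$. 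For $S_{cr}$ and for $CT \text{ satisfies } \sigma(CR)$, note that $\sigma$ sends the three merged inputs to $\sigma(CR_e)$, $\extendsCR{C}{D}$, and $(D.m:\seq{C}\rightarrow C_0)_{opt}$, which $CT$ satisfies by, respectively, the expression correspondence, $\func{extends}(C, CT) = D$, and the override side-condition (the optional method requirement is exactly the co-contextual counterpart of that conditional premise). Finally $\sigma(T) = \sigma(U_c) = C$, completing the correspondence.

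The main obstacle is the treatment of the merge in the last step: I need a lemma stating that if $CT$ satisfies $\sigma(CR_1), \dots, \sigma(CR_n)$ and $CR|_{S} = \func{merge}_{CR}(CR_1, \dots, CR_n)$, then $\sigma$ solves $S$ and $CT$ satisfies $\sigma(CR)$. The delicate part is the bookkeeping of the conditions ($T = T'$ / $T \neq T'$ guards) and of the $(\cdot)_{opt}$ flag — in particular matching $(U_d.m:\seq{C}\rightarrow C_0)_{opt}$ against whether $D$ actually declares or inherits $m$. This is precisely the merge-preservation property that underlies Theorem~\ref{theo: ExprER}, so here it is reused rather than re-proved. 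A secondary, routine point is reconciling $\func{solve}$ as a function with our freedom to pick the ground values $C$ and $D$ for the unconstrained variables $U_c$ and $U_d$; this is handled exactly as in the expression case.
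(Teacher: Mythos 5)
Your proposal is correct and follows essentially the same route as the paper's proof: invert \rulename{T-Method}, apply the expression-equivalence theorem to the body, extend $\sigma_e$ with $U_c \smap C$ and $U_d \smap D$, discharge $R_e - \keyw{this} - \seq{x} = \emptyset$ from the context inclusion, handle the optional method requirement by case analysis on whether $\func{mtype}(m,D,CT)$ is defined, and delegate the merge bookkeeping to the same preservation lemmas (the paper's Lemmas~\ref{lem:mergeCR-cons} and~\ref{lem:satisfy}). The only differences are presentational: the paper spells out a four-way case split on which of $\seq{x},\keyw{this}$ occur in $\dom{R_e}$ and writes $\sigma$ as a composition rather than a union, neither of which changes the argument.
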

 
 \begin{theorem}[Equivalence of methods: $\Leftarrow$]
Given $m,\ T,\ S,\ CR,\ \Sigma, $ if $\cojudgeOM {C_0\ m(\seq{C}\ \seq{x})\ \{ return\ e_0\} \\ OK}  {S} {T}{CR}$, $\func{solve}(\Sigma, S) = \sigma$, and $\sigma$ is a ground solution, then there exists $ C,\ CT$, such that
$\judgeM{\ctxplus{C}{CT}}{C_0 \ m(\seq{C}\  \seq{x}) \{return\ e\}\ OK}$ holds, $(C, CT)\vartriangleright_m \sigma(T, CR)$ and $\projExt{CT}=\Sigma$. 
 \end{theorem}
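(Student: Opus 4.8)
The plan is to mirror the $\Leftarrow$ direction for expressions (Theorem~\ref{theo: ExprEL}): invert the co-contextual method rule, reuse the expression-level result on the method body, and then \emph{construct} a class table $CT$ that witnesses the FJ method judgment while keeping $\projExt{CT}=\Sigma$. First I would invert \rulename{TC-Method} on $\cojudgeOM{C_0\ m(\seq{C}\ \seq{x})\ \{return\ e_0\}\ \term{OK}}{S}{T}{CR}$: this forces $T=U_c$ for a fresh $U_c$, a fresh $U_d$, a body derivation $\cojudge{e_0}{T_e}{S_e}{R_e}{CR_e}$, side constraints $S_s=\{T_e<:C_0\}$, $\seq{S_x}$ and $S_c$, the emptiness condition $R_e - \keyw{this} - \seq{x} = \emptyset$, and $CR|_{S_{cr}} = \func{merge}_{CR}(CR_e,(\extendsCR{U_c}{U_d},\emptyset),(U_d.m:\seq{C}\to C_0,\emptyset)_{opt})$, with $S = S_e \cup S_s \cup S_c \cup S_{cr} \cup \seq{S_x}$. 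Since $\sigma$ is a ground solution of $S$, it is also a ground solution of $S_e$ (and of $S_{cr},S_c,S_s,\seq{S_x}$).

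Next I would apply Theorem~\ref{theo: ExprEL} to the body derivation with $\Sigma$ and $\sigma$, obtaining $C_e,\Gamma_e,CT_e$ with $\judge{\ctxplus{\Gamma_e}{CT_e}}{e_0}{C_e}$, $C_e=\sigma(T_e)$, $\Gamma_e\supseteq\sigma(R_e)$, $CT_e$ satisfying $\sigma(CR_e)$, and $\projExt{CT_e}=\Sigma$. I would then take $C := \sigma(U_c)$ and $\Gamma := \ctxplus{\seq{x}:\seq{C}}{\keyw{this}:C}$. Because the parameter types $\seq{C}$ and $C_0$ are ground, solving $\seq{S_x}$ and $S_c$ yields $\sigma(R_e(x_i))=C_i$ and $\sigma(R_e(\keyw{this}))=C$ whenever these lookups are defined, while $R_e - \keyw{this} - \seq{x} = \emptyset$ gives $\dom{R_e}\subseteq\{\keyw{this}\}\cup\{\seq{x}\}$; hence $\sigma(R_e)\subseteq\Gamma$. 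Since $\fvt{e_0}\subseteq\dom{R_e}$ — free variables of $e_0$ always occur in $R_e$ by bottom-up propagation, as the expression rules never remove requirements — the contexts $\Gamma$ and $\Gamma_e$ agree on $\fvt{e_0}$, so weakening and strengthening of the typing context transport the body derivation to $\judge{\ctxplus{\Gamma}{CT_e}}{e_0}{C_e}$. This is the first premise of \rulename{T-Method} with $E_0:=C_e=\sigma(T_e)$, and from $S_s$ together with $\projExt{CT_e}=\Sigma$ (so FJ subtyping under $CT_e$ is the reflexive--transitive closure of $\Sigma$) I obtain $E_0<:C_0$.

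It remains to promote $CT_e$ to the class table $CT$ of the method judgment. I would enlarge $CT_e$ by the clauses dictated by the two new requirements in $CR$ — in particular the clause $\extendsCT{C}{D}$ with $D:=\sigma(U_d)$ — and reuse the correctness and completeness arguments from the proof of Theorem~\ref{theo: ExprEL} to conclude that $CT$ satisfies $\sigma(CR)$. By monotonicity of FJ expression typing under class-table extension, $\judge{\ctxplus{\Gamma}{CT}}{e_0}{E_0}$ still holds; the added clause gives $\func{extends}(C,CT)=D$, the second premise of \rulename{T-Method}; and satisfaction by $CT$ of (the merged image of) the optional requirement $(U_d.m:\seq{C}\to C_0,\emptyset)_{opt}$ says precisely that whenever $\func{mtype}(m,D,CT)$ is defined it equals $\seq{C}\to C_0$ — which is the conditional override premise ``$\term{if}\ \func{mtype}(m,D,CT)=\seq{D}\to D_0$ then $\seq{C}=\seq{D};\ C_0=D_0$''. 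With all premises of \rulename{T-Method} discharged I get $\judgeM{\ctxplus{C}{CT}}{C_0\ m(\seq{C}\ \seq{x})\{return\ e_0\}\ \term{OK}}$, and $(C,CT)\vartriangleright_m\sigma(T,CR)$ then holds by construction, since $C=\sigma(U_c)=\sigma(T)$, $CT$ satisfies $\sigma(CR)$, and $\projExt{CT}=\Sigma$.

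The step I expect to be the main obstacle is this last one: enlarging $CT_e$ so that it absorbs the freshly introduced inheritance requirement on $\sigma(U_c)$ and the optional override requirement while keeping $\projExt{CT}$ \emph{exactly} $\Sigma$, since adding an \keyw{extends} clause changes $\projExt$. This needs an argument that the given $\Sigma$ is already consistent with every ground inheritance requirement of $\sigma(CR)$ — so that in particular $(\sigma(U_c),\sigma(U_d))\in\Sigma$ — exploiting that $U_c$ and $U_d$ are fresh and that the merge constraints $S_{cr}$ pin down how $U_c$ and $U_d$ relate to the inheritance requirements already present in $CR_e$, together with a matching lemma that the satisfaction semantics of optional method requirements (Appendix~\ref{sec:appendixB}) coincides with FJ's conditional $\func{mtype}$-override check. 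The other ingredients (inversion, the context reconciliation, subtyping, and class-table monotonicity) are routine adaptations of the expression case.
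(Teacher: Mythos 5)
Your proposal follows essentially the same route as the paper's own proof: invert \rulename{TC-Method}, apply the expression-level $\Leftarrow$ theorem to the body derivation, set $C=\sigma(U_c)$ and $D=\sigma(U_d)$, obtain $E_0<:C_0$ from the propagated subtype constraint, extend the class table with the \keyw{extends} clause and discharge the optional method requirement by case analysis (using that $U_c$ and $U_d$ are fresh to rule out pre-existing requirements on them, and that satisfaction of $(U_d.m:\seq{C}\to C_0,\emptyset)_{opt}$ is exactly the conditional $\func{mtype}$ override premise), and then read off \rulename{T-Method}. The two points you single out as delicate are precisely where the paper is terse --- it directly asserts $\Gamma_e=\ctxplus{\seq{x}:\seq{C}}{\keyw{this}:C}$ from $R_e-\keyw{this}-\seq{x}=\emptyset$, and asserts $\projExt{CT'}=\Sigma_e\cup(C,D)=\Sigma'=\Sigma$ after adding the \keyw{extends} clause --- so your extra care (explicit weakening/strengthening of the context, and arguing $(\sigma(U_c),\sigma(U_d))\in\Sigma$) is compatible with and slightly more explicit than the paper's argument, which otherwise proceeds identically.
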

 
Theorems 5 and 6 are proved by induction on the typing judgment.  
The difficulty increases in proving equivalence for methods because we have to consider the optional requirement, as introduced in the previous sections. It requires a different strategy to prove the relation between the class table and optional requirements; we accomplish the proof by using case distinction. We have a detailed proof for method declaration, and also how this affects class table construction, and we prove a correct and complete construction of it.
   
Lastly, we present the theorem of equivalence for classes and programs. 

 \begin{theorem}[Equivalence of classes: $\Rightarrow$]
Given $C,\ CT, $ if \judgeM{CT}{class\ C\ extends\ D\ \{\seq{C}\ \seq{f}; K\\ \seq{M}\}\ OK},
then there exists $ S,\ CR,\ \Sigma,\ \sigma$, where $\projExt{CT}=\Sigma$ and $\func{solve}(\Sigma, S)=\sigma$, such that
\cojudgeOC{class\ C\ extends\ D \{\seq{C}\ \seq{f};\ K\ \seq{M}\}\ OK} {S} {CR} holds, $\sigma$ is a ground solution and 
$(CT)\vartriangleright_c \sigma(CR)$ holds. 
 \end{theorem}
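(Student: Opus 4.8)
The plan is to prove this directly from the method equivalence theorem (Theorem~5, the $\Rightarrow$ direction), since the class typing rules \rulename{T-Class} and \rulename{TC-Class} do not recurse on class declarations but merely assemble the typings of the member methods $\seq{M}$ together with the constructor/superclass information. First I would invert \rulename{T-Class} on the hypothesis $\judgeM{CT}{\keyw{class}\ C\ \keyw{extends}\ D\ \{\seq{C}\ \seq{f};\ K\ \seq{M}\}\ \term{OK}}$ to recover: the constructor shape $K=C(\seq{D}'\ \seq{g}, \seq{C}'\ \seq{f})\{\keyw{super}(\seq{g}); \keyw{this}.\seq{f}=\seq{f}\}$, the fields lookup $\func{fields}(D, CT)=\constrCT{D}{\seq{D}'}$, and, for each method $M_i \in \seq{M}$, the judgment $\judgeM{\ctxplus{C}{CT}}{M_i\ \term{OK}}$.

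Next I would apply Theorem~5 to each $M_i$ against the same class table $CT$, obtaining $S_i$, a class variable $T_i$ (the placeholder for the enclosing class produced by \rulename{TC-Method}), $CR_i$, and a ground substitution $\sigma_i$ with $\projExt{CT}=\Sigma$ and $\func{solve}(\Sigma, S_i)=\sigma_i$, such that $\cojudgeOM{M_i\ \term{OK}}{S_i}{T_i}{CR_i}$ holds and $(C, CT)\vartriangleright_m \sigma_i(T_i, CR_i)$; in particular $\sigma_i(T_i)=C$ and $CT$ satisfies $\sigma_i(CR_i)$. Since each method typing draws its fresh class variables independently, the $\sigma_i$ have pairwise disjoint domains on those variables, so I take $\sigma_0 = \bigcup_i \sigma_i$. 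I then instantiate \rulename{TC-Class}: the constructor requirement $(\constrCT{D}{\seq{D}'}, \emptyset)$ comes from the same data $\seq{D}'$, the merge $CR'|_{S_{cr}} = \func{merge}_{CR}((\constrCT{D}{\seq{D}'}, \emptyset), \seq{CR_i})$ is well-defined, and the equality constraints are $S_{eq,i} = \{T_i = C\}$. This yields the co-contextual derivation $\cojudgeOC{\keyw{class}\ C\ \keyw{extends}\ D\ \{\seq{C}\ \seq{f};\ K\ \seq{M}\}\ \term{OK}}{\seq{S}\cup\seq{S}_{eq}\cup S_{cr}}{CR'}$, so I set $S = \bigcup_i S_i \cup \bigcup_i S_{eq,i} \cup S_{cr}$ and $CR = CR'$.

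It then remains to exhibit a ground $\sigma$ with $\func{solve}(\Sigma, S)=\sigma$ and to establish $(CT)\vartriangleright_c\sigma(CR')$, i.e.\ that $CT$ satisfies $\sigma(CR')$. I would take $\sigma=\sigma_0$, which is already ground: it solves each $S_i$; it solves $S_{eq,i}=\{T_i=C\}$ because $\sigma_i(T_i)=C$ by the method correspondence; and it solves $S_{cr}$ by a \emph{merge lemma} stating that whenever one ground class table $CT$ satisfies both $\sigma(\mathit{CR}_a)$ and $\sigma(\mathit{CR}_b)$, the constraints emitted by $\func{merge}_{CR}(\mathit{CR}_a,\mathit{CR}_b)$ hold under $\sigma$ and $CT$ satisfies $\sigma$ of the merged set. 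Applying this lemma iteratively, starting from $(\constrCT{D}{\seq{D}'}, \emptyset)$ --- which $CT$ satisfies precisely because $\func{fields}(D,CT)=\constrCT{D}{\seq{D}'}$ --- and folding in each $\sigma(CR_i)$, gives both the validity of $S_{cr}$ under $\sigma$ and $CT \text{ satisfies } \sigma(CR')$, which is exactly the content of $\vartriangleright_c$.

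The step I expect to be the main obstacle is this merge lemma for $\func{merge}_{CR}$: one must verify, across all requirement kinds (inheritance, constructor, field, method, optional method) and in the presence of the conditional components $\mathit{cond}$ and the receiver case split of Figure~\ref{fig:mergeM}, that merging two requirement sets each satisfied by the \emph{same} ground class table never drops a needed obligation and never produces an unsatisfiable constraint. The conditional requirements created by the $T_1 \neq T_2$ / $T_1 = T_2$ split are delicate: after applying the ground $\sigma$, one branch must become vacuous while the other collapses onto the already-satisfied requirement, and arguing this uniformly for the optional method requirements (which mirror the conditional override check in \rulename{T-Method}) is the most technical part. A secondary bookkeeping concern is maintaining freshness of the per-method class variables $T_i$ so that $\sigma_0$ is well-defined and the constraints $\{T_i=C\}$ do not clash with the method-local class-table reconstruction used inside Theorem~5.
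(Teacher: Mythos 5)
Your proposal matches the paper's own proof in all essentials: invert \rulename{T-Class}, apply the method-equivalence theorem to each $M_i$ to obtain per-method substitutions with $\sigma_i(T_i)=C$ and $CT$ satisfying $\sigma_i(CR_i)$, compose them into one ground $\sigma$ using the disjointness of per-derivation fresh class variables, and discharge $S_{cr}$ and the correspondence via exactly the merge lemma you describe (the paper's Lemmas on \func{merge}$_{CR}$ constraint solvability and satisfaction preservation), seeded by the constructor requirement satisfied through $\func{fields}(D,CT)=\constrCT{D}{\seq{D}'}$. The obstacles you flag (the conditional receiver case split in merging and the freshness bookkeeping) are precisely the points the paper delegates to its auxiliary lemmas, so no gap.
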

\begin{theorem}[Equivalence of classes: $\Leftarrow$]
Given $C,\ CR,\ \Sigma,$ if \cojudgeOC {class\ C\ extends\ D \{\seq{C}\ \seq{f};\ K\ \seq{M}\} \\
 OK} {S} {CR}, 
$\func{solve}(\Sigma,S)=\sigma$, and 
$\sigma$ is a ground solution, then there exists $CT$, such that \\
 \judgeM{CT}{class\ C\ extends\ D\ \{\seq{C}\ \seq{f}; K\ \seq{M}\}\ OK} holds, $(CT)\vartriangleright_c \sigma(CR)$ holds and $\projExt{CT}=\Sigma$. 
 \end{theorem}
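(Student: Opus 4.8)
The plan is to derive the statement by inverting the co-contextual class rule \rulename{TC-Class}, invoking the backward ($\Leftarrow$) equivalence theorem for methods on each method declaration, and then constructing the witness class table $CT$ from the ground class table requirements, reusing the construction from the $\Leftarrow$ direction for expressions (Theorem~\ref{theo: ExprEL}). No induction is needed at the class level: a class contains only methods, and the method-equivalence theorem is used as a black box.

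First I would invert \rulename{TC-Class} applied to the hypothesis. This yields the shape of the constructor $K$, a method judgment $\cojudgeOM{M_i\ \term{OK}}{S_i}{U_i}{CR_i}$ for every $M_i \in \seq{M}$, the merge $CR|_{S_{cr}} = \func{merge}_{CR}((\constrCT{D}{\seq{D}'}, \emptyset), \seq{CR})$, the per-method equalities $S_{eq}^i = \{U_i = C\}$, and the decomposition $S = \bigcup_i S_i \cup \bigcup_i S_{eq}^i \cup S_{cr}$. The freshness side conditions keep the per-method contributions variable-disjoint, so $\sigma$ acts coherently on all of them; and since each $S_i \subseteq S$ and $\func{solve}(\Sigma, S) = \sigma$ with $\sigma$ ground, $\sigma$ is also a ground solution of each $S_i$ under $\Sigma$. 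Applying the backward method-equivalence theorem to each $M_i$ gives $C_i, CT_i$ with $\judgeM{\ctxplus{C_i}{CT_i}}{M_i\ \term{OK}}$, $(C_i, CT_i) \vartriangleright_m \sigma(U_i, CR_i)$, and $\projExt{CT_i} = \Sigma$. Since $\sigma$ satisfies $S_{eq}^i$, we have $\sigma(U_i) = C$, and the first clause of $\vartriangleright_m$ forces $C_i = \sigma(U_i) = C$, so all methods are checked against the same enclosing class $C$.

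Next I would build $CT$ directly from the ground requirements $\sigma(CR)$ and the relation $\Sigma$, exactly as in the proof of Theorem~\ref{theo: ExprEL}: add the \keyw{extends} clauses prescribed by $\Sigma$ (including the header clause $C\ \keyw{extends}\ D$), add the clause $\constrCT{D}{\seq{D}'}$, and, for every requirement in $\sigma(CR)$ whose condition holds, add the corresponding constructor, field, or method clause, with satisfied optional method requirements likewise realised as method clauses of the indicated class. Correctness of this construction --- no two added clauses conflict --- follows because the merge constraints collected in $S_{cr}$ (and those already absorbed into the $S_i$) are solved by $\sigma$ and hence equate the signatures of any overlapping requirements; completeness --- every valid requirement has a matching declaration --- is immediate. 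From the construction one reads off $\projExt{CT} = \Sigma$ and $(CT) \vartriangleright_c \sigma(CR)$.

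Finally I would conclude with \rulename{T-Class}: the shape of $K$ is given by the inversion; $\func{fields}(D, CT) = \constrCT{D}{\seq{D}'}$ holds because that clause was inserted and no conflicting constructor clause for $D$ exists; and for each $M_i$ the judgment $\judgeM{\ctxplus{C}{CT}}{M_i\ \term{OK}}$ is obtained by transporting the derivation $\judgeM{\ctxplus{C}{CT_i}}{M_i\ \term{OK}}$ from $CT_i$ to $CT$. The hard part will be this transport, because contextual method typing is not monotone under enlarging the class table: \rulename{T-Method} carries the override-consistency premise $\func{mtype}(m, D, CT) = \seq{D}\rightarrow D_0 \Rightarrow \seq{C} = \seq{D},\, C_0 = D_0$, and a larger table may resolve $\func{mtype}$, $\func{field}$, or $\func{extends}$ where $CT_i$ did not. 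The observation that makes it go through is that the optional method requirements emitted by \rulename{TC-Method} for $M_i$ are exactly these override obligations, so a $CT$ that satisfies every satisfied requirement of $\sigma(CR)$ (which subsumes the contributions of $CR_i$) discharges precisely them; together with the fact that $CT$ and $CT_i$ agree on every lookup the derivation actually inspects --- both realise the same ground requirements and the same $\Sigma$ --- the derivation of $M_i$ lifts verbatim to $CT$. Assembling the \rulename{T-Class} premises with $\projExt{CT} = \Sigma$ and $(CT) \vartriangleright_c \sigma(CR)$ completes the proof.
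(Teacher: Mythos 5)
Your proposal is correct in its skeleton and that skeleton coincides with the paper's: invert \rulename{TC-Class}, feed each method judgment to the backward method-equivalence theorem (using $\sigma(U_i)=C$ from the $S_{eq}$ constraints to align all enclosing classes), build a witness $CT$, and close with \rulename{T-Class}. Where you genuinely diverge is the construction of $CT$. The paper does \emph{not} rebuild the table from $\sigma(CR)$ and $\Sigma$; it takes the union $CT'=\bigcup_i CT_i$ of the tables returned by the method theorem, argues $CT'\ \mathit{satisfies}\ \sigma(CR')$ via variable-disjointness of the independent derivations and the class-table weakening lemma, then extends $CT'$ with the single new obligation $\constrCT{D}{\seq{D}'}$ by a three-case analysis (clause absent; compatible clause in $CT'$ but not in $CR'$; compatible clause in both), delegating the latter two cases to the dedicated lemmas for those situations, and finally inserts $C$'s own header, constructor, field, and method clauses with a separate lemma showing this preserves satisfaction. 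Your from-scratch construction buys a cleaner statement of $\projExt{CT}=\Sigma$ and makes the correspondence relation hold essentially by definition, but it shifts all the work into the ``no two added clauses conflict'' and ``the method derivations transport'' claims, which you assert rather than prove; the paper's incremental route lets it reuse lemmas already established for the expression case and localizes the conflict analysis to one new clause at a time. On the transport problem you are actually more honest than the paper: the paper silently upgrades $\judgeM{\ctxplus{C}{CT_i}}{M_i\ \term{OK}}$ to the final $CT$ without comment, whereas you correctly identify the non-monotonicity of \rulename{T-Method} (the override premise can be newly triggered by a larger table) and point at the optional method requirements as the mechanism that discharges it. If you flesh out that transport argument and the conflict-freeness of your direct construction (which amounts to re-proving the paper's merge/weakening lemmas), your route goes through.
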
 
Theorems 8 and 9 are proved by induction on the typing judgment. Class declaration requires to prove only the relation between the class table and class table requirements since there is no context. 

Typing rule for programs does not have as inputs context and class table, therefore there is no relation between context, class table and requirements. The equivalence theorem describes that a program in FJ and co-contextual FJ is well-typed. 

 \begin{theorem}[Equivalence for programs: $\Rightarrow$]  \label{thm:equivP}
Given $\seq L $, if \judgeP{\seq{L}\ OK}, then there exists S, $\Sigma$, $\sigma$, where $\projExt{\seq L} = \Sigma$ and $\func{solve}(\Sigma, S)=\sigma$, such that 
\judgeOP{\seq{L}\ OK}{S} holds and $\sigma$ ground solution. 
\end{theorem}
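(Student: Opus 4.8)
The plan is to reduce the program case to the class case and then assemble the pieces according to the shape of rule \rulename{TC-Program}. First I would invert \rulename{T-Program} on the hypothesis $\judgeP{\seq{L}\ OK}$, obtaining the class table $CT = \bigcup_{L'\in\seq{L}}(\func{addExt}(L')\cup\func{addCtor}(L')\cup\func{addFs}(L')\cup\func{addMs}(L'))$ together with a derivation $\judgeM{CT}{L'\ OK}$ for every $L'\in\seq{L}$. Put $\Sigma := \projExt{CT}$; since the \keyw{extends} clauses of $CT$ are exactly those produced by $\func{addExt}$ from $\seq{L}$, we have $\Sigma = \projExt{\seq{L}}$, as the statement requires. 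Applying the $\Rightarrow$ direction of class equivalence to each $L'$ yields $S_{L'}$, $CR_{L'}$, $\sigma_{L'}$ with $\func{solve}(\Sigma, S_{L'}) = \sigma_{L'}$, a derivation $\cojudgeOC{L'\ OK}{S_{L'}}{CR_{L'}}$, a ground $\sigma_{L'}$, and the correspondence $(CT)\vartriangleright_c \sigma_{L'}(CR_{L'})$, that is, $CT$ satisfies $\sigma_{L'}(CR_{L'})$ in the sense of Appendix~\ref{sec:appendixB}.

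Next I would supply the two remaining premises of \rulename{TC-Program}. Merging is a total operation, so $\func{merge}_{CR}(\seq{CR}) = CR'|_{S'}$ is well defined; what must be shown is that the equality/inequality constraints accumulated in $S'$ are satisfiable, which I defer to the substitution below. For the iterated removal $\biguplus_{L'\in\seq{L}}(\func{removeMs}(CR',L')\uplus\func{removeFs}(CR',L')\uplus\func{removeCtor}(CR',L')\uplus\func{removeExt}(CR',L'))$ the claim is that it reduces $CR'$ to $\emptyset$, producing some $S_{rem}$. The justification is that $CT$ satisfies each $\sigma_{L'}(CR_{L'})$ and $CT$ consists of exactly the clauses extracted from $\seq{L}$, so every requirement in $CR'$ either has a matching declaration among those clauses — and is therefore discharged by the corresponding $\func{removeM}/\func{removeF}/\func{removeCtor}/\func{removeExt}$ call, possibly after $\func{removeExt}$ has first duplicated it up the inheritance chain with a refined condition — or its condition becomes unsatisfiable once the relevant clauses have been processed, and it is discharged for that reason. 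To make this precise I would process the declarations in a topological order of $\Sigma$ from subclasses to superclasses (which exists because the subtype hierarchy of a well-typed FJ program is acyclic) and, within a class, remove method/field/constructor clauses before the \keyw{extends} clause, exactly the discipline of Section~\ref{sec:co-cont-FJ}; an induction along this order, driven by the ``$CT$ satisfies'' definition, then shows the residual requirement set is empty and that the value of $\biguplus$ does not depend on the order actually chosen in the rule.

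It remains to exhibit the witness. Take $S := \bigcup_{L'\in\seq{L}} S_{L'}\cup S'\cup S_{rem}$, which is exactly the output constraint set of \rulename{TC-Program}, and let $\sigma$ send every class variable occurring in the assembled derivation to the ground class it actually denotes in $CT$: on the variables of the $L'$-derivation it agrees with $\sigma_{L'}$ (these variable sets are pairwise disjoint by freshness), and the auxiliary variables introduced while combining derivations — the $U_c, U_d$ for the enclosing class and its superclass in \rulename{TC-Method}, and the constructor/field/method placeholders created during merging and removal — are sent to the classes $CT$ prescribes. Each $\sigma_{L'}$ already solves $S_{L'}$ relative to $\Sigma$, and since $S_{L'}$ mentions only $L'$'s variables, $\sigma$ solves it too; the constraints added by merging and removal all have the form $T = T'$, possibly conditioned on some $\cond$, and encode genuine identities forced by $CT$ (equal receivers carry equal signatures because FJ's $\func{mtype}$ and $\func{field}$ are functional and overriding preserves signatures), while their side conditions encode genuine distinctness of class names; hence $\sigma$ satisfies all of $S$ and respects $\Sigma$, so $\func{solve}(\Sigma, S) = \sigma$ with $\sigma$ ground. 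Together with the derivation assembled above this yields $\judgeOP{\seq{L}\ OK}{S}$ and the required ground solution.

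The main obstacle is the removal step: proving that the iterated $\biguplus$ genuinely exhausts $CR'$ down to $\emptyset$. This needs careful bookkeeping of how $\func{removeExt}$ propagates method and field requirements onto superclasses with refined conditions, of how later removals on those superclasses discharge the duplicates, and of why the remaining conditional requirements all become unsatisfiable after every declaration has been processed — essentially the global analogue of the class-table correctness-and-completeness arguments used in the proofs for methods and classes — together with fixing a correct removal order and confirming order-independence of $\biguplus$ on the final result.
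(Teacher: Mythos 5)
Your proposal is correct and follows the same overall decomposition as the paper's proof: invert \rulename{T-Program} to obtain $CT$, apply the class-level equivalence theorem to each declaration $L'$, take $\sigma$ to be the combination of the per-class solutions (disjoint by freshness of class variables), discharge the merge constraints from the fact that $CT\ \satisfy\ \sigma_{L'}(CR_{L'})$ for every $L'$ (the paper packages this as Lemma~\ref{lem:mergeCR-cons}), and discharge the removal constraints analogously (Lemma~\ref{lem:remove-cons}). The one place where you genuinely diverge is the step you yourself flag as the crux: showing that the iterated $\biguplus$ of removals drives the merged requirement set to $\emptyset$. The paper isolates this as a standalone lemma (``Class requirements empty,'' Lemma~\ref{thm:CRempty}) and proves it by contradiction --- if some requirement survived with all of its conditions still satisfiable, then no matching clause for it (or for any of its superclasses) could have been contributed to $CT$, contradicting $CT\ \satisfy\ \sigma(CR)$. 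You instead propose a direct induction along a subclass-to-superclass topological order of $\Sigma$, explicitly tracking how \func{removeExt} duplicates field and method requirements up the inheritance chain until a member removal discharges them. Both arguments rest on the same invariant inherited from the class-level theorem, namely $CT\ \satisfy\ \sigma(CR)$; the paper's contradiction argument sidesteps the order-dependence bookkeeping you identify as delicate, at the cost of being less constructive, whereas your induction would additionally have to establish that the value of $\biguplus$ in \rulename{TC-Program} is independent of the evaluation order --- a point the paper's formulation quietly presupposes and its contradiction-style lemma never has to confront. Either route closes the gap, so your plan is sound.
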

\begin{theorem}[Equivalence for programs: $\Leftarrow$]  \label{thm:equivPl}
Given $\seq L $, if \judgeOP{\seq{L}\ OK}{S}, $\func{solve}(\Sigma, S)=\sigma $, where $\projExt{\seq L} =\Sigma$, and $\sigma$ is a ground solution, then  \judgeP{ \seq{L}\ OK} holds.
\end{theorem}
Theorems 10 and 11 are proved by induction on the typing judgment. In here, we prove that a class table containing all clauses provided from the given class declarations is dual to empty class table requirements in the inductive step.  \\
Omitted definitions, lemmas and proofs can be found at the Appendix~\ref{sec:appendixB}.

\section{Efficient Incremental FJ Type Checking}
\label{sec:implementation}
\lstset{language=FJ}

The co-contextual FJ model from Section~\ref{sec:co-cont-FJ} and \ref{sec:co-cont-rules} was
designed such that it closely resembles the formulation of the original FJ type system, where all
differences are motivated by dually replacing contextual operations with co-contextual ones.  As
such, this model served as a good basis for the equivalence proof from the previous section.
However, to obtain a type checker implementation for co-contextual FJ that is amenable to efficient
incrementalization, we have to employ a number of behavior-preserving optimizations.  In the present
section, we describe these optimization and the resulting \emph{incremental} type checker
implementation for co-contextual FJ.  The source code is available online at \implementationurl{}.

\paragraph{Condition normalization.}
In our formal model from Section~\ref{sec:co-cont-FJ} and \ref{sec:co-cont-rules}, we represent
context requirements as a set of conditional class requirements
$CR \subset \mathit{Creq} \times \mathit{cond}$.
Throughout type checking, we add new class requirements using function \func{merge}, but we only
discharge class requirements in rule \rulename{TC-Program} at the very end of type checking.  Since
\func{merge} generates $3*m*n$
conditional requirements for inputs with $m$
and $n$ requirements respectively, requirements quickly become intractable even for small programs.

The first optimization we conduct is to eagerly normalize conditions of class requirements.  Instead
of representing conditions as a list of type equations and inequations, we map receiver types to the
following condition representation (shown as Scala code):
\begin{lstlisting}[language=FJ,mathescape,backgroundcolor=\color{white}]
case class !Condition(notGround: Set[CName], notVar: Set[UCName], 
                            $\,$sameVar: Set[UCName], sameGroundAlternatives: Set[CName])!.
\end{lstlisting}
A condition is true if the receiver type is different from all ground types (\lstinline|!CName!|) and
unification variables (\lstinline|!UCName!|) in \lstinline|!notGround!| and \lstinline|!notVar!|, if the
receiver type is equal to all unification variables in \lstinline|!sameVar!|, and if
\lstinline|!sameGroundAlternatives!| is either empty or the receiver type occurs in it.  That is, if
\lstinline|!sameGroundAlternatives!| is non-empty, then it stores a set of alternative ground types,
one of which the receiver type must be equal to.

When adding an equation or inequation to the condition over a receiver type, we check whether the
condition becomes unsatisfiable. For example, when equating the receiver type to the ground type
\lstinline|!C!| and \lstinline|!notGround.contains(C)!|, we mark the resulting condition to be
unsatisfiable.  Recognizing unsatisfiable conditions has the immediate benefit of allowing us to
discard the corresponding class requirements right away.  Unsatisfiable conditions occur quite
frequently because \func{merge} generates both equations and inequations for all receiver types that
occur in the two merged requirement sets.

If a condition is not unsatisfiable, we normalize it such that the following assertions are
satisfied: (i) the receiver type does not occur in any of the sets, (ii)
\lstinline{!sameGroundAlternatives.isEmpty!} \lstinline{ || !notGround.isEmpty!}, and (iii)
\lstinline|!notVar.intersect(sameVar).isEmpty!|.  Since normalized conditions are more compact, this
optimization saves memory and time required for memory management.  Moreover, it makes it easy to
identify irrefutable conditions, which is the case exactly when all four sets are empty, meaning
that there are no further requisites on the receiver type.  Such knowledge is useful when
\func{merge} generates conditional constraints, because irrefutable conditions can be ignored.
Finally, condition normalization is a prerequisite for the subsequent optimization.

\paragraph{In-depth merging of conditional class requirements.}
In the work on co-contextual PCF~\cite{Erdweg15}, the number of requirements of an expression was
bound by the number of free variables that occur in that expression.  To this end, the \func{merge}
operation used for co-contextual PCF identifies subexpression requirements on the same free variable
and merges them into a single requirement.  For example, the expression $x + x$
has only one requirement $\{x: U_1\}|_{\{U_1 = U_2\}}$,
even though the two subexpressions propagate two requirements $\{x: U_1\}$
and $\{x: U_2\}$, respectively.

Unfortunately, the \func{merge} operation of co-contextual FJ given in
Section~\ref{sec:context-requirements} does not enjoy this property.  Instead of merging
requirements, it merely collects them and updates their conditions.  A more in-depth merge of
requirements is possible whenever two code fragments require the same member from the same receiver
type.  For example, the expression $\keyw{this}.x + \keyw{this}.x$
needs only one requirement $\{U_1.x() : U_2\}|_{\{U_1=U_3,U_2=U_4\}}$,
even though the two subexpressions propagate two requirements $\{U_1.x() : U_2\}$
and $\{U_3.x() : U_4\}$,
respectively.  Note that $U_1=U_3$
because of the use of $\keyw{this}$
in both subexpressions, but $U_2=U_4$ because of the in-depth merge.

However, conditions complicate the in-depth merging of class requirements: We may only merge two
requirements if we can also merge their conditions.  That is, for conditional requirements
$(\mathit{creq}_1, \mathit{cond}_1)$
and $(\mathit{creq}_2, \mathit{cond}_2)$
with the same receiver type, the merged requirement must have the condition
$\mathit{cond}_1 \vee \mathit{cond}_2$.
In general, we cannot express $\mathit{cond}_1 \vee \mathit{cond}_2$
using our \lstinline|!Condition!| representation from above because all fields except
\lstinline|!sameGroundAlternatives!| represent conjunctive prerequisites, whereas
\lstinline|!sameGroundAlternatives!| represents disjunctive prerequisites.  Therefore, we only
support in-depth merging when the conditions are identical up to
\lstinline|!sameGroundAlternatives!| and we use the union operator to combine their
\lstinline|!sameGroundAlternatives!| fields.

This optimization may seem a bit overly specific to certain use cases, but it turns out it is
generally applicable.  The reason is that function \func{removeExt} creates requirements of the form
$(D.f: T', cond \cup (T = C_i))$
transitively for all subclasses $C_i$
of $D$
where no class between $C_i$
and $D$
defines field $f$.
Our optimization combines these requirements into a single one, roughly of the form
$(D.f: T', cond \cup (T = \bigvee_i C_i))$.
Basically, this requirement concisely states that $D$
must provide a field $f$
of type $T'$ if the original receiver type $T$ corresponds to any of the subclasses $C_i$ of $D$.

\paragraph{Incrementalization and continuous constraint solving.}
We adopt the general incrementalization strategy from co-contextual PCF~\cite{Erdweg15}: Initially,
type check the full program bottom-up and memoize the typing output for each node (including class
requirements and constraint system).  Then, upon a change to the program, recheck each node from the
change to the root of the program, reusing the memoized results from unchanged subtrees.  This way,
incremental type checking asymptotically requires only $\log n$ steps for a program with $n$ nodes.

In our formal model of co-contextual FJ, we collect constraints during type checking and solve them
at the end to yield a substitution for the unification variables.  As was discussed by
Erdweg~et~al.\ for co-contextual PCF~\cite{Erdweg15}, this strategy is inadequate for incremental
type checking, because we would memoize unsolved constraints and thus only obtain an incremental
constraint generator, but even a small change would entail that all constraints had to be solved
from scratch.
In our implementation, we follow Erdweg~et~al.'s strategy of continuously solving constraints as
soon as they are generated, memoizing the resulting partial constraint solutions.  In particular,
equality constraints that result from \func{merge} and \func{remove} operations can be solved
immediately to yield a substitution, while subtype constraints often have to be deferred until more
information about the inheritance hierarchy is available.  In the context of FJ with its nominal
types, continuous constraint solving has the added benefit of enabling additional requirement
merging, for example, because two method requirements share the same receiver type after
substitution.

\paragraph{Tree balancing.}
Even with continuous constraint solving, co-contextual FJ as defined in
Section~\ref{sec:co-cont-rules} still does not yield satisfactory incremental performance.  The
reason is that the syntax tree is deformed due to the root node, which consists of a sequence of
\emph{all} class declarations in the program.  Thus, the root node has a branching factor only bound
by the number of classes in the program, whereas the rest of the tree has a relative small branching
factor bound by the number of arguments to a method.  Since incremental type checking recomputes
each step from the changed node to the root node, the type checker would have to repeat discharging
class requirements at the root node after every code change, which would seriously impair
incremental performance.

To counter this effect, we apply tree balancing as our final optimization.  Specifically, instead of
storing the class declarations as a sequence in the root node, we allow sequences of class
declarations to occur as inner nodes of the syntax tree:
\[L \ \  :: = \ \  \overline{L} \ \OR\  \keyw{class}\ C\ \keyw{extends}\ D\ \{ \overline{C}\ \overline{f};\ K\ \overline{M}\} \]
This allows us to layout a program's class declarations structurally as in\linebreak[4]
$((((C_1\ C_2)\ C_3)\ (C_4\ C_5))\ (C_6\ C_7))$,
thus reducing the costs for rechecking any path from a changed node to the root node.  As part of
this optimization, to satisfy requirements of classes that occur in different tree nodes such as
$C_1$
and $C_6$,
we also neeed to propagate \emph{class facts} such as actual method signatures upwards.  As
consequence, we can now link classes in any order without changing the type checking result.

We have implemented an incremental co-contextual FJ type checker in Scala using the optimizations
described here.  In the following section, we present our run-time performance evaluation.


\section{Performance Evaluation}
\label{sec:evaluation}

\lstset{language=FJ}

We have benchmarked the initial and incremental run-time performance of co-contextual FJ
implementation.  However, this evaluation makes no claim to be complete, but rather is intended to
confirm the feasibility and potential of co-contextual FJ for incremental type checking.

\subsection{Evaluation on synthesized FJ programs}
\paragraph{Input data.}
We synthesized FJ programs with 40~root classes that inherit from \lstinline|!Object!|. Each root
class starts a binary tree in the inheritance hierarchy of height~$5$.
Thus, each root-class hierarchy contains $31$
FJ class declarations. In total, our synthesized programs have $31*40 + 3 = 1243$
class declarations, since we always require classes for natural numbers \lstinline|!Nat!|,
\lstinline|!Zero!|, and \lstinline|!Succ!|.

Each class has at least a field of type \lstinline|!Nat!| and each class has a single method that
takes no arguments and returns a \lstinline|!Nat!|.  We generated the method body according to one
of three schemes:
\begin{itemize}
\item \emph{AccumSuper}: The method adds the field's value of this class to the result of calling the method of the super class.
\item \emph{AccumPrev}: Each class in root hierarchy $k > 1$ has an additional field that has the type of the class at the same position in the previous root hierarchy $k-1$. The method adds the field's value of this class to the result of calling the method of the class at the same position in the previous root hierarchy $k-1$, using the additional field as receiver object.
\item \emph{AccumPrevSuper}: Combines the other two schemes; the method adds all three numbers.
\end{itemize}

\noindent
We also varied the names used for the generated fields and methods:
\begin{itemize}
\item \emph{Unique}: Every name is unique.
\item \emph{Mirrored}: Root hierarchies use the same names in the same classes, but names within a single root hierarchy are unique.
\item \emph{Override}: Root hierarchies use different names, but all classes within a single root hierarchy use the same names for the same members.
\item \emph{Mir+Over}: Combines the previous two schemes, that is, all classes in all root hierarchies use the same names for the same members.
\end{itemize}

For evaluating the incremental performance, we invalidate the memoized results for the three
\lstinline|!Nat!| classes.  This is a critical case because all other classes depend on the
\lstinline|!Nat!| classes and a change is traditionally hard to incrementalize.

\paragraph{Experimental setup.}
First, we measured the wall-clock time for the initial check of each program using our co-contextual
FJ implementation.  Second, we measured the wall-clock time for the incremental reanalysis after
invalidating the memoized results of the three \lstinline|!Nat!| classes.  Third, we measured the
wall-clock time of checking the synthesized programs on javac and on a straightforward
implementation of contextual FJ for comparison. Contextual FJ is the standard FJ described in
Section~\ref{sec:backgr-motiv}, that uses contexts and class tables during type checking. Our
implementation of contextual FJ is up to 2-times slower than javac, because it is not production
quality.  We used ScalaMeter\footnote{\url{https://scalameter.github.io/}} to take care of JIT
warm-up, garbage-collection noise, etc.  All measurements were conducted on a 3.1GHz duo-core
MacBook Pro with 16GB memory running the Java HotSpot 64-Bit Server VM build 25.111-b14 with 4GB
maximum heap space.  We confirmed that confidence intervals were small.

\paragraph{Results.}
We show the measurement results in table~\ref{tab:evaluation}. All numbers are in milliseconds. We
also show the speedups of initial and incremental run of co-contextual type checking relative to
both javac and contextual type checking.

\begin{table}[h]
  \centering
\begin{tabular}[t]{l@{}c@{\ \ }|l@{\ \ }|l@{\ \ }}
\toprule
  \textbf{Super} & \multicolumn{1}{l|}{javac / contextual} & \multicolumn{1}{c|}{co-contextual init} &  \multicolumn{1}{c}{co-contextual inc} \\
unique &           70.00 /  93.99  & 3117.73 (0.02x / 0.03x) & 23.44 (2.9x / 4x) \\
mirrored &         68.03 / 88.73 & 1860.18 (0.04x / 0.05x) & 15.17 (4.5x / 6x)\\
override &         73.18 / 107.83 & 513.44 $\;\;$(0.14x / 0.21x) & 16.92 (4.3x / 6x)\\
mir+over &         72.64 / 132.09 & 481.07 $\;\;$(0.15x / 0.27x) & 16.60 (4.4x / 8x)\\ 
\midrule
  \textbf{Prev} & \multicolumn{1}{l|}{javac / contextual} & \multicolumn{1}{c|}{co-contextual init} &  \multicolumn{1}{c}{co-contextual inc} \\
unique &           82.16 / 87.66 & 3402.28 (0.02x / 0.02x) & 23.43 (3.5x / 4x) \\
mirrored &         81.19 / 84.94 & 2136.42 (0.04x / 0.04x) & 15.46 (5.3x / 5x) \\
override &         81.51 / 120.60 & 840.14  $\;\;$(0.09x / 0.14x)  & 17.37 (4.7x / 7x) \\
mir+over &         79.71 / 120.46 & 816.16  $\;\;$(0.09x / 0.15x) & 16.61 (4.8x / 7x)\\ 
\midrule
  \textbf{PrevSuper} & \multicolumn{1}{l|}{javac / contextual} & \multicolumn{1}{c|}{co-contextual init} &  \multicolumn{1}{c}{co-contextual inc} \\
unique &           93.12 / 104.03 & 6318.69 (0.01x / 0.02x) & 26.26 (3.5x / 4x) \\
mirrored &         95.41 / 100.00 & 5014.12 (0.02x / 0.02x) & 15.71 (6.1x / 6x)\\
override &         92.88 / 130.01 & 3601.44 (0.03x / 0.04x) & 17.35 (5.4x / 7x) \\
mir+over &         93.37 / 126.57 & 3579.90 (0.03x / 0.04x) & 16.61 (5.6x / 8x) \\ 
\bottomrule
\end{tabular}
\caption{Performance measurement results with $k = 40$ root classes in \textbf{Milliseconds}. Numbers in parentheses indicate speedup relative to (javac/contextual) base lines.}\label{tab:evaluation}
\end{table}

As this data shows, the initial performance of co-contextual FJ is subpar: The initial type check
takes up to 68-times and 61-times longer than using javac and a standard contextual checker
respectively.

However, co-contextual FJ consistently yields high speedups for incremental checks.  In fact, it
only takes between $3$
and $21$
code changes until co-contextual type checking is faster overall.  In an interactive code editing
session where every keystroke or word could be considered a code change, incremental co-contextual
type checking will quickly break even and outperform a contextual type checker or javac.

The reason that the initial run of co-contextual FJ induces such high slowdowns is because the
occurrence of class requirements is far removed from the occurrence of the corresponding class
facts.  This is true for the \lstinline|!Nat!| classes that we merge with the synthesized code at
the top-most node as well as for dependencies from one root hierarchy to another one.  Therefore,
the type checker has to propagate and merge class requirements for a long time until finally
discovering class facts that discharge them.  We conducted an informal exploratory experiment that
revealed that the performance of the initial run can be greatly reduced by bringing requirements and
corresponding class facts closer together.  On the other hand, incremental performance is best when
the changed code occurs as close to the root node as possible, such that a change entails fewer
rechecks.  In future work, when scaling our approach to full Java, we will explore different layouts
for class declarations (e.g., following the inheritance hierarchy or the package structure) and for
reshuffling the layout of class declarations during incremental type checking in order to keep
frequently changing classes as close to the root as possible.

\subsection{Evaluation on real Java program}
\paragraph{Input data.}
We conduct an evaluation for our co-contextual type checking on realistic FJ programs.  We wrote
about $500$
SLOCs in Java, implementing purely functional data structures for binary search trees and red black
trees.  In the Java code, we only used features supported by FJ and mechanically translated the Java
code to FJ.  For evaluating the incremental performance, we invalidate the memoized results for the
three \lstinline|!Nat!| classes as in the experiment above.

\paragraph{Experimental setup.} Same as above.

\paragraph{Results.}
We show the measurements in milliseconds for the 500 lines of Java code. 

\begin{tabular}[t]{c@{\ \ }|l@{\ \ }|l@{\ \ }}
\toprule
  \multicolumn{1}{l|}{javac / contextual} & \multicolumn{1}{c|}{co-contextual init} &  \multicolumn{1}{c}{co-contextual inc} \\
    14.88 /  3.74  & 48.07 (0.31x / 0.08x) & 9.41 (1.6x / 0.39x) \\
\bottomrule
\end{tabular}
\ \\

Our own non-incremental contextual type checker is surprisingly fast compared to javac, and not even
our incremental co-contextual checker gets close to that performance.  When comparing javac and the
co-contextual type checker, we observe that the initial performance of the co-contextual type
checker improved compared to the previous experiment, whereas the incremental performance degraded.
While the exact cause of this effect is unclear, one explanation might be that the small input size
in this experiment reduces the relative performance loss of the initial co-contextual check, but
also reduces the relative performance gain of the incremental co-contextual check.



\section{Related work}
\label{sec:related-work}

The work presented in this paper on co-contextual 
type checking for OO programming languages, specifically for Featherweight Java,
is inspired by the work on co-contextual type checking for 
PCF~\cite{Erdweg15}.
OO languages and FJ impose features like nominal typing, subtype polymorphism, and inheritance that are not covered in the work for co-contextual PCF~\cite{Erdweg15}.
In particular, here we developed a solution for merging and removing requirements in presence of nominal typing.

Introducing type variables as placeholders for the actual types of variables, classes, fields, methods is a known technique in type inference~\cite{pierce2002types,PierceT00}. The difference is that we introduce a fresh class variable for 
 each occurrence of a method $m$ or fields in different
branches of the typing derivation. Since fresh class variables are generated independently, no
coordination is required while moving up the derivation tree, ensuring context and class table independence. 
Type inference uses the context to coordinate type checking of $m$ in different branches, by using the same type variable.  In contrast to type inference where context and class table are available, we remove them (no actual context and class table). Hence, in type inference inheritance relation between classes and members of the classes are given, whereas in co-contextual FJ we establish these relations through requirements. That is, classes are required to have certain members with unknown types and unknown inheritance relation, dictated from the surrounding program. \\
 Also, in contrast to bidirectional type checking~\cite{Christiansen13,Dunfield13} that uses two sets of typing rules one for inference and one for checking, we use one set of co-contextual type rules, and the direction of type checking is all oriented bottom-up; types and requirements flow up. As in type inference, bidirectional type checking uses context to look up variables. Whereas co-contextual FJ has no context or class table, it uses requirements as a structure to record the required information on fields, methods, such that it enables resolving class variables of the required fields, methods to their actual types.

Co-contextual formulation of type rules for FJ is related to the work on
principal typing~\cite{Jim96,Wells02}, and especially to principal typing 
for Java-like languages~\cite{AnconaZ04}.
A principal typing~\cite{AnconaZ04} of each fragment (e.g., modules, packages, or classes) is associated 
with a set of type constraints on classes, which represents all other possible typings and
can be used to check compatibility in all possible contexts. That is, principle typing finds the strongest type 
of a source fragment in the weakest context. This is used for type inference and separate compilation in FJ. 
They can deduce exact type constraints using a type inference algorithm. 
We generalize this and do not only infer requirements on classes but also on method parameters and the current class.
Moreover, we developed a duality relation between the class table and class requirements that enables the systematic development of co-contextual type systems for OO languages beyond FJ.

Related to our co-contextual FJ is the formulations 
used in the context of compositional compilation~\cite{AnconaDDZ05} (continuation of the work on principal 
typing~\cite{AnconaZ04}) and the compositional explanation of type errors~\cite{Chitil01}. 
This type system~\cite{AnconaDDZ05} partially eliminate the class table, namely only inside a fragment, and 
 does not eliminate the context. Hence, type checking of parameters and \keyw{this} is coordinated and subexpressions are 
 coupled through dependencies on the usage of context. 
 In our work, we eliminate both class table (not only partially) and context, therefore all dependencies 
are removed. By doing so we can enable compositional compilation for
individual methods. 
To resolve the type constraints on classes, compositional compilation~\cite{AnconaZ04} needs a linker
in addition to an inference algorithm (to deduce exact type constraints), whereas, we use a 
constraint system and requirements.
We use duality to derive a co-contextual type system for FJ and we also ensure that both 
formulations are equivalent (\ref{sec:theorems-equivalence}). That is, we ensure that an expression, method, 
class, or program is well-typed in FJ if and only if it is well-typed in co-contextual FJ, and that all requirements are fulfilled. In contrast, 
compositional compilation rules do not check whether the inferred collection of constraints on classes 
is satisfiable; they actually allow to derive judgments for any fragment, even for those that are not statically correct. 

Refactoring for generalization using type constraints~\cite{Tip03,Tip11}
is a technique Tip~et~al.~used to manipulate types and class hierarchies to enable refactoring. 
That work uses variable type constraints
 as placeholders for changeable declarations.
They use the constraints to restrict when a refactoring can be performed. 
Tip~et~al.~are interested to find a way to represent the actual class hierarchy 
and to use constraints to have a safe refactoring and a well-typed program after refactoring. 
The constraint system used by Tip~et~al.~is specialized to refactoring, 
because different variable constraints and solving techniques are needed 
In contrast, in our work, we use class variables
 as placeholders for the actual type of required extends, constructors, fields, and methods of a class, in the lack of the class table. 
We want to gradually learn the information about the class hierarchy. 
We are interested in the type checking technique and how to co-contextualize it 
and use constraints for type refinement. 

 Adapton~\cite{Hammer14} is a programming language where the runtime system traces memory reads and writes and selectively replays dependent computations when a memory change occurs. In principle, this can be used to define an “incremental” contextual type checker. However, due to the top-down threading of the context, most of the computation will be sensitive to context changes and will have to be replayed, thus yielding unsatisfactory incremental performance. Given a co-contextual formulation as developed in our paper, it might be possible to define an efficient implementation in Adapton. 

 The works on smart/est recompilation~\cite{Shao93,Tichy86} has a different purpose from ours, namely to achieve separate compilation they need algorithms for the inference and also the linking phase which are specific to SML. In contrast, we use duality as a guiding principle to enable the translation from FJ to co-contextual FJ. This technique allows us to do perform a systematic (but yet not mechanical) translation from a given type system to the co-contextual one. Our type system facilitates incremental type checking because we decouple the dependencies between subexpressions and the smallest unit of compilation is any node in the syntax tree. Moreover, we have investigated optimizations for facilitating the early solving of requirements and constraints.


\section{Conclusion and Future Work}
\label{sec:concl-future-work}

In this paper, we presented a co-contextual type system for FJ by  
transforming the typing rules in the traditional formulation into a form that replaces 
top-down propagated contexts and class tables with bottom-up 
propagated \emph{context and class table requirements}. We used duality as a technique to derive 
co-contextual FJ's typing rules from FJ's typing rules. To make the correspondence between class 
table and requirements, we presented class tables that are gradually extended with information from the class declarations,
and how to map operations on contexts and class tables to their dual operations on context and class table requirements.
To cover the OO features of nominal typing, subtype polymorphism, and implementation inheritance,
co-contextual FJ uses conditional requirements, inequality conditions, and conditional constraints. 
Also, it changes the set of requirements by adding requirements 
with the different receiver from the ones defined 
by the surrounding program, 
in the process of merging and removing requirements as the type checker moves upwards and 
discovers class declarations. 
We proved the typing equivalence of expressions, 
methods, classes, and programs between FJ and co-contextual FJ.  

The co-contextual formulation of FJ
typing rules enables incremental type checking 
because it removes dependencies between subexpressions.
We implemented an incremental co-contextual FJ type checker. Also, we evaluated its performance on synthesized programs up to $1243$~FJ classes and 500 SLOCs of java programs.

There are several interesting directions for future work.
In short term, we want to explore parallel co-contextual type checking for FJ. 
A next step would be to develop a co-contextual
type system for full Java. 
Another interesting direction is to investigate co-contextual formulation for gradual type systems.


\paragraph{Acknowledgments.}
This work has been supported by the European Research Council, grant No. 321217.

\bibliographystyle{plainurl} 
\bibliography{bib}

\clearpage
\appendix
\section{Auxiliary definitions; merge, add, remove}
\label{sec:appendixA}

%
%
We give the definition of $merge_{CR}$ for all cases of the clause definition
\footnote{Merge operation for optional methods is the same as merge for methods.}.
\begin{figure}[htbp]
    \begin{flalign*}
    	& {merge_{CR}(CR_1, CR_2) = CR|_S} \\
        & where\; CR = \{((CR_1 \setminus (\seq{T_1.extends: T_2, cond_1} \cup \seq{T_1.init(\seq T_1), cond_1} \cup \seq{T_1.f: T_2, cond_1} \\
        & \hskip6em  \cup \seq{T_1.m: \seq T_1\rightarrow T_2, cond_1} ) \cup ((CR_2 \setminus (\seq{T_2.extends: T_3, cond_2} \\
        &\hskip6em \cup \seq{T_2.init(\seq T_2), cond_2} \cup \seq{T_2.f: T_3, cond_2} \cup 
        \seq{T_2.m: \seq T_2\rightarrow T_3, cond_2} )\\
        & \hskip6em  \cup CR_e \cup CR_k \cup CR_f \cup CR_m \}\\
        & \hskip4em S = S_e \cup S_k \cup S_f \cup S_m
    \end{flalign*}
    \vskip-2.5em
  	\begin{flalign*}
		& where \hskip1em CR_e =\{ (T_1.extends : T'_1, cond_1 ,\ (T_1 \neq T_2)) ,\ (T_2.extends : T'_2, cond_2 \cup \\       
		& \hskip7em (T_1 \neq T_2)), (T_1.extends : T'_1, (cond_1 \cup cond_2  \cup (T_1 = T_2)) \\
        & \hskip7em  \WHERE (T_1.extends : T'_1, cond_1) \in CR_1  \wedge (T_2.extends : T'_2, cond_2) \in CR_2\}\\
        & \hskip5em S_e =\{(T'_1 = T'_2\ if\ T_1 = T_2)  \WHERE (T_1.extends : T'_1, cond_1) \in CR_1 \\
        & \hskip7em \wedge (T_2.extends : T'_2, cond_2) \in CR_2\}
	\end{flalign*}
	\vskip-2.5em
	\begin{flalign*}
		& where \hskip1em CR_k =\{ (T_1.init(\seq{T_1}), cond_1 \cup (T_1\neq T_2)) \cup(T_2.init(\seq{T_2}), cond_2 \cup (T_1 \neq T_2 )) \\
		& \hskip7em (T_1.init(\seq{T_1}), cond_1 \cup cond_2 \cup (T_1 = T_2)) \WHERE (T_1.init(\seq{T_1}), cond_1) \in CR_1 \\
        & \hskip7em \wedge (T_2.init(\seq{T_2}), cond_2) \in CR_2 \}\\
        & \hskip5em S_k =\{(\seq{T_1} = \seq{T_2}\ if\ T_1 = T_2)  \WHERE (T_1.init(\seq{T_1}), cond_1) \in CR_1\\
        & \hskip7em \wedge (T_2.init(\seq{T_2}), cond_2) \in CR_2 \}
    \end{flalign*}
    \vskip-2.5em
    \begin{flalign*}
    	& where \hskip1em CR_f =\{ (T_1.f : T'_1, cond _1\cup (T_1 \neq T_2)) \cup (T_2.f : T'_2, cond_2 \cup (T_1 \neq T_2)) \\
        & \hskip7em \cup (T_1.f : T'_1, cond_1 \cup cond_2 \cup (T_1 = T_2))  \WHERE (T_1.f : T'_1, cond_1) \in CR_1 \\
        & \hskip7em \wedge (T_2.f : T'_2, cond_2) \in CR_2 \}\\
        & \hskip5em S_f =\{(T'_1 = T'_2\ if\ T_1 = T_2)  \WHERE (T_1.f : T'_1, cond_1) \in CR_1 \\
        & \hskip7em \wedge (T_2.f : T'_2, cond_2) \in CR_2 \}
    \end{flalign*}
    \vskip-2.5em
    \begin{flalign*}
    	& where \hskip1em CR_m =\{ (T_1.m: \seq{T_1}\rightarrow  T'_1, cond_1 \cup (T_1 \neq T_2)) \cup (T_2.m:\seq{T_2}\rightarrow T'_2, cond_2  \\
        & \hskip7em  (T_1 \neq T_2)) \cup (T_1.m: \seq{T_1}\rightarrow T'_1, cond_1 \cup cond_2 \cup (T_1 = T_2))  \WHERE (T_1.\\
        & \hskip7em m :  \seq{T_1}\rightarrow T'_1, cond_1) \in CR_1\wedge (T_2.m : \seq{T _2}\rightarrow T'_2, cond_2) \in CR_2  \}\\
        & \hskip5em S_m =\{(T'_1 = T'_2\ if\ T_1 = T_2) \cup (\seq{T_1} = \seq{T_2}\ if\ T_1 = T_2) \WHERE (T_1.m: \seq{T_1}\rightarrow T'_1, \\
        & \hskip7em   cond_1) \in CR_1 \wedge (T_2.m : \seq{T_2} \rightarrow T'_2, cond_2) \in CR_2\} 
     \end{flalign*}
     \vspace{-1cm}
\end{figure}
\newpage

Next we define add and remove operations for all cases of the clause definition. 
\begin{figure}[htbp]
\begin{flalign*}
&addExt(class\ C\ \keyw{extends}\ D, CT)= ( C extends D)\cup CT \\
&removeExt(class\ C\ \keyw{extends}\ D, CR) = CR' |_S \\
&where\ CR'= 
\!\begin{aligned}[t]
&\{(T.extends: T',cond \cup (T \neq C)) \WHERE (T.extends: T', cond) \in CR \} \\
&\cup \{(T.m: \seq{T}\rightarrow T', cond\cup(T\neq C)) \\
& \hskip0.7em \cup(D.m:\seq{T} \rightarrow T', cond \cup(T = C))\WHERE  (T.m: \seq{T}\rightarrow T', cond) \in CR \}\\
&\cup \{(T.m: \seq{T}\rightarrow T', cond \cup (T\neq C))_{opt} \\
& \hskip0.7em \cup (D.m: \seq{T} \rightarrow T', cond \cup (T = C))_{opt} \\
& \hskip0.7em \WHERE  (T.m: \seq{T}\rightarrow T', cond)_{opt} \in CR \} \\
&\cup \{(T.f:  T', cond \cup (T\neq C)) \cup (D.f: T', cond \cup (T = C))\\
& \hskip0.7em \WHERE  (T.f: T', cond) \in CR \} \\
& \hskip-2em S = \{(T' = D\ if\ T= C)  \WHERE (T.extends :T', \cond) \in CR \} 
\end{aligned}
\end{flalign*}
\vskip-1em
\begin{flalign*}
&addCtor(C,  (\seq{D}\ \seq{g}, \seq{C}\ \seq{f}), CT) = (\constrCT{C}{\seq{D}; \seq{C}})\cup CT \\
&removeCtor(C,(\seq{D}\ \seq{g}, \seq{C}\ \seq{f}), CR) = CR'|_{S} \\
&where \hskip1em CR' = \{ (\constrCT{T}{\seq{T}}),cond \cup (T \neq C)) \WHERE (\constrCT{T}{\seq{T}}), \cond) \in CR  \} \\
& \hskip7em \cup (CR\setminus \seq{(\constrCT{T}{\seq{T}}), \cond)} )\\
& \hskip5em S = \{(\seq{T} = \seq{D}\ \seq{C}\ if\ T = C)  \WHERE (\constrCT{T}{\seq{T}}), \cond) \in CR \} 
\end{flalign*}
\vskip-2em
\begin{flalign*}
&addFs(C, \seq{C_f}\ \seq{f}, CT) = \overline{C.f : C_f} \cup CT \\
&removeF(C, C_f\  f, CR) = CR'|_{S} \\
&where \hskip1em CR' = \{(T.f : T',cond \cup (T \neq C)) \WHERE (T.f : T', \cond) \in CR \} \\
&\hskip7em \cup (CR\setminus \seq{(T.f : T', \cond)})\\
& \hskip5em S = \{(T' = \ C_f\ if\ T= C)  \WHERE (T.f : T', \cond) \in CR \}\\
&removeFs(C,\overline{C_f\  f}, CR) = CR'|_{S} \\
&where \hskip1em CR' = \{ CR_f \WHERE (C_f\ f) \in \overline{C_f\ f}\wedge \mathit{removeF}(CR,C, C_f\  f) = CR_f|_{S_f} \} \\
& \hskip5em S = \{S_f \WHERE  (C_f\ f) \in \overline{C_f\ f}  \wedge\mathit{removeF}(CR,C, C_f\  f) = CR_f|_{S_f}\}
\end{flalign*}
\vskip-2em
\begin{flalign*}
& \func{addMs}(C, \overline{M}, CT) = \overline{C.m: \seq{C} \rightarrow C'} \cup CT \\
&\func{removeM}( C , C'\ m(\seq{C}\ \seq{x}) \ \{\keyw{return} \ e\}, CR) = CR'|_{S} \\
&where \; CR' = \{(T.m:\seq{T}\rightarrow T',cond \cup (T \neq C)) \WHERE (T.m :\seq{T}\rightarrow T', \cond) \in CR \}\\ 
& \hskip6em \cup (CR\setminus (T.m :\seq{T}\rightarrow T', \cond))\\
& \hskip2em S = \{(T' =  C' \ if\ T= C) \cup (\seq{T = C}\ if\ T= C) \WHERE (T.m :\seq{T}\rightarrow T', \cond) \in CR \} \\
\end{flalign*}
\vspace{-1cm}
\end{figure}
\begin{figure}[htbp]
\begin{flalign*}
&\func{removeMs}(C, \seq{M}, CR ) = CR'|_{S}  \\
& where\ CR' = \{ CR_m \WHERE (C'\ m(\seq{C}\ \seq{x}) \ \{return \ e\}) \in\overline{M} \\
& \hskip6em \wedge \func{removeM}(C, C'\ m(\seq{C}\ \seq{e}) \ \{\keyw{return} \ e\}, CR) = CR_m|_{S_m} \}   \\
& \hskip4em S = \{S_m \WHERE (C'\ m(\seq{C}\ \seq{x}) \ \{return \ e\}) \in\overline{M}  \\
& \hskip6em \wedge \func{removeM}(C, C'\ m(\seq{C}\ \seq{x}) \ \{\keyw{return} \ e\}, CR) = CR_m|_{S_m} \} \\
& removeOptM(C, C'\ m(\seq{C}\ \seq{x}) \ \{return \ e\}, CR) = CR'|_{S} \\
& where \hskip1em CR' = \{(T.m :\seq{T}\rightarrow T',cond \cup (T \neq C))_{opt} \WHERE (T.m :\seq{T}\rightarrow T', \cond)_{opt} \in CR \} \\
& \hskip7em \cup (CR\setminus (T.m :\seq{T}\rightarrow T', \cond_{opt}))\\
& \hskip2em S = \{(T' =  C' \ if\ T= C) \cup (\seq{T} = \seq{C}\ if\ T= C) \WHERE (T.m :\seq{T}\rightarrow T', \cond)_{opt} \in CR \} \\
& removeOptMs(C, \overline{M}, CR) = (CR' \cup (CR\setminus CR'))|_{S} \\
& where \hskip1em CR' = \{ CR_m \WHERE (C'\ m(\seq{C}\ \seq{x}) \ \{return \ e\}) \in\overline{M} \\
& \hskip7em \wedge \mathit{removeOptM}(CR,C, C'\ m(\seq{C}\ \seq{e}) \ \{return \ e\}) = CR_m|_{S_m} \} \\
& \hskip5em S = \{S_m \WHERE (C'\ m(\seq{C}\ \seq{x}) \ \{return \ e\}) \in\overline{M} \\
& \hskip7em \wedge removeOptM(CR,C, C'\ m(\seq{C}\ \seq{x}) \ \{return \ e\}) = CR_m|_{S_m} \}
\end{flalign*}
\vspace{-1cm}
\end{figure}
 
\newpage
  
\section{Equivalence of Contextual and Co-Contextual FJ}  
  \label{sec:appendixB}
 In here we describe a detailed proof of typing equivalence between FJ and co-contextual FJ. 
Co-contextual FJ is constraint based type system. We present the formal definitions for substitution, and Figures \ref{fig:proj}, \ref{fig:projL} give formal definition how to retrieve the immediate subclass relation $\Sigma$ from rep. class table, and a list of class declaration. That is, a projection from class table/list of declarations to a set of tuples, which represent the relation between two classes in an extends clause. 
\begin{figure*}[h]
  \raggedright
  {
  \begin{gather*}
  \inference   
   {}
  {\projExt{\emptyset}=\emptyset}
  \hskip1em
  \inference
    {}
  {\projExt{C\ \keyw{extends}\ D} =(C, D)}
  \hskip1em
      \inference
        {}
        {\projExt{C.f: C'} =\emptyset}
\\[2ex]
      \inference
        {}
        {\projExt{C.m(): \seq C \rightarrow C'} =\emptyset}
   \hskip1em 
      \inference
        {}
        {\projExt{C.init(\seq C)} =\emptyset}
    \\[2ex]
      \inference
        {\projExt{CTcls_1}=\Sigma_1 & \projExt{CTcls_2} =\Sigma_2}
        {\projExt{CTcls_1 \cup CTcls_2} =\Sigma_1 \cup \Sigma_2}
     \end{gather*}
  }
  \caption{Projection of Class Table to Extends.}
  \label{fig:proj}
\end{figure*}
\begin{figure*}[h]
  \raggedright
  {
  \begin{gather*}
  \inference   
   {}
  {\projExt{\emptyset}=\emptyset}
  \hskip1em
  \inference
    {}
  {\projExt{\keyw{class}\ C\ \keyw{extends}\ D\ \{ \overline{C}\ \overline{f};\ K\ \overline{M}\}} =(C, D)}
      \\[2ex]
      \inference
        {\projExt{L_1}=\Sigma_1 & \projExt{L_2} =\Sigma_2}
        {\projExt{L_1; 	L_2} =\Sigma_1 \cup \Sigma_2}
     \end{gather*}
  }
  \caption{Projection of Class Declarations to Extends.}
  \label{fig:projL}
\end{figure*}



 \begin{definition}[Subtyping relative to $\Sigma$]
   Let $\Sigma$ be a binary relation on class names, $C$, $D$ class names. Then $C$ is a subtype of
   $D$ \emph{relative to $\Sigma$} ($C \subtpe_\Sigma D$), if and only if $(C,D)\in\Sigma^*$, where
   $\Sigma^*$ is the reflexive, transitive closure of $\Sigma$.
 \end{definition}
 
\begin{definition}[Substitution $\sigma$]
  Given sets of context and class requirements $R,\ CR$, $\sigma$ is a set of mappings from class variables to class types, i.e., $\sigma = \{ U\smap C \WHERE U \in freshU(R) \cup freshU(CR) \}$.
 \end{definition}

 \begin{definition}[Constraint Satisfaction]
   Let $s$ be a constraint on class types, $\sigma$ a substitution from class variables to class types, $\Sigma$ a binary relation
   on class names. The pair $(\Sigma, \sigma)$ \emph{satisfies $s$} ($\func{sat}(\Sigma, \sigma, s)$) if and only if one
   of the following holds:
    \begin{enumerate}
   \item If $s = (T \subtpe T')$, then $T\sigma \subtpe_\Sigma T'\sigma$.
    \item If $s = (T = T')$, then $T\sigma = T'\sigma$.
    \item If $s = (T = T'\ \term{if}\ \mathit{cond})$ and for all $s'\in \mathit{cond}$, $\func{sat}(\Sigma, \sigma, s')$ then $T\sigma = T'\sigma$.
    \item If $s = (T \neq T')$, then $T\sigma \neq T'\sigma$.
   \item If $s = (T \not{\subtpe} T')$, then $T\sigma \not{\subtpe_\Sigma} T'\sigma$.
   \end{enumerate}
   
 \end{definition}

\begin{assump}[Properties of $\func{solve}$]
   Let $\Sigma$ be a binary relation on class names, $S$ a set of constraints on class types:
   \begin{enumerate}    
   \item $\func{solve}(\Sigma, S)$ terminates.
   \item If $\func{solve}(\Sigma, S) = \sigma$.  Then for all $s \in S$,$\func{sat}(\Sigma, \sigma, s)$.
     
   \item If $\func{solve}(\Sigma, S) = \bot$. Then there exists $s \in S$, where $\func{sat}(\Sigma, \sigma, s)$ does not hold.
\end{enumerate}
\end{assump}

\begin{definition}[Ground context requirement]
 $\sigma(R)$ is ground, if for all $(x: T) \in R$ then $\sigma(T)$ is ground.
 \end{definition}
 
 \begin{definition}[Ground class table requirements]
 $\sigma(CR)$ is ground, if for all $(CReq, cond) \in CR$ then $\sigma(CReq)$ is ground and $\sigma(cond)$ in ground.
  \end{definition}
  
   \begin{definition}[Ground class requirement]
   	\begin{equation}
 \sigma(CReq)\ ground =  \begin{cases}
 	\sigma(T.extends : T')\ ground & \mbox{if } (CReq)= (T.extends T') \wedge\\
 	        &    \quad  \sigma(T),\ \sigma(T')\ ground  \\
 	\sigma(T.f: T')\ ground & \mbox{if } (CReq) = (T.f: T') \wedge \\
 	  &    \quad  \sigma(T),\ \sigma(T')\ ground  \\
    \sigma(T.m: \seq{T}\rightarrow T' )\ ground & \mbox{if } (CReq) = (T.m : \seq{T}\rightarrow T' )\wedge \\
     &    \quad  \sigma(T),\ \sigma(\seq{T})\ ground \\
      &    \quad  \wedge \sigma(T')\ ground  \\
   \sigma(T.init:\seq{T})\ ground  & \mbox{if } (CReq) = (T.init(\seq{T})) \wedge\\
     &    \quad  \sigma(T),\ \sigma(\seq{T})\ ground
 \end{cases}
\end{equation}
\end{definition}

\begin{definition}[Ground conditions]
$\sigma(cond)$ is ground, if for all $(T = T'), (T'' \neq T^*) \in cond$ then $\sigma(T)$, $\sigma(T')$, $\sigma(T'')$, $\sigma(T^*)$ are ground.
\end{definition}
 
 \begin{definition}[Ground Solution $\sigma$]
  For a given type $T$, a set of constraints $S$, where $\sigma = solve(S)$, we lift substitution $\sigma$ to sets of context requirements $R$, class requirements $CR$ and $\sigma$ is a ground solution if:
   \begin{enumerate}[1)]
 	\item $\sigma(T)$ is ground 
 	\item $\sigma(R)$ is ground 
 	\item $\sigma(CR)$ is ground 
 \end{enumerate} 	
 \end{definition}  
  
The two first rules of Figure.\ref{fig:satisfy} define the field lookup and extends lookup. The other rules formally define the relation between the class table and class table requirements. We assume that class table requirements are ground.

  \begin{figure}[t]
  \raggedright
  {
  \begin{gather*}
  \inference[\rulename{field-lookup}]
    {C.f_i : C_i \in fields(C, CT)}
  {field(f_i, C, CT) = C_i}
  \hskip1em
  \inference[\rulename{extends}]
  {(C.extends = D) \in CT}
  {extends(C, CT) = D}
  \\[2ex]
      \inference[\rulename{S-Extend}]
        {(C.extends = D) \in CT}
        {CT \ satisfy\ (C.extends : D, cond)}
\\[2ex]
      \inference[\rulename{S-Constructor}]
        {fields(C, CT) = \overline{C.f: C_f}} 
        {CT \ satisfy\ (C.init(\seq{C_f}), cond)}
\\[2ex]
      \inference[\rulename{S-Field}]
        {field(f, C, CT) = C'}
         {CT \ satisfy\  (C.f : C', cond)}
\\[2ex]
  \inference[\rulename{S-Method}]
        { if\ mtype(m, C, CT) = \seq{C}\rightarrow C'}
         {CT \ satisfy\  (C.m : \seq{C}\rightarrow C', cond)}
\\[2ex]
 \inference[\rulename{Satisfy}]
{ (cond \ hold \Rightarrow CT\ satisfy\ (CReq, cond) ) & \forall (CReq, cond) \in CR }
  {CT \ \textit{satisfy } CR}
  \end{gather*}
  }
  \caption{Judgment for Satisfy.}
  \label{fig:satisfy}
\end{figure}



\begin{lemma}
\label{lem:merge-cons}
  Let $merge_R(R_1, R_2)=R|_S$, $\Gamma \supseteq \sigma_1(R_1)$, $\Gamma
  \supseteq \sigma_2(R_2)$, and $\sigma_1(R_1)$,  $\sigma_2(R_2)$  are ground.
  Then $\sigma_1\circ\sigma_2$ solves $S$.
\end{lemma}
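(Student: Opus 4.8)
The statement concerns $\func{merge}_R$, the merge operation on context requirements (variable-to-type bindings), which is inherited unchanged from co-contextual PCF. The goal is: if two requirement sets $R_1, R_2$ merge to $R|_S$, and substitutions $\sigma_1, \sigma_2$ each witness that a common context $\Gamma$ satisfies $R_1$ and $R_2$ respectively (with ground images), then the composite $\sigma_1 \circ \sigma_2$ satisfies every constraint in $S$. First I would unfold the definition of $\func{merge}_R$ from Section~\ref{sec:context-requirements}: $\func{merge}_R(R_1, R_2) = R|_S$ where $R$ contains all bindings $x \ofType T$ from $R_1$ and $R_2$ (taking one representative when $x$ occurs in both), and $S$ is exactly the set of equality constraints $\{T_1 = T_2 \WHERE (x \ofType T_1) \in R_1 \wedge (x \ofType T_2) \in R_2\}$ ranging over the shared variables $x \in \dom{R_1} \cap \dom{R_2}$.

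**Key steps.** So let $s = (T_1 = T_2) \in S$ be arbitrary, arising from some $x$ with $(x \ofType T_1) \in R_1$ and $(x \ofType T_2) \in R_2$. I need $(\sigma_1 \circ \sigma_2)(T_1) = (\sigma_1 \circ \sigma_2)(T_2)$. The hypotheses $\Gamma \supseteq \sigma_1(R_1)$ and $\Gamma \supseteq \sigma_2(R_2)$ give $(x \ofType \sigma_1(T_1)) \in \Gamma$ and $(x \ofType \sigma_2(T_2)) \in \Gamma$; since $\Gamma$ is a function (a well-formed typing context binds each variable at most once), $\sigma_1(T_1) = \sigma_2(T_2)$. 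Call this common ground class type $C$. Now I use groundness: since $\sigma_1(R_1)$ is ground, $\sigma_1(T_1) = C$ is ground, hence $\sigma_2(C) = C$ and so $(\sigma_1 \circ \sigma_2)(T_1) = \sigma_2(\sigma_1(T_1)) = \sigma_2(C) = C$; symmetrically, $\sigma_2(T_2) = C$ is ground so $\sigma_1(\sigma_2(T_2)) = C$, i.e.\ $(\sigma_1 \circ \sigma_2)(T_2)$ under the other composition order is also $C$. Because both $\sigma_i(T_i)$ are ground, the two composition orders agree here, so $(\sigma_1 \circ \sigma_2)(T_1) = C = (\sigma_1 \circ \sigma_2)(T_2)$, which is $\func{sat}(\Sigma, \sigma_1 \circ \sigma_2, s)$ for any $\Sigma$. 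Since $s$ was arbitrary, $\sigma_1 \circ \sigma_2$ solves $S$.

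**Main obstacle.** The only subtlety I anticipate is the bookkeeping around composition order: $\sigma_1 \circ \sigma_2$ as written applies $\sigma_1$ first then $\sigma_2$ (or the reverse, depending on convention), and in general substitution composition is not commutative. The groundness assumptions on $\sigma_1(R_1)$ and $\sigma_2(R_2)$ are precisely what collapses this issue — a ground type is fixed by any further substitution — so the argument reduces to the ``functionality of $\Gamma$'' observation above. I would state explicitly which composition convention is in force and note that groundness makes the choice immaterial for the types appearing in $R_1$ and $R_2$. No induction is needed; this is a direct pointwise check over the constraints in $S$.
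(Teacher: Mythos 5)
Your proposal is correct and follows essentially the same route as the paper's proof: unfold $\func{merge}_R$ to obtain $S = \{R_1(x) = R_2(x) \WHERE x \in \dom{R_1} \cap \dom{R_2}\}$, use $\Gamma \supseteq \sigma_i(R_i)$ together with the functionality of $\Gamma$ to get $\sigma_1(R_1(x)) = \Gamma(x) = \sigma_2(R_2(x))$, and invoke groundness to conclude that the composite substitution agrees with each $\sigma_i$ on the relevant types. Your explicit remark that groundness is what makes the composition order immaterial is a point the paper leaves implicit, but it is the same argument.
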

\begin{proof}
  By the definition of $merge_R$, $S = \{R_1(x) = R_2(x) \WHERE x \in \dom{\reqs_1} \cap \dom{\reqs_2}\}$.
  Since $\Gamma \supseteq \sigma_i(R_i)$, we know $\Gamma(x) = \sigma_i(R_i(x))$ for all $x \in \dom{R_i}$.
  In particular, $\Gamma(x)=\sigma_1(R_1(x))=\sigma_2(R_2(x))$ for all $x \in \dom{R_1}\cap\dom{R_2}$.
  Thus, $\sigma_1\circ\sigma_2$ solves $C$ because
  $(\sigma_1\circ\sigma_2)(R_1(x))=\sigma_1(R_1(x))=\sigma_2(R_2(x))=(\sigma_1\circ\sigma_2)(R_2(x))$ for all $x \in \dom{R_1}\cap\dom{R_2}$, because 
  $\sigma_1(R_1)$ and $\sigma_2(R_2)$ are ground.
\end{proof}

\begin{lemma}
\label{lem:mergeCR-cons}
  Let $merge_{CR}(CR_1, CR_2)=CR|_S$, $\sigma_1(CR_1)$, $\sigma_2(CR_2)$ are ground, and $CT \satisfy\ \sigma_1(CR_1)$, $CT
  \satisfy\ \sigma_2(CR_2)$.
  Then $\sigma_1\circ\sigma_2$ solves $S$.
\end{lemma}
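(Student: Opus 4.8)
The plan is to mirror the proof of Lemma~\ref{lem:merge-cons}, adapted to the richer structure of class requirements and their conditions. First I would unfold the definition of $\func{merge}_{CR}$ from Appendix~\ref{sec:appendixA}: the emitted constraint set decomposes as $S = S_e \cup S_k \cup S_f \cup S_m$, one block per kind of overlapping clause (inheritance, constructor, field, method). Since each block is produced by the same pattern, it suffices to fix an arbitrary $s \in S$ and treat one representative case in detail --- say a method constraint $s = (T_1' = T_2'\ \mathit{if}\ T_1 = T_2)$ (and, analogously, the parameter-list version $\seq{T_1 = T_2}\ \mathit{if}\ T_1 = T_2$), which arises from a pair $(T_1.m : \seq{T_1}\rightarrow T_1', cond_1) \in CR_1$ and $(T_2.m : \seq{T_2}\rightarrow T_2', cond_2) \in CR_2$ on the same method name $m$. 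The \keyw{extends}, constructor, and field cases then go through verbatim with rules \rulename{S-Extend}, \rulename{S-Constructor}, \rulename{S-Field} in place of \rulename{S-Method}.

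Next I would verify $\func{sat}(\Sigma, \sigma_1\circ\sigma_2, s)$ directly from the definition of constraint satisfaction (the conditional-equality case). As in Lemma~\ref{lem:merge-cons}, the bottom-up discipline makes the fresh-variable domains of $\sigma_1$ and $\sigma_2$ disjoint, and $\sigma_i(CR_i)$ is ground by hypothesis, so $\sigma_1\circ\sigma_2$ agrees with $\sigma_1$ on the class types occurring in $CR_1$ and with $\sigma_2$ on those in $CR_2$; in particular each $\sigma_i(T_i)$, $\sigma_i(T_i')$, $\sigma_i(\seq{T_i})$ is a ground class name. If the guard fails, i.e.\ $\sigma_1(T_1) \neq \sigma_2(T_2)$, the constraint is vacuously satisfied. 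Otherwise $\sigma_1(T_1) = \sigma_2(T_2) = D$ for a ground class name $D$, and I would argue that the two overlapping requirements on $D.m$ are then \emph{live} --- their conditions hold under the respective substitutions --- so $CT \satisfy\ \sigma_1(CR_1)$ and $CT \satisfy\ \sigma_2(CR_2)$ together with rule \rulename{S-Method} give $\func{mtype}(m, D, CT) = \sigma_1(\seq{T_1}) \rightarrow \sigma_1(T_1')$ and $\func{mtype}(m, D, CT) = \sigma_2(\seq{T_2}) \rightarrow \sigma_2(T_2')$. Since $\func{mtype}$ is a partial function, the two signatures coincide, whence $\sigma_1(T_1') = \sigma_2(T_2')$ and $\sigma_1(\seq{T_1}) = \sigma_2(\seq{T_2})$, i.e.\ $\sigma_1\circ\sigma_2$ satisfies $s$. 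Ranging over all $s \in S$ then yields that $\sigma_1\circ\sigma_2$ solves $S$.

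The main obstacle is the liveness step just sketched: the constraint $(T_1' = T_2'\ \mathit{if}\ T_1 = T_2)$ in Figure~\ref{fig:mergeM} is guarded only by $T_1 = T_2$, not by $cond_1$ or $cond_2$, so one must show that whenever $\sigma(T_1) = \sigma(T_2)$ holds, the conditions $\sigma_1(cond_1)$ and $\sigma_2(cond_2)$ of the merged requirements also hold --- otherwise $CT \satisfy\ \sigma_i(CR_i)$, which by rule \rulename{Satisfy} only constrains requirements whose conditions hold, gives nothing. I expect to discharge this either by a structural invariant on how conditions are created (conditions arise only from genuine \keyw{extends}/override bookkeeping, where an inequality $T \neq C$ in a condition remains refutable exactly while the corresponding clause has not yet been removed), or by restricting the statement to the regime in which $\func{merge}_{CR}$ is actually invoked in Figure~\ref{fig:co-fjava}. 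A secondary, purely bookkeeping obstacle is making the ``$\sigma_1\circ\sigma_2$ acts as $\sigma_i$ on $CR_i$'' claim precise, exactly as in the proof of Lemma~\ref{lem:merge-cons}, using disjointness of the generated class variables together with groundness of $\sigma_i(CR_i)$.
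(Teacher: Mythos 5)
Your proposal follows essentially the same route as the paper's own proof: unfold $\func{merge}_{CR}$ into its per-clause blocks $S_e \cup S_k \cup S_f \cup S_m$, treat one kind of overlapping requirement as representative (you pick methods, the paper picks fields), and derive each conditional equality from $CT \satisfy\ \sigma_i(CR_i)$ plus uniqueness of the lookup, groundness of $\sigma_i(CR_i)$, and disjointness of the fresh class variables, so that $\sigma_1\circ\sigma_2$ acts as $\sigma_i$ on $CR_i$. The ``liveness'' obstacle you flag is genuine, but the paper's proof does not resolve it either --- it simply asserts that for every overlapping pair the conditions hold and the looked-up types therefore agree --- so your write-up matches the paper's argument while being more explicit about where it is thin.
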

\begin{proof}
  By the definition of $merge_{CR}$, $S = S_c \cup S_e \cup S_k \cup S_f \cup S_m$, where $ S_f = \{(T'_1 = T'_2\ if\ T_1 = T_2) \WHERE (T_1.f: T'_1, cond_1) \in CR_1 \wedge (T_2.f : T'_2, cond_2) \in CR_2\}$. 
    
  Since $CT\ satisfy\ \sigma_i(CR_i)$, we know $\sigma_i(T_i.f : T'_i, cond_i) \in CT$, for all $f \in \dom{CR_i}$, where $\sigma_i(cond_i)\ hold$.
  In particular, for all $f \in \dom{CR_1}\cap\dom{CR_2}$, where $(T_1.f : T'_1, cond_1) \in CR_1$, $(T_2.f : T'_2, cond_2) \in CR_2$,
   $\sigma_1(T'_1) = \sigma_2(T'_2)\ if\ \sigma_1(T_1) = \sigma_2(T_2)$.
  Thus, $\sigma_1\circ\sigma_2$ solves $S$ because
  $(\sigma_1\circ\sigma_2)(T'_1)=\sigma_1(T'_1)=\sigma_2(T'_2)=(\sigma_1\circ\sigma_2)(T'_2)$, if $\sigma_1(T_1) = \sigma_2(T_2)$, because 
  $\sigma_1(CR_1)$ and $\sigma_2(CR_2)$ are ground.
  
  The same procedure we follow for methods, i.e., a given method $m$ that we find a match in $CR_1(C)$, and $CR_2(C)$, $S_m$ is the set of constraints for the method as result of unifying return type and types of the parameters from the two different class requirements($CR_1, CR_2$). 
\end{proof}

\begin{lemma}
\label{lem:satisfy}
If \textit{CT satisfy }$\sigma_1(CR_1)$, $\sigma_1(CR_1)$ is ground,  and \textit{CT satisfy $\sigma_2 (CR_2)$}, $\sigma_2(CR_2)$ is ground, then \textit{CT satisfy $\sigma(CR)$}, where $\sigma = \sigma_1 \circ \sigma_2$ and $CR_S = merge_{CR}(CR_1, CR_2)$.
\end{lemma}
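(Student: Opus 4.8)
The plan is to unfold the definition of satisfaction given by rule \rulename{Satisfy}: to show that $CT$ satisfies $\sigma(CR)$ it suffices to take an arbitrary conditional requirement $(CReq,cond)\in CR$, assume its condition $\sigma(cond)$ holds (with respect to $\Sigma=\projExt{CT}$), and exhibit the matching clause of $CT$ for $\sigma(CReq)$ as dictated by the appropriate rule among \rulename{S-Extend}, \rulename{S-Constructor}, \rulename{S-Field}, \rulename{S-Method}. I would then trace $(CReq,cond)$ back through the definition of $\func{merge}_{CR}$ (the clause-by-clause definition in Appendix~\ref{sec:appendixA}) and case-split on which piece of that definition produced it: either (1) a non-overlapping requirement carried over verbatim from $CR_1$ or $CR_2$ (the set-difference parts $CR_1\setminus(\ldots)$ and $CR_2\setminus(\ldots)$), or (2) one of the three requirements in $CR_e$, $CR_k$, $CR_f$, or $CR_m$ obtained by combining an overlapping pair.

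The observation used throughout is that groundness makes the three substitutions interchangeable where needed: since $\sigma_1(CR_1)$ is ground and $\sigma=\sigma_1\circ\sigma_2$, we have $\sigma(X)=\sigma_1(X)$ for every syntactic piece $X$ occurring inside $CR_1$ (a ground term admits no further substitution), and symmetrically $\sigma(X)=\sigma_2(X)$ for pieces of $CR_2$, using that $CR_1$ and $CR_2$ mention disjoint class variables. In case (1), if $(CReq,cond)\in CR_1$ unchanged, then $\sigma(CReq)=\sigma_1(CReq)$ and $\sigma(cond)=\sigma_1(cond)$; since $\sigma(cond)$ holds so does $\sigma_1(cond)$, so the hypothesis that $CT$ satisfies $\sigma_1(CR_1)$ together with \rulename{Satisfy} yields exactly that $CT$ provides the clause for $\sigma_1(CReq)=\sigma(CReq)$. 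The $CR_2$ sub-case is symmetric via $\sigma_2(CR_2)$.

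In case (2), take the method part $CR_m$ as representative: from $(T_1.m:\seq{T_1}\to T_1',cond_1)\in CR_1$ and $(T_2.m:\seq{T_2}\to T_2',cond_2)\in CR_2$ the merge produces requirements with conditions $cond_1\cup(T_1\neq T_2)$, $cond_2\cup(T_1\neq T_2)$, and $cond_1\cup cond_2\cup(T_1=T_2)$, with payloads the $CR_1$-requirement, the $CR_2$-requirement, and the $CR_1$-requirement again, respectively. Whenever the augmented condition holds under $\sigma$, the sub-condition $cond_1$ (resp.\ $cond_2$) it extends also holds, so groundness lets me reduce $\sigma$ to $\sigma_1$ (resp.\ $\sigma_2$) on the payload and conclude from $CT$ satisfying $\sigma_1(CR_1)$ (resp.\ $\sigma_2(CR_2)$) that the required method clause is present in $CT$. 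The cases for $CR_e$ (extends), $CR_k$ (constructor), and $CR_f$ (field) are completely analogous, differing only in which \rulename{S-}* rule is invoked.

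The main obstacle is purely bookkeeping rather than mathematical: one must check carefully that the set-difference parts of the $\func{merge}_{CR}$ definition leave the non-overlapping requirements syntactically untouched, and that "the condition holds" propagates correctly to every sub-condition — in particular, adjoining an (in)equality such as $(T_1\neq T_2)$ to a condition can only make it harder to satisfy, so satisfaction of the augmented condition still entails satisfaction of the original $cond_i$. No induction is needed; the statement is a direct structural verification in which the groundness hypotheses are exactly what makes $\sigma$, $\sigma_1$, and $\sigma_2$ coincide on the relevant terms. (I would also silently correct the evident typo in the statement: "$CR_S = \func{merge}_{CR}(CR_1,CR_2)$" should read "$CR|_S = \func{merge}_{CR}(CR_1,CR_2)$", with $CR$ the merged requirement set.)
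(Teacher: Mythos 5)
Your proposal is correct and takes essentially the same route as the paper's proof: both unfold $\func{merge}_{CR}$ clause by clause and observe that a merged requirement whose augmented condition ($cond_i\cup(T_1\neq T_2)$ or $cond_1\cup cond_2\cup(T_1=T_2)$) holds forces the original $cond_i$ to hold, so satisfaction reduces via groundness to the hypotheses on $\sigma_1(CR_1)$ and $\sigma_2(CR_2)$, while the requirements whose augmented conditions fail are vacuously satisfied. The only cosmetic differences are that the paper organizes the case split as a dichotomy on $T_1=T_2$ versus $T_1\neq T_2$ (treating only the overlapping requirements explicitly, "for sake of brevity") and first discharges solvability of the generated constraint set $S$ via Lemma~\ref{lem:mergeCR-cons}, a step your write-up omits but which is not needed for the stated conclusion about $CT$ satisfying $\sigma(CR)$.
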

\begin{proof}
First we have to show that the new set of constraints $S$ generated from merging is solvable, and this holds by Lemma~\ref{lem:mergeCR-cons}.\\ 
Then we show that $CT\ \satisfy\ \sigma(CR)$.
 For sake of brevity we consider clauses common in both requirements sets $CR_1$ and $CR_2$. Let us consider the field $f$, such that $(T_1.f : T'_1, cond_1) \in CR_1$ and $(T_2.f : T'_2, cond_2)$, and by assumption we have that $CT\ satisfy\ \sigma_1((T_1.f : T'_1, cond_1))$ and $CT\ \satisfy \ \sigma_2(T_2.f : T'_2, cond_2)$. By the definition of $merge_{CR}$ 
	 the conditions of these two requirements are updated, i.e., $(T_1.f : T'_1,cond_1 \cup (T_1 \neq T_2 )) $ and $(T_2.f : T'_2,cond_2 \cup (T_1 \neq T_2)) $, and a new requirement is added, i.e., $(T_1.f : T'_1, cond_1 \wedge cond_2 \cup (T_1 = T_2)$. 
	  Suppose that CT satisfies the three of the new and updated requirements, namely all their conditions should hold by rule \rulename{T-Satisfy}, but this is contradiction, because two types cannot be at the same time not equal and equal. Therefore there are two possibilities:
	  \begin{enumerate}[1)]
	  	\item either the conditions of the updated field requirements hold, i.e., $(T_1.f : T'_1,cond_1 \cup (T_1 \neq T_2 )) $, $(T_2.f : T'_2,cond_2 \cup (T_1 \neq T_2)) $, and $(T_1\neq T_2)$ holds.
	  	\item or the conditions of the new field requirement hold, i.e., $(T_1.f : T'_1, cond_1 \wedge cond_2 \cup (T_1 = T_2))$, and $(T_1 = T_2)$ holds. 
	  \end{enumerate}
	  \begin{itemize}
	  	\item If $1)$ is possible then $CT\ \satisfy \sigma_1\circ\sigma_2(T_1.f : T'_1,cond_1 \cup (T_1 \neq T_2 ) \cup T_2.f : T'_2,cond_2 \cup (T_1 \neq T_2))$ because by assumtion $CT\ satisfy\ \sigma_1((T_1.f : T'_1, cond_1))$ and $CT\ \satisfy \ \sigma_2(T_2.f : T'_2, cond_2)$. The new class requirement $(T_1.f : T'_1, cond_1 \wedge cond_2 \cup (T_1 = T_2))$ is satisfiable by default since one of its conditions $(T_1 = T_2)$ does not hold, namely is not a valid requirement.
	  	\item If $2)$ is possible then $CT\ \satisfy \sigma_1\circ\sigma_2(T_1.f : T'_1, cond_1 \wedge cond_2 \cup (T_1 = T_2))$, because $(T_1=T_2)$, $CT\ satisfy\ \sigma_1((T_1.f : T'_1, cond_1))$ and $CT\ \satisfy \ \sigma_2(T_2.f : T'_2, cond_2)$. The updated class requirements $(T_1.f : T'_1,cond_1 \cup (T_1 \neq T_2 )) $ and $(T_2.f : T'_2,cond_2 \cup (T_1 \neq T_2)) $ are satisfiable by default since one of their conditions $(T_1 \neq T_2)$ does not hold, namely are not valid requirements.

	  \end{itemize}
	  As a result $CT$ satisfies the resulting set of requirements after merging for the given field $f$.
	  	
	The same we argue for methods, optional methods, current class, and extend clauses. 
\end{proof}

%
%
%

\begin{prop}[Independent derivation in co-contextual type checking]\label{prop:indep}
Given a set of otherwise independent derivations of class requirement $CR = \{CR_1 \cup \ldots \cup CR_n \}$, $\forall i, j \in [1..n].\ freshU(CR_i)\cap freshU(CR_j) = \emptyset $, where $freshU(CR_i) = \{U_1^i, \ldots U_n^i \}$
\end{prop}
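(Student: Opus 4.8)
The plan is to prove the statement by structural induction on the co-contextual typing derivations of Figure~\ref{fig:co-fjava}, exploiting the fact that these rules never propagate an input downwards, so each premise derivation may be built over its own private supply of class variables. The first step is to make the freshness side conditions precise: I read every occurrence of $\fresh{U}$ and $\afresh{U_1,\ldots,U_n}$ in a rule as ``$U$ (resp.\ $U_1,\ldots,U_n$) is chosen outside the set of all class variables that occur in the derivations of the premises of that rule instance.'' Under this reading, $freshU$ is a well-defined function on derivations: $freshU(D)$ is the union of the $freshU$ of the premise derivations of $D$ together with the finitely many class variables that the concluding rule instance mints via $\fresh{}$ or $\afresh{}$.

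The base cases are the leaf rules. For \rulename{TC-Var} (and, treated as leaves, the literal/constant rules assumed in Section~\ref{sec:backgr-motiv}) the derivation has no premises and introduces exactly one fresh class variable, so $freshU$ is a singleton and the pairwise-disjointness claim holds vacuously.

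For the inductive step, take any rule instance with premise derivations $D_1,\ldots,D_k$, covering the expression, method, and class judgments. By the induction hypothesis the sets $freshU(D_1),\ldots,freshU(D_k)$ are pairwise disjoint. Any class variables minted by the concluding rule via $\fresh{}/\afresh{}$ are, by our reading of the side condition, disjoint from $\bigcup_i freshU(D_i)$, hence from each individual $freshU(D_i)$. It remains to check that the other constituents of a rule cannot violate disjointness: the calls to $\func{merge}_R$, $\func{merge}_{CR}$ and to the $\func{remove}$ operations from Section~\ref{sec:co-cont-FJ} only emit new \emph{constraints} and re-arrange existing requirements and their conditions---they never introduce new class variables---so they leave the relevant $freshU$ sets unchanged. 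Consequently the $freshU$ sets of all subderivations, together with the freshly introduced variables, stay pairwise disjoint, which is exactly the assertion for $CR = CR_1 \cup \ldots \cup CR_n$ assembled from the otherwise independent subderivations.

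I expect the main obstacle to be foundational rather than computational: fixing a uniform, rule-independent meaning of ``fresh'' that is stable under the bottom-up construction, so that $freshU$ really is a function of the derivation and the inductive bookkeeping goes through without circularity. One clean way is to thread an explicit splittable name supply through the derivation (each rule splits its supply among the premises and reserves a disjoint slice for its own fresh variables); alternatively one works up to $\alpha$-renaming of class variables. Once that discipline is in place, the induction is routine, precisely because the only rule ingredients that could plausibly break disjointness---$\func{merge}$ and $\func{remove}$---provably produce constraints but no new variables.
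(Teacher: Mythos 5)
Your proposal is correct and rests on exactly the same observation as the paper's own (one-sentence) proof: every rule mints its class variables fresh, so variables from one derivation cannot occur in another independent derivation. You merely make this rigorous by structural induction with an explicit freshness discipline and by checking that $\func{merge}$ and $\func{remove}$ introduce no new variables, which is a faithful elaboration rather than a different route.
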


\begin{proof}
	It is straightforward by the rules and how the type checking is performed, i.e., for every rules of the type checking we always introduce fresh class names $U$, therefore $U$s in one derivation do not appear to another independent derivation. 
\end{proof}
 
 \begin{corollary}[Associative feature for substitution]
 \label{cor:assoc}
 Given $CR$, $\sigma_1$	and $\sigma_2$ then it holds that $(\sigma_1 \circ \sigma_2)(CR) = (\sigma_2 \circ \sigma_1)(CR)$
 \end{corollary}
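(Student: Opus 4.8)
The plan is to reduce the equality $(\sigma_1 \circ \sigma_2)(CR) = (\sigma_2 \circ \sigma_1)(CR)$ to two facts about the substitutions — disjointness of their domains and groundness of their ranges — and then to conclude by a routine structural induction on $CR$.

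First I would make explicit the standing hypotheses under which this corollary is invoked: $\sigma_1$ and $\sigma_2$ are ground solutions, in the sense of the notion of ground solution defined earlier, obtained from two \emph{otherwise independent} sub-derivations. By Proposition~\ref{prop:indep}, the fresh class variables constrained by these sub-derivations are disjoint, so $\dom{\sigma_1} \cap \dom{\sigma_2} = \emptyset$; and because both are ground solutions, $\sigma_i(U)$ contains no class variables for every $U \in \dom{\sigma_i}$.

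Next I would establish the pointwise claim for a single class variable $U$ occurring anywhere inside $CR$, by a three-way case split. If $U \in \dom{\sigma_1}$, then $U \notin \dom{\sigma_2}$, so $(\sigma_1 \circ \sigma_2)(U) = \sigma_1(\sigma_2(U)) = \sigma_1(U)$, while $(\sigma_2 \circ \sigma_1)(U) = \sigma_2(\sigma_1(U)) = \sigma_1(U)$ because $\sigma_1(U)$ is ground and hence fixed by $\sigma_2$; the two agree. The case $U \in \dom{\sigma_2}$ is symmetric, and if $U$ lies in neither domain both compositions leave $U$ untouched. I would then lift this to the whole of $CR$ by induction on the grammar of Figure~\ref{fig:ct-ctreqs-syntax}: the base case $\emptyset$ and the step case $(\mathit{CReq}, \mathit{cond}) \cup CR'$ follow from the induction hypothesis, and for each shape of $\mathit{CReq}$ (inheritance, constructor, field, method, optional method) and each equality/inequality in $\mathit{cond}$, applying a substitution commutes with the clause constructors, so the claim reduces to the pointwise statement for each class type appearing in the clause.

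The main obstacle is not the induction, which is mechanical, but justifying the disjointness of $\dom{\sigma_1}$ and $\dom{\sigma_2}$ precisely: one must check that in every place this corollary is applied the two substitutions really do stem from independent sub-derivations in the sense of Proposition~\ref{prop:indep}, rather than from branches that share fresh variables (for instance through a common occurrence of $\keyw{this}$ or through constraints already propagated by an enclosing rule). Once that invariant is in place, the equality on $CR$ — and hence the interchangeability of the two orders of composition — is immediate.
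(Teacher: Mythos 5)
Your proof is correct and takes essentially the same route as the paper, whose entire argument is the single line that the corollary ``follows directly from Proposition~\ref{prop:indep}''; your expansion via disjointness of domains, groundness of ranges, a pointwise case split on class variables, and a mechanical structural induction over $CR$ is exactly the detail the paper leaves implicit. Your explicit observation that groundness of the ranges is also required — disjointness of $\dom{\sigma_1}$ and $\dom{\sigma_2}$ alone would not rule out $\sigma_1(U)$ containing a variable in $\dom{\sigma_2}$ — correctly identifies a hypothesis the paper's one-line justification glosses over.
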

 \begin{proof}
 	Follows directly from Proposition~\ref{prop:indep}.
 \end{proof}


\begin{definition}[Correspondence relation for expressions]
Given judgments \judge{\ctxplus{\Gamma}{CT}}{e}{C}, \cojudge{e}{T}{S}{R}{CR}, and $\func{solve}(\Sigma ,S) = \sigma$, where 
$\projExt{CT}=\Sigma$. 
The correspondence relation between $\Gamma$ and $R$, $CT$ and $CR$, written $(C, \Gamma, CT)\vartriangleright \sigma(T, R, CR)$, is defined as: 
\begin{enumerate}[a)]
	\item $C = \sigma(T)$
	\item $\Gamma \supseteq \sigma(R)$
	\item $CT\ \satisfy \ \sigma(CR)$ 
\end{enumerate}	
\end{definition}
 
\begin{theorem}[Equivalence of expressions: $\Rightarrow$]
Given $e,\ C, \ \Gamma ,\ CT, $ if $\judge{\ctxplus{\Gamma}{CT}}{e}{C}$, 
then there exists $T, \ S,\ R,\ CR,\ \Sigma,\ \sigma$, where $\projExt{CT}=\Sigma$ and $\func{solve}(\Sigma,S)=\sigma$, such that $\cojudge{e}{T}{S}{R}{CR}$ holds, $\sigma$ is a ground solution and $(C, \Gamma, CT)\vartriangleright \sigma(T, R, CR)$ holds. 
\label{theo:ExprR}
 \end{theorem}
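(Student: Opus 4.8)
The plan is to argue by structural induction on the derivation of $\judge{\ctxplus{\Gamma}{CT}}{e}{C}$, with one case for each FJ expression rule (\rulename{T-Var}, \rulename{T-Field}, \rulename{T-Invk}, \rulename{T-New}, \rulename{T-UCast}, \rulename{T-DCast}, \rulename{T-SCast}). In every case the matching co-contextual rule of Figure~\ref{fig:co-fjava} has premises that are co-contextual judgments for exactly the subexpressions on which the FJ rule recurses, so the induction hypothesis supplies, for each such subexpression $e_i$, a judgment $\cojudge{e_i}{T_i}{S_i}{R_i}{CR_i}$ together with a ground substitution $\sigma_i$ with $\func{solve}(\Sigma, S_i) = \sigma_i$ and $(C_i, \Gamma, CT) \vartriangleright \sigma_i(T_i, R_i, CR_i)$. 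The per-case work is then fourfold: (i) assemble a single substitution $\sigma$ for the conclusion, (ii) show $\sigma$ solves the conclusion's constraint set $S$, (iii) show $\sigma$ is a ground solution, and (iv) re-establish clauses (a)--(c) of the correspondence relation for the conclusion.

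For (i) I would take $\sigma$ to be the composition $\sigma_e \circ \sigma_1 \circ \cdots \circ \sigma_n$ of the substitutions delivered by the induction hypotheses, extended by sending the fresh class variables introduced at the current node to the ground class names forced by the FJ derivation: in \rulename{TC-Var}, map the fresh $U$ to $\Gamma(x)$; in \rulename{TC-Field}, map $U$ to $C_i$ where $\func{field}(f_i, C_e, CT) = C_i$; in \rulename{TC-Invk}, map $U'$ to $C$ and $\seq U$ to $\seq D$ where $\func{mtype}(m, C_e, CT) = \seq D \to C$; in \rulename{TC-New}, map $\seq U$ to $\seq D$ where $\func{fields}(C, CT) = \constrCT{C}{\seq D}$; the cast rules introduce no fresh variables. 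By Proposition~\ref{prop:indep} the fresh variables of distinct subderivations are disjoint, so this composition is well defined, and by Corollary~\ref{cor:assoc} its result is order-insensitive; groundness of $\sigma(T, R, CR)$ (obligation (iii)) is then immediate, since each $\sigma_i$-image is ground by the induction hypothesis and the newly introduced variables are mapped to ground class names.

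For (ii) I would decompose the conclusion's constraint set into: the subexpressions' constraints $S_e, \seq S$, which $\sigma$ still solves because the fresh variables are pairwise disjoint; the subtype and (in)equality constraints contributed by \rulename{TC-Invk}, \rulename{TC-New}, and the cast rules (e.g.\ $\seq T <: \seq U$, $C <: T_e$, $C \neq T_e$, and the two non-subtyping constraints of \rulename{TC-SCast}), which hold because $\sigma$ sends the involved class variables to exactly the ground types occurring in the FJ premises --- here I invoke the agreement between FJ's $<:$ and the relation $C \subtpe_\Sigma D$ for $\Sigma = \projExt{CT}$, used in both directions for \rulename{TC-DCast}/\rulename{TC-SCast}; the context-requirement merge constraints $S_r$, discharged by (an $n$-ary iteration of) Lemma~\ref{lem:merge-cons}, whose hypotheses $\Gamma \supseteq \sigma_i(R_i)$ are precisely correspondence clause (b) of the induction hypotheses; and the class-table merge constraints $S_{cr}$, discharged by (an $n$-ary iteration of) Lemma~\ref{lem:mergeCR-cons}, whose hypotheses are the clauses $CT\ \satisfy\ \sigma_i(CR_i)$ together with the fact that $CT$ satisfies the $\sigma$-image of the \emph{single} newly introduced class requirement --- $(C_e.f_i : C_i, \emptyset)$, $(C_e.m : \seq D \to C, \emptyset)$, or $(\constrCT{C}{\seq D}, \emptyset)$ --- which is exactly the relevant satisfaction rule of Figure~\ref{fig:satisfy} instantiated with the FJ lookup premise. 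Once $S$ is seen to be solved by a ground substitution, $\func{solve}(\Sigma, S)$ is defined and may be taken to be this $\sigma$. Re-establishing the correspondence relation (obligation (iv)) is then short: clause (a) $C = \sigma(T)$ holds by our choice of $\sigma$ on the output variable; clause (b) $\Gamma \supseteq \sigma(R)$ holds because the merged context requirement retains, for each variable, one of the subexpression bindings, all of which $\sigma$ maps to the matching entry of $\Gamma$; and clause (c) $CT\ \satisfy\ \sigma(CR)$ follows from (iterating) Lemma~\ref{lem:satisfy} on the subexpressions' class-table requirements and the freshly introduced singleton requirement.

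I expect the main obstacle to be clause (c). The operation $\func{merge}_{CR}$ duplicates each pair of equally-named member requirements and mutates their conditions (adding $T_1 = T_2$ to one copy and $T_1 \neq T_2$ to the others), so the argument must check that the composed ground substitution keeps satisfiable exactly the branch dictated by whether the two receivers coincide after substitution, and that the constraints $S_{cr}$ emitted in the coinciding case are met. This bookkeeping is packaged once and for all in Lemmas~\ref{lem:mergeCR-cons} and~\ref{lem:satisfy}, so the per-rule obligation shrinks to verifying that the one class requirement born at the current node is satisfied by the given class table, which is immediate from Figure~\ref{fig:satisfy}. The only other mildly technical point is the simultaneous juggling of the several substitutions coming from the subderivations, which is why Proposition~\ref{prop:indep} and Corollary~\ref{cor:assoc} are needed to justify treating their composition as a single, well-behaved, ground substitution.
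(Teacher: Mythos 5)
Your proposal is correct and follows essentially the same route as the paper's proof: induction on the FJ typing derivation, composing the subderivations' substitutions with ground bindings for the freshly introduced class variables (justified by Proposition~\ref{prop:indep} and Corollary~\ref{cor:assoc}), discharging the merge constraints via Lemmas~\ref{lem:merge-cons} and~\ref{lem:mergeCR-cons}, and re-establishing clause (c) of the correspondence via Lemma~\ref{lem:satisfy} together with the satisfaction rule for the single requirement born at the current node. No substantive differences from the paper's argument.
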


\begin{proof}
We proceed by induction on the typing judgment of expression $e$.
\begin{itemize}
\item Case \rulename{T-Var} with \judge{\ctxplus{\Gamma}{CT}}{x}{C}.

    By inversion, $\Gamma(x)=C$.

    Let $U$ fresh, $S=\emptyset$, $R=\{x : U\}$, $CR = \emptyset$ and $\sigma =\{U \smap C\}$.

    Then $\cojudge{e}{C'}{S'}{R} {CR}$ holds by rule \rulename{TC-Var}. Since $S = \emptyset$, then $\sigma$ solves $S$.
    $\sigma$ is ground solution because: 
    \begin{enumerate}[1)]
    	\item $\sigma(U)$ is ground because $\sigma(U)= C$.
    	\item $R = \{x :U\}$  and $\sigma = \{U \smap C \}$ implies $\sigma(R) = \{x : C\}$ is ground. 
    	\item $CR = \emptyset$ implies that $\sigma(CR) = \emptyset$ is ground. 
    \end{enumerate}
    
    The correspondence relation holds because;
    \begin{enumerate}[a)]
    	\item $C = \sigma(U)$
    	\item Since $\Gamma(x) = C$ by inversion, then $\Gamma \supseteq \{x : C\} = \sigma(R)$.
    	\item $CR = \emptyset$ and $\sigma(CR) = \emptyset$ implies that $CT\ \satisfy\ \sigma(CR) $.
    \end{enumerate}
\vskip2ex
  \item Case \rulename{T-Field} with \judge{\ctxplus{\Gamma}{CT}} {e.f_i}{C_i}.
    
   By inversion, \judge {\ctxplus{\Gamma}{CT}} {e}{C_e} and $field(f_i, C_e, \ctxcolor{CT}) = C_i$.     
   By IH, \cojudge {e} {T'_e} {S_e} {R_e} {CR_e}, where $\mathit{solve}(\projExt{CT} , S_e) = \sigma_e$, $\sigma_e(T'_e)$, $\sigma_e(R_e)$, 
  $\sigma_e(CR_e)$ are ground and the correspondence relation holds, i.e.,
   $ C_e = \sigma_e(T'_e)$, $\Gamma \supseteq \sigma_e(R_e)$, $CT\ \satisfy\ \sigma_e(CR_e)$. 
   
   Let $U$ be fresh, $CR|_{\typectxcolor{S_f}} = merge_{CR}(CR_e, (T'_e.f_i : U,\emptyset )) $, $S = S_e \cup \typectxcolor{S_f}$ 
   and $\sigma = \{U \smap C_i\}\circ \sigma_e $.
 
  Then \cojudge {e.f_i} {U} {S} {R_e} {CR} holds by rule \rulename{TC-Field}.\\
  $\sigma$ solves $S$ because it solves $S_e$ and $S_f$ as shown below: 
  \begin{itemize}
  	\item $solve(\projExt{CT}, S_e) =\sigma_e$ by IH and $\sigma = \{U \smap C_i\}\circ \sigma_e $ implies $\sigma$ solves $S_e$
  	\item $\sigma_e(CR_e)$ is ground by $IH$. \\
  	$(\ast)$ $\sigma(T'_e.f_i : U,\emptyset )$ is ground, because
    $\sigma(T'_e.f_i : U) = (\sigma(T'_e).f_i : \sigma(U)) = (C_e.f_i : C_i)$ and $C_e.f_i : C_i$ is ground.\\
     $CT \ \satisfy\ \sigma_e(CR_e)$ by $IH$. \\
     $(\ast\ast)$ $CT\ satisfy\ \sigma(T'_e.f : U,\emptyset )$ because $field(f_i, C_e, \ctxcolor{CT}) = C_i$ hence by rule \rulename{S-Field} holds that $CT\ satisfy\ (C_e.f : C_i,\emptyset )$, and $\sigma(T'_e.f_i : U) = C_e.f_i : C_i$.\\
     As a result by Lemma~\ref{lem:mergeCR-cons} $\sigma$ solves $S_f$.
  \end{itemize}
  $\sigma$ is a ground solution because: 
   \begin{enumerate}[1)]
   	\item  $\sigma(U)$ is ground because $\sigma(U) = C_i$.
   	\item $\sigma(R_e)$ is ground because $\sigma(R_e) = (\{U \rightarrow C_i \} \circ \sigma_e )(R_e) = \{U\rightarrow C_i \} (\sigma_e(R_e) )$,  
   since $\sigma_e(R_e)$ is ground by $IH$ then $\{U\rightarrow C_i \} (\sigma_e(R_e) ) = \sigma_e(R_e)$, i.e., $\sigma(R_e) = \sigma_e(R_e)$. 
   	\item $\sigma(CR_e)$ is ground because $\sigma(CR_e) = (\{U \rightarrow C_i \}\circ \sigma_e )(CR_e) = \{U\rightarrow C_i \} (\sigma_e(CR_e))$, 
   	$\{U\rightarrow C_i \} (\sigma_e(CR_e) ) = \sigma_e(CR_e)$ because $\sigma_e(CR_e)$ is ground by $IH$. 
     	$\sigma(T'_e.f_i : U,\emptyset )$ is ground by $(\ast)$. As a result $\sigma(CR)$ is ground by definition of $merge_{CR}$.
   \end{enumerate} 

   The correspondence relation holds because: 
     \begin{enumerate}[a)]
  	\item $C_i = \sigma(U)$
  	\item $\Gamma \supseteq \sigma(R_e)$, because $\Gamma \supseteq \sigma_e(R_e)$ by $IH$, and from $2)$ $\sigma(R_e) = \sigma_e(R_e)$.
  	\item  $CT\ satisfy\ \sigma(CR_e)$, because $CT\ satisfy\ \sigma_e(CR_e)$ by $IH$, and from $3)$ $\sigma(CR_e) = \sigma_e(CR_e)$.
  	$CT\ satisfy\ \sigma(T'_e.f : U,\emptyset )$ by $(\ast\ast)$. As a result
  	$CT\ satisfy\  \sigma(CR_e) \cup \sigma(T'_e.f : U, \emptyset )$, i.e., $CT\ satisfy\  \sigma(CR)$ by Lemma~\ref{lem:satisfy}.
  \end{enumerate}  
  
  \vskip2ex
  \item Case \rulename{T-Invk} with \judge{\ctxplus{\Gamma}{CT}}{e.m(\seq{e})} {C}.

  By inversion, \judge {\ctxplus{\Gamma}{CT}} {e}{C_e},  $mtype(m, C_e, CT)= \seq{D}\rightarrow C$, \judge{\ctxplus{\Gamma}{CT}} {\seq{e}} {\seq{C}}
   and $\seq{C} <: \seq{D}$.
    
  By IH, \cojudge {e} {T_e} {S_e} {R_e} {CR_e}, where $\mathit{solve}(\projExt{CT} ,S_e) = \sigma_e$, $\sigma_e(T_e)$, $\sigma_e(R_e),\ \sigma_e(CR_e)$ are ground and the correspondence relation hold, i.e, $ C_e = \sigma_e(T'_e)$, 
  $\Gamma \supseteq  \sigma_e(R_e)$, $CT\ satisfy\  \sigma_e(CR_e)$.\\
   By IH \cojudge {\seq{e}} {\seq{T}} {\seq{S}} {\seq{R}} {\seq{CR}}, $\forall i \in [1 .. n]$. $\mathit{solve}(\projExt{CT} ,S_i) = \sigma_i$, $\sigma_i(T_i)$, $\sigma_i(R_i),\ \sigma_i(CR_i)$ are ground, and the 
  correspondence relation holds, i.e., $C_i = \sigma_i(T_i)$, $\Gamma \supseteq  \sigma_i(R_i)$, $CT\ satisfy\  \sigma_i(CR_i)$.
    
  Let $U'$, $\seq{U}$ be fresh, $ R|_{S_r} = merge_R(R_e, R_1, \ldots, R_n) $, \\
  $CR|_{\typectxcolor{S_{cr}}} = merge_{CR}(CR_e,CR_1, \ldots, CR_n, (T_e.m : \seq{U} \rightarrow U', \emptyset )) $, 
  $S =S_e \cup \seq{S} \cup S_r \cup \typectxcolor{S_{cr}} \cup \{\seq{T}<: \seq{U}\}$ and 
  $\sigma = \{U' \smap C\}\circ \{U_i \smap D_i \}_{i\in [1..n]}\circ \sigma_e \circ \{\sigma_i\}_{i\in [1..n] }$
   
     Then \cojudge {e.m(\seq{e})} {U} {S} {R} {CR} holds by rule \rulename{TC-Invk}.\\
      $\sigma$ solves $S$ because it solves $S_e$, $\seq{S}$, $\typectxcolor{S_{r}}$, $S_{cr}$, and $\{\seq{T'}<: \seq{U}\}$ as shown below:
      \begin{itemize}
      \item $solve(\projExt{CT},S_e) = \sigma_e$ and $\sigma = \{U \smap C\}\circ \{U_i \smap D_i \}_{i\in [1..n]}\circ \sigma_e \circ \{\sigma_i\}_{i\in 1..n }$
implies that $\sigma$ solves $S_e$
\item $\{solve(\projExt{CT}, S_i) = \sigma_i\}_{i \in 1..n}$ and $\sigma = \{U \smap C\}\circ \{U_i \smap D_i \}_{i\in [1..n]}\circ \sigma_e \circ \{\sigma_i\}_{i\in [1..n] }$
implies that $\sigma$ solves $\seq{S}$
\item $\sigma$ solves $\typectxcolor{S_{r}}$ by Lemm~\ref{lem:merge-cons}.
\item $\sigma_e(CR_e),\ \forall i \in [1..n].\ \sigma_i(CR_i)$ are ground  by $IH$.\\
$(\ast)$ $\sigma(T_e.m:\seq{U}\rightarrow U',\emptyset )$ is ground because\\
     $\sigma(T_e.m :\seq{U}\rightarrow U') = (\sigma(T_e).m :\sigma(\seq{U})\rightarrow \sigma(U'))  = C_e.m : \seq{D} \rightarrow C$ and $C_e.m : \seq{D} \rightarrow C$ is ground. \\
     $CT\ \satisfy\ \sigma_e(CR_e)$, $\forall i \in [1..n].\ CT\ \satisfy\ \sigma_i(CR_i)$ by $IH$.\\
     $(\ast\ast)$ $CT \ satisfy\ \sigma(T'_e.m : \seq{U} \rightarrow U', \emptyset)$ because $mtype(m, C_e, CT)= \seq{D}\rightarrow C$ hence by rule \rulename{S-Method} holds that
  $CT \ satisfy\ (C_e.m : \seq{D} \rightarrow C, \emptyset)$, and $\sigma(T_e.m :\seq{U}\rightarrow U') = C_e.m : \seq{D} \rightarrow C$.\\
  As a result $\sigma$ solves $S_{cr}$ by Lemma~\ref{lem:mergeCR-cons}.
  \item Since $\{\seq{C}<: \seq{D}\}$ holds and $\sigma(\{\seq{T}<: \seq{U}\})= \{\seq{C}<: \seq{D}\}$, then $\sigma(\{\seq{T}<: \seq{U}\})$ holds 
      \end{itemize}
 $\sigma$ is ground solution because
  \begin{enumerate}[1)]
  	\item $\sigma(U')$ is ground because $\sigma(U') = C$
  	\item  $\sigma(R_e)$ is ground because 
  $\sigma(R_e) = (\{U' \smap C\}\circ \{U_i \smap D_i \}_{i\in [1..n]}\circ \sigma_e \circ \{\sigma_i\}_{i\in [1..n] })(R_e) = 
  (\sigma_e \circ \{\sigma_i\}_{i\in 1..n })(R_e)$ because $U'$, $\seq{U}$ 
  are defined fresh.\\
   $(\sigma_e \circ \{\sigma_i\}_{i\in [1..n]})(R_e) = ( \{\sigma_i\}_{i\in [1..n] }\circ \sigma_e )(R_e) $ by 
   Corollary~\ref{cor:assoc}.\\
  $( \{\sigma_i\}_{i\in [1..n] }\circ \sigma_e )(R_e) = ( \{\sigma_i\}_{i\in [1..n ]})(\sigma_e(R_e) ) = \sigma_e(R_e)$ because 
  $\sigma_e(R_e)$ is ground by $IH$. \\
  $\forall i \in [1..n]$. $\sigma(R_i)$ is ground because 
    $\sigma(R_i) = (\{U' \smap C\}\circ \{U_i \smap D_i \}_{i\in [1..n]}\circ \sigma_e \circ \{\sigma_i\}_{i\in [1..n] })(R_i) = 
  (\sigma_e \circ \{\sigma_i\}_{i\in [1..n] })(R_i)$ because $U'$, $\seq{U}$ 
  are defined fresh.\\
  $(\sigma_e \circ \{\sigma_i\}_{i\in [1..n] })(R_i) = (\sigma_e\circ \{\sigma_j\}_{j\in [1..i-1, i+1..n] }\circ \sigma_i )(R_i) $ by Corollary~\ref{cor:assoc}.\\ 
  $(\sigma_e\circ \{\sigma_j\}_{j\in [1..i-1, i+1..n ]})(\sigma_i(R_i))= \sigma_i(R_i)$ because 
  $\sigma_i(R_i)$ is ground by $IH$. 
  As a result $\sigma(R)$ is ground by definition of $merge_{R}$.
  	\item $\sigma(CR_e)$ is ground because 
  $\sigma(CR_e) = (\{U' \smap C\}\circ \{U_i \smap D_i \}_{i\in [1..n]}\circ \sigma_e \circ \{\sigma_i\}_{i\in [1..n] })(CR_e) = 
  (\sigma_e \circ \{\sigma_i\}_{i\in [1..n] })(CR_e)$ because $U'$, $\seq{U}$ 
  are defined fresh.\\
   $(\sigma_e \circ \{\sigma_i\}_{i\in [1..n] })(CR_e) = ( \{\sigma_i\}_{i\in [1..n] }\circ \sigma_e )(CR_e) $ by Corollary~\ref{cor:assoc}.\\ 
  $( \{\sigma_i\}_{i\in [1..n] }\circ \sigma_e )(CR_e) = ( \{\sigma_i\}_{i\in [1..n] })(\sigma_e(CR_e) ) = \sigma_e(CR_e)$ because 
  $\sigma_e(CR_e)$ is ground by $IH$.\\
   $\forall i \in [1..n]$. $\sigma(CR_i)$ is ground because 
    $\sigma(CR_i) = (\{U' \smap C\}\circ \{U_i \smap D_i \}_{i\in [1..n]}$ \\
    $ \circ \sigma_e \circ \{\sigma_i\}_{i\in [1..n] })(CR_i) = 
  (\sigma_e \circ \{\sigma_i\}_{i\in [1..n] })(CR_i)$ because $U'$, $\seq{U}$ are defined fresh.\\
   $(\sigma_e \circ \{\sigma_i\}_{i\in [1..n] })(CR_i) = (\sigma_e\circ \{\sigma_j\}_{j\in [1..i-1, i+1..n ]}\circ \sigma_i )(CR_i) $ by Corollary~\ref{cor:assoc}.\\ 
  $(\sigma_e\circ \{\sigma_j\}_{j\in [1..i-1, i+1..n] })(\sigma_i(CR_i))= \sigma_i(CR_i)$ because 
  $\sigma_i(CR_i)$ is ground by $IH$.
   $\sigma(T_e.m:\seq{U}\rightarrow U,\emptyset )$ is ground by $(\ast)$. As a result $\sigma(CR)$ is ground by definition of $merge_{CR}$,
   \end{enumerate}
  
  The correspondence relation holds because: 
  \begin{enumerate}[a)]
  	\item $C = \sigma(U)$
  	\item  $\Gamma \supseteq \sigma(R_e)$ because $\Gamma \supseteq \sigma_e(R_e)$ by $IH$, and from $2)$ $\sigma(R_e) = \sigma_e(R_e)$. 
  	$\forall i \in 1\ldots n$. $\Gamma \supseteq \sigma(R_i)$ because $\Gamma \supseteq \sigma_i(R_i)$ by $IH$, and from $2)$ $\sigma(R_i) = \sigma_i(R_i)$. As a result $\Gamma\supseteq \sigma(R)$ by definition of $merge_R$.
   \item  $CT \  satisfy\ \sigma(CR_e)$ because $CT \  satisfy\ \sigma_e(CR_e)$, and from $3)$ $\sigma(CR_e)  = \sigma_e(CR_e)$.
    $\forall i\in 1\ldots n $. $CT\ satisfy\ \sigma(CR_i)$ because $ CT \ satisfy\ \sigma_i(CR_i)$ by $IH$, and from $3)$ 
    $\sigma(CR_i) = \sigma_i(CR_i)$. $CT \ satisfy\ \sigma(T'_e.m : \seq{U} \rightarrow U', \emptyset)$ by $(\ast\ast)$.
  As a result $CT \ satisfy\ \sigma(CR_e) \cup \sigma(CR_1)\ldots \cup\sigma(CR_n) \cup \sigma(T_e.m :\seq{U}\rightarrow U')$, i.e., $CT\ \satisfy\ \sigma(CR)$ by Lemma~\ref{lem:satisfy}.
  \end{enumerate}
  
   \vskip2ex
   \item Case \rulename{T-New} with  \judge {\ctxplus{\Gamma}{CT}}{new\ C(\seq{e})} {C}. 
  
  By inversion,  \judge{\ctxplus{\Gamma}{CT}} {\seq{e}}{\seq{C}},  $fields(C, \ctxcolor{CT}) = C.init(\seq{D})$ and $\seq{C} <: \seq{D}$.
    
  By IH, \cojudge {\seq{e}} {\seq{T}} {\seq{S}} {\seq{R}} {\seq{CR}}, $\forall i \in 1 \ldots n$. $\mathit{solve}(\projExt{CT},S_i) = \sigma_i$, $\sigma_i(T_i)$, 
  $\sigma_i(R_i),\ \sigma_i(CR_i)$ are ground, and the 
  correspondence relation holds, i.e, $ C_i = \sigma_i(T_i) $, $\Gamma \supseteq  \sigma_i(R_i)$, $CT\ satisfy\  \sigma_i(CR_i)$.
    
  Let $\seq{U}$ be fresh, $merge_R(R_1, \ldots, R_n)= R|_{S_r}$, \\$CR|_{\typectxcolor{S_{cr}}} = merge_{CR}(CR_1, \ldots, CR_n, (C.init(\seq{U},\emptyset )))$. $S = \seq{S} \cup S_r \cup \typectxcolor{S_{cr}} \cup $\\ 
  $\{\seq{T} <: \seq{U}\}$ and $\sigma = \{U_i\smap D_i\}_{i\in [1..n]}\circ \{\sigma_i\}_{i\in [1..n]}$.
  
  Then \cojudge {C.init(\seq{e})} {C} {S} {R} {CR} holds by rule \rulename{TC-New}. \\
  $\sigma$ solves $S$ because it solves $\seq{S}$, $\typectxcolor{S_{r}}$, $S_{cr}$ , and $\{\seq{T}<: \seq{U}\}$ as shown below: 
  \begin{itemize}
  \item $\{solve(\projExt{CT}, S_i) = \sigma_i\}_{i \in [1..n]}$ and $\sigma = \{U_i \smap D_i \}_{i\in [1..n]}\circ \{\sigma_i\}_{i\in [1..n] }$
implies that $\sigma$ solves $\seq{S}$
\item $\sigma$ solves $\typectxcolor{S_{r}}$ by Lemma~\ref{lem:merge-cons}
\item $\forall i \in [1..n].\  \sigma_i(CR_i)$ are ground  by $IH$.\\
$(\ast)$ $\sigma(C.init(\seq{U}),\emptyset )$ is ground because
     $\sigma(C.init(\seq{U})) = (\sigma(C).init(\sigma(\seq{U})))  = C.init(\seq{D})$ and $C.init(\seq{D})$ is ground. \\
     $\forall i \in [1..n].\ CT\ \satisfy\ \sigma_i(CR_i)$ by $IH$.\\
     $(\ast\ast)$ $CT \ satisfy\ \sigma(C.init( \seq{U}), \emptyset)$ because $fileds( C,\ctxcolor{ CT})= \overline{C.f : D}$ hence by rule \rulename{S-Constructor} holds that
  $CT \ satisfy\ (C.init(\seq{D}), \emptyset)$, and \\ 
  $\sigma(C.init(\seq{U})) = C.init(\seq{D})$.\\
  As a result $\sigma$ solves $S_{cr}$ by Lemma~\ref{lem:mergeCR-cons}.
  \item Since $\{\seq{C}<: \seq{D}\}$ holds and $\sigma(\{\seq{T}<: \seq{U}\})= \{\seq{C}<: \seq{D}\}$, then $\sigma(\{\seq{T}<: \seq{U}\})$ holds 
  \end{itemize}  
   $\sigma$ is ground solution because:
  \begin{enumerate}[1)]
  	\item $\sigma(C)$ is ground because $C$ is ground.
  	\item  $\forall i \in [1..n]$. $\sigma(R_i)$ is ground because 
    $\sigma(R_i) = (\{U_i \smap D_i \}_{i\in [1..n]}\circ \{\sigma_i\}_{i\in [1..n] })$ \\
    $(R_i) = \{\sigma_i\}_{i\in [1..n] })(R_i)$ because $\seq{U}$ are defined fresh. \\
  $(\{\sigma_i\}_{i\in [1..n] })(R_i) = ( \{\sigma_j\}_{j\in [1..i-1, i+1..n ]}\circ \sigma_i )(R_i) $ by Corollary~\ref{cor:assoc}.\\
  $( \{\sigma_j\}_{j\in [1..i-1, i+1..n ]})(\sigma_i(R_i))= \sigma_i(R_i)$ because 
  $\sigma_i(R_i)$ is ground by $IH$. As a result $\sigma(R)$ is ground by definition of $merge_{R}$.
  	\item $\forall i \in [1..n]$. $\sigma(CR_i)$ is ground because 
    $\sigma(CR_i) = ( \{U_i \smap D_i \}_{i\in [1..n]}\circ \{\sigma_i\}_{i\in [1..n] })(CR_i) = 
  (\{\sigma_i\}_{i\in [1..n] })(CR_i)$ because $\seq{U}$ are defined fresh.\\
  $(\{\sigma_i\}_{i\in [1..n] })(CR_i) = (\{\sigma_j\}_{j\in [1..i-1, i+1..n ]}\circ \sigma_i )(CR_i) $ by Corollary~\ref{cor:assoc}.\\
  $( \{\sigma_j\}_{j\in [1..i-1, i+1..n] })(\sigma_i(CR_i))= \sigma_i(CR_i)$ because $\sigma_i(CR_i)$ is ground by $IH$.
   $\sigma(C.init(\seq{U}),\emptyset )$ is ground by $(\ast)$. As a result $\sigma(CR)$ is ground by definition of $merge_{CR}$.
    \end{enumerate}
  
  The correspondence relation holds because: 
  \begin{enumerate}[a)]
  	\item $C = \sigma(C)$
  	\item $\forall i \in 1\ldots n$. $\Gamma \supseteq \sigma(R_i)$ because $\Gamma \supseteq \sigma_i(R_i)$ by $IH$, and from $2)$ $\sigma(R_i) = \sigma_i(R_i)$. As a result $\Gamma\supseteq \sigma(R)$ by definition of $merge_R$.
   \item  $\forall i\in 1\ldots n $. $CT\ satisfy\ \sigma(CR_i)$ because $ CT \ satisfy\ \sigma_i(CR_i)$ by $IH$, and from $3)$ 
    $\sigma(CR_i) = \sigma_i(CR_i)$. 
    $CT\ satisfy\ \sigma(C.init(\seq{U}), \emptyset)$ by $(\ast\ast)$.\\
  As a result $CT \ satisfy\ \sigma(CR_1)\ldots\cup \sigma(CR_n) \cup \sigma(C.init(\seq{U}), \emptyset)$, i.e., $CT\ \satisfy\ \sigma(CR)$ by Lemma~\ref{lem:satisfy}.

  \end{enumerate}
     \vskip2ex
  \item Case \rulename{T-UCast}with \judge {\ctxplus{\Gamma}{CT}} {(C) e} {C}.
  
  By inversion, \judge{\ctxplus{\Gamma}{CT}} {e}{D} and $D <: C$.
   
   By IH, \cojudge {e} {T_e} {S_e} {R_e} {CR_e}, where $\mathit{solve}(\projExt{CT}, S_e) = \sigma_e$, $\sigma_e(T_e)$, 
   $\sigma_e(R_e)$, $\sigma_e(CR_e)$ are ground and the correspondence relation holds, i.e., $D = \sigma_e(T_e)$, $\Gamma \supseteq \sigma_e(R_e)$, $CT\ \satisfy\ \sigma_e(CR_e)$.
   
   Let $\sigma = \sigma_e$, and $S = S_e \cup \{T_e <: C\}$. 
 
  Then \cojudge {(C) e} {C} {S} {R_e} {CR_e} holds by rule \rulename{TC-UCast}. \\
  $\sigma$ solves $S$, because it solves $S_e$, and $\{T_e <: C\}$ as shown below:
  
  \begin{itemize}
  \item Since $\sigma = \sigma_e$ and $\sigma_e$ solves $S_e$ then $\sigma$ solves $S_e$.
  \item Since $\{D <: C\}$ holds and $\sigma(\{T_e <: C\}) = \{D <: C\}$ then $\sigma(\{T_e <: C\})$ holds.
  \end{itemize}
  $\sigma$ is ground solution because: 
   \begin{enumerate}[1)]
   	\item $\sigma(C)$ is ground because $C$ is ground as a given class in $CT$
   	\item $\sigma(R_e)$ is ground because $\sigma_e(R_e)$ is ground by $IH$ and $\sigma = \sigma_e$
   	\item $\sigma(CR_e)$ is ground because $\sigma_e(CR_e)$ is ground by $IH$ and $\sigma = \sigma_e$
   \end{enumerate} 
   The correspondence relation $(C, \Gamma, CT)\vartriangleright (C, R_e, CR_e, \sigma)$ holds because: 
     \begin{enumerate}[a)]
  	\item $C = \sigma(C)$
  	\item $\Gamma \supseteq \sigma(R_e)$, because $\Gamma \supseteq \sigma_e(R_e)$ by $IH$ and $\sigma = \sigma_e$
  	\item $CT\ satisfy\ \sigma(CR_e)$, because $CT\ satisfy\ \sigma_e(CR_e)$ by $IH$ and $\sigma = \sigma_e$
   \end{enumerate}
  \end{itemize}

  The proof is symmetric for \rulename{T-DCast}, and \rulename{T-SCast}, as in the case of \rulename{T-UCast}.

 \newpage
\begin{definition}[CReqs(CR)]
 $CReqs(CR) = \{ T.extends : T' \WHERE (T.extends:T', cond) \in CR \} \cup \{ T.init(\seq{T}) \WHERE (T.init(\seq{T}), cond) \in CR \}\cup \{ T.f: T' \WHERE (T.f :T', cond) \in CR \} \cup \{ T.m : \seq{T}\rightarrow T' \WHERE (T.m : \seq{T} \rightarrow T' , cond) \in CR \}$
\end{definition}

\begin{definition}[Domain of Class Table Clause]
	\begin{equation}
 domCl(CTcls) =  \begin{cases}
 	(C.extends) & \mbox{if } (CTcls)= (C.extends = D) \\
 	(C.f) & \mbox{if } (CTcls) = (C.f: C_f) \\
    (C.m) & \mbox{if } (CTcls) = (C.m : \seq{C}\rightarrow C_r ) \\
   (C.init) & \mbox{if } (CTcls) = (C.init(\seq{C})) 
 \end{cases}
\end{equation}
\end{definition}

\begin{definition}[Domain of CT]
$\domC = \{ domCl(CTcls) \WHERE CTcls \in CT \}$
\end{definition}



 \begin{definition}[translate a class requirements to class table entries] 
 It is given a ground class requirement clause $CReq$. 
 \begin{equation}
 translate(CReq) = \begin{cases}
 	(C.extends = D) & \mbox{if } (CReq)= (C.extends: D) \\
 	(C.f : C_f) & \mbox{if } (CReq) = (C.f: C_f) \\
    (C.m : \seq{C}\rightarrow C_r ) & \mbox{if } (CReq) = (C.m : \seq{C}\rightarrow C_r ) \\
   (C.init(\seq{C})) & \mbox{if } (CReq) = (C.init(\seq{C})) 
 \end{cases}
\end{equation}
\end{definition}

\begin{definition}[translate a class table entry to a class requirement CReq] 
 It is given a class table clause $CTcls$.\\
 \begin{equation}
 translate^*(CTcls) = \begin{cases}
 	(C.extends: D) & \mbox{if } (CTcls)= (C.extends = D) \\
 	(C.f : C_f) & \mbox{if } (CTcls) = (C.f: C_f) \\
    (C.m : \seq{C}\rightarrow C_r ) & \mbox{if } (CTcls) = (C.m : \seq{C}\rightarrow C_r ) \\
   (C.init(\seq{C})) & \mbox{if } (CTcls) = (C.init(\seq{C})) 
 \end{cases}
\end{equation}
\end{definition}

\begin{definition}[Clauses of supertypes of CReq] 
\begin{equation}
  \{(CReq, CR)\}_{\ll} =  \begin{cases}
        (T.extends : T') & \mbox{for } CReq = (T.extends : T')\\
        \{(T.init(\seq{T'})\} & \mbox{for } (T.init(\seq{T'}) \in CReqs(CR)  \\
		         & \;\;\; \wedge CReq = (T.init(\seq{T})) \wedge \seq T<: \seq T' \\
        \{(T'.f : T'_f)\} & \mbox{for } (T'.f : T'_f) \in CReqs(CR)  \\
		         & \;\;\; \wedge CReq = (T.f : T_f) \wedge T<: T' \\
	    \{(T'.m: \seq{T'}\rightarrow T'_r )\} & \mbox{for } (T'.m: \seq{T'}\rightarrow T'_r) \in CReqs(CR) \\
	             & \;\;\; \wedge CReq = (T.m :\seq{T}\rightarrow T_r)  \wedge T<: T'  
  \end{cases}
\end{equation}
\end{definition}

\begin{definition}[Clauses of subtypes of CReq] 
\begin{equation}
  \{(CReq, CR)\}_{\gg} =  \begin{cases}
          (T.extends : T') & \mbox{for } CReq = (T.extends : T')\\
          \{(T.init(\seq{T'})\} & \mbox{for } (T.init(\seq{T'}) \in CReqs(CR)  \\
		         & \;\;\; \wedge CReq = (T.init(\seq{T})) \wedge \seq T' <: \seq T \\
		\{(T'.f: T'_f)\} & \mbox{for } (T'.f: T'_f) \in CReqs(CR)  \\
		         & \;\;\; \wedge CReq = (T.f : T_f) \wedge T' <: T\\
	    \{(T'.m:\seq{T'}\rightarrow T'_r)\} & \mbox{for } (T'.m: \seq{T'}\rightarrow T'_r) \in CReqs(CR)  \\ 
	             & \;\;\; \wedge CReq = (T.m :\seq{T}\rightarrow T_r)\wedge T' <: T
  \end{cases}
\end{equation}
\end{definition}

\begin{definition}[Clauses of superclasses of CTcls] 
\begin{equation}
  \{(CTcls, CT)\}_{\ll^*} =  \begin{cases}
         (C.extends : D) & \mbox{for } CTcls = (C.extends = D)\\
         \{(C.init(\seq{C'})\} & \mbox{for } (C.init(\seq{C'}) \in CT  \\
		         & \;\;\; \wedge CTcls = (C.init(\seq{C})) \wedge \seq C <: \seq C' \\
		\{(D.f: D')\} & \mbox{for } (D.f : D') \in CT \\
		        & \;\;\; \wedge CTcls = (C.f : C') \wedge C <: D \\
	    \{(D.m: \seq{D}\rightarrow D_r )\} & \mbox{for } (D.m:\seq{D}\rightarrow D_r )\in CT \\
	            & \;\;\; \wedge CTcls = (C.m :\seq{C}\rightarrow C_r) \wedge C <: D
  \end{cases}
\end{equation}
\end{definition}

\begin{definition}[Compatible class requirements] 
Given two class requirements $CReq,\ CReq'$, compatibility of two class requirements $CReq \sim CReq'$ is defined over all cases of clauses:\\
 $
\begin{array}{lll}
	\bullet\ (T.extends: T_1) \sim (T'.extends : T_2) & if\ (T= T')\wedge (T_1 =T_2) \\
	\bullet\ T.init(\seq{T})\sim (T'.init(\seq{T'})) & if\ (T = T') \\
	\bullet\ (T.f : T_f) \sim (T'.f : T'_f) & if \ (T <: T')\vee (T:> T') \\
	\bullet\ (T.m : \seq{T}\rightarrow T_r)\sim (T'. m : \seq{T'}\rightarrow T'_r ) & if\ (T <: T')\vee (T:> T')
\end{array}
$
\end{definition}

\begin{definition}[Compatibility between a class requirement and a class table clause]
Given a class table clause $CTcls$, a class requirement $CReq$, and a ground solution $\sigma$, such that $\sigma(CReq)$ ground, compatibility
 $CReq \sim CReq'$ is defined over all cases of clauses:\\
 $
\begin{array}{lll}
    & \bullet\ \sigma(T.extends: T') \sim (C.extends = D) & if\ (\sigma(T)= C)\wedge (\sigma(T) = D) \\
    & \bullet\ \sigma(T.init(\seq{T}))\sim (C.init(\seq{C})) & if\ (\sigma(T ) = C) \\
    & \bullet\ \sigma(T.f : T' )\sim (C.f : C') & if\ (\sigma(T ) :> C) \vee (\sigma(T ) <: C)\\
    & \bullet\ \sigma(T.m : \seq{T}\rightarrow T' )\sim (C.m : \seq{C}\rightarrow C' ) & if\ (\sigma(T ) :> C) \vee (\sigma(T ) <: C)\\
\end{array}
$
\end{definition}

\begin{lemma}[Weakening for context]\label{lem:conxtWeak}
	If \judge{\Gamma}{t}{T}, and $x \notin dom(\Gamma )$, then \judge{\ctxplus{\Gamma}{x: C}}{t}{T}.
\end{lemma}
\begin{proof}
    Straightforward induction on typing derivations.
\end{proof}

\begin{lemma}[Weakening for a single class requirement]\label{lem:oneCTweak}
 Given $CT$, a class table clause $CTcls$, a class requirement $(CReq, cond)$ and $\sigma$, 
 such that $\sigma(CReq, cond)$ is ground, \\
 if $CT\ satisfy\ \sigma(CReq, cond)$ and  $\forall CTcls' \in \{(CTcls, CT )\}_{\ll^*}\ such\ that$\\
 $ CTcls' \notin CT$, then $CT\cup CTcls\ satisfy\ \sigma(CReq, cond)$. 

\end{lemma}
 \begin{proof}
We proceed by case analysis on the definition of $CReq$.
\begin{itemize}
	\item Case $CReq = (T.f : T')$. We consider $\sigma(T.f : T', cond) = (C.f : C', cond_g)$. 
	
	We have to show that $CT\cup CTcls\ \satisfy\ (C.f : C', cond_g)$. 
	
	It is given that $CT\ \satisfy\ (C.f : C', cond_g)$, therefore by inversion \\
	$field(f, C, CT)= C'$ (rule \rulename{S-Field}). 
	To show that the extended class table still satisfies the given class requirement, we distinguish the following cases on the definition of $CTcls$:
	\begin{enumerate}[1)]
		\item $CTcls = (D.g : D')$, such that $f \neq g$. 
		Moreover, consider the class table $CT\cup (D.g : D')$. 
		We know that since $f$ is not the same as $g$:
		
        \vspace{2mm}
		$(\ast)\; field(f, C, CT)= field(f, C, CT\cup (D.g : D')) = C'$.
        \vspace{2mm}
        
		As a result $CT\cup CTcls \ \satisfy\ (C.f : C', cond_g)$ by rule \rulename{S-Field} and $(\ast)$.
		 
	 	\item $CTcls = (A.f : A')$.
		Since by inversion $field(f, C, CT) = C'$, then there exists $D$, such that $C<:D$ and $(D.f: C')\in CT$. To proceed with the proof we distinguish two subcases: 
		\begin{enumerate}[a)]
			\item $A$ and $D$ belong to the same class hierarchy (subtyping relation).
	   			\subitem $A <: D$  \\
	   				This case does not hold by the assumption that $ \forall (CTcls') \in \{(A.f : A', CT )\}_{\ll^*}$ $such \ that$ $CTcls' \notin\ CT$, i.e., $D$ is a supertype of $A$, and $D. f : C'$ is an existing clause of the class table. 
	   			\subitem $A :> D$ \\
	   			 Since $C <: D$, then by transitivity we have $ C<:A$. Thus the type of $C.f$ does not depend on the type of $A.f$, because by field lookup rule, the type of $C.f$ is defined by the first supertype we find starting from left to right; since $C <: D <: A$, then $D.f$ is considered to define the type of $C.f$.
	    			Moreover, consider the class table $CT\cup (A.f : A')$. We know that since $A :> D$, $A :> C$, from field lookup definition:

       			 \vspace{2mm}
	   			 $(\ast)$ $field(f, C, CT)= field(f, C, CT\cup (A.f : A')) = C'$.
	   			 \vspace{2mm}

        		 As a result $CT\cup CTcls \ \satisfy\ (C.f : C', cond_g)$ by $(\ast)$ and rule \rulename{S-Field}.
       \item  $A$ and $D$ do not belong to the same class hierarchy (subtyping relation).
   		 We consider the class table $CT\cup (A.f : A')$. Since the field declaration for $f$ of class $A$ is unnecessary to define the type of $C.f$, because $C <:D$, and $D \nless : A $, $D \ngtr : A$, as a result $C \nless : A $, $C \ngtr : A$, then :
          
          \vspace{2mm}
		  $(\ast)$ $field(f, C, CT)= field(f, C, CT\cup (A.f: A')) = C'$.
		  \vspace{2mm}

          As a result $CT\cup CTcls \ \satisfy\ (C.f : C', cond_g)$ by $(\ast)$ and rule \rulename{S-Field}.
	    \end{enumerate}

     \item \textit{CTcls is different from a field clause}.\\
		We consider the class table $CT\cup translate(CReq')$. We know that since $CReq'$ is different from field clause for class requirements:

		 \vspace{2mm}
		$(\ast)$ $field(f, C, CT)= field(f, C, CT\cup CTcls) = C'$.
		 \vspace{2mm}
		
		As a result $CT\cup CTcls \ \satisfy\ (C.f : C', cond_g)$ by $(\ast)$ and rule \rulename{S-Field}.
	\end{enumerate}	
	\item $CReq = (T.m : \seq{U}\rightarrow U' )$	
 \end{itemize}
\end{proof}

\begin{lemma}[Class Table Weakening]\label{lem:CTweak}
Given $CT$, a class table clause $CTcls$, a set of class requirements $CR$, 
and a ground solution $\sigma$, such that $\sigma(CR)$ is ground, if $CT\ satisfy\ \sigma(CR)$ and 
 $\forall\ CTcls' \in \{(CTcls, CT )\}_{\ll^*}\ such\ that\ \ CTcls' \notin CT$, then  $CT\cup CTcls\ satisfy\ \sigma(CR)$. 
\end{lemma}
\begin{proof}
	We proceed by mathematical induction on the set of class requirements $CR$.\\
	\textbf{Initial step}: Show that the lemma holds for one single class requirement, i.e., $CR = \{(CReq, cond)\}$. It is given a class table clause $CTcls$, $\sigma(CReq, cond)$ is ground and $CT\ \satisfy\ \sigma(CReq, cond)$, then $CT\cup CTcls\ \satisfy\ \sigma(CReq, cons)$ by Lemma~\ref{lem:oneCTweak}.\\
	\textbf{Inductive step}: We suppose that the lemma is true for a set of class requirements $CR = CR'$, i.e., $CT\cup CTcls \ \satisfy\ \sigma(CR')$, where $\sigma(CR')$ is ground. \\
	We prove the lemma for 
	$CR = (CReq, cond) \cup CR'$, i.e., $CT\cup CTcls \ \satisfy\ \sigma(CR)$. \\
	 Union of class requirements is realized by $merge_{CR}$ function, i.e., \\
	 $CR|_{S} = merge(CR', (CReq, cond))$.
	 $\sigma(CReq, cond)$ in ground from the initial step and $\sigma(CR')$ is ground from the inductive step, then $sigma$ solve $S$ by Lemma~\ref{lem:mergeCR-cons}. 
	$CT\cup CTcls \ \satisfy\ \sigma(CReq, cond)$ from the initial step, and $CT\cup CTcls \ \satisfy \sigma(CR')$ from the inductive step, as a result $CT\cup CTcls \ \satisfy\ \sigma(Creq, cond)\cup \sigma(CR')$, i.e., $CT\cup CTcls \ \satisfy\ \sigma(CR)$ by Lemma~\ref{lem:satisfy}.
	\end{proof}
%
%
\begin{lemma}[Compatible clause in CT and not in CR]\label{lem:cCTnCR}
 Given $ CT', CR', (CReq \emptyset )$, $\sigma$, such that 
 $CR|_{S} = merge(CR', (CReq, \emptyset ))$, $\sigma$ solves $S$, and $\sigma(CR)$ is ground,
 if $CT'\ satisfy\ \sigma(CR')$, $\exists (CReq', cond)\in CR'.\ CReq \sim CReq'$, and $\exists CTcls \in CT'.\ \sigma(CReq) \sim CTcls $, then there exists a class table $CT$, such that $CT\ satisfy\ \sigma(CR)$.
   \end{lemma}
 \begin{proof} We proceed by case analyses on the definition of CReq.
   \begin{itemize}
   	\item $CReq = (T.f : U)$, and $(D.f : D') \in CT'$ for some $D$, by assumption. 
   	 We distinguish two cases regarding the subtyping relation between the CReq and the class table clause:
 \begin{enumerate}[1)]
 \item $D :> \sigma(T)$.
  Since $(D.f : D') \in CT$ is already a member of the class table, and $D$ is supertype of $\sigma(T)$, then $\sigma(U) =  D'$.
   We take $CT = CT'$. $field (f, \sigma(T), CT ) = D'$, therefore $CT \ \satisfy \ (\sigma(T).f : D',\emptyset )$ by rule \rulename{S-Field}, i.e., 
   $CT \ \satisfy \ (\sigma(T).f : D',\emptyset )$, and $CT\ \satisfy\ \sigma(CR')$, as a result $CT\ \satisfy\ \sigma(CR)$ by Lemma~\ref{lem:satisfy}.
  \item $D <: \sigma(T)$. We take $CT = CT'\cup translate(\sigma(T. f: U))$, then \\
   $CT\ \satisfy \ \sigma(T.f : U , \emptyset )$ by construction and 
  $CT\ \satisfy \sigma(CR')$ by Class Table Weakening Lemma~\ref{lem:CTweak}. As a result $CT\ \satisfy\ \sigma(CR)$ by Lemma~\ref{lem:satisfy}.
 \end{enumerate}
	\item $CReq = (T.m : \seq{U}\rightarrow U)$
      Analogous to the case of field clause.
   \end{itemize}
   \end{proof}

%

\begin{lemma} [Compatible clause in CT and in CR]\label{lem:cCTcCR} Given $ CT', CR', \\
(CReq, \emptyset )$, $\sigma$, such that
 	$CR|_{S} = merge(CR', (CReq, \emptyset))$, $\sigma$ solves $S$ and $\sigma(CR)$ is ground, 
 	if  $CT'\ satisfy\ \sigma(CR')$, $\exists (CReq',cond)\in CR'.\ CReq \sim CReq'$, $\exists CTcls \in CT'.\ \sigma(CReq) \sim CTcls$,  
   then there exists a class table $CT$, such that $CT\ satisfy\ \sigma(CR)$.  
\end{lemma}

%
 \begin{proof} We proceed by case analyses on the definition of CReq.
 \item $CReq = (T.f : U )$. \\
 By assumption $(T'.f : T_f, cond' ) \in CR'$ for some $T'$, and $(D.f : D')\in CT'$, for some $D$, $\sigma(T') <: D$, 
To show that $CT\ \satisfy\ \sigma(CR)$ we consider the case where $\sigma(cond')\Downarrow true$\footnote{We do not consider when it is false because the requirement is not valid requirement and it is a case as in Lemma~\ref{lem:cCTnCR} and the proof follows the same}.
$\sigma(cond) \Downarrow true$, i.e., all conditions in $cond$ do hold. $CT'\ satisfy \ \sigma(CR')$, and $(T'.f : T_f, cond')\in CR'$, therefore $CT'\ \satisfy\ \sigma(T'.f : T_f, cond')$, by inversion $field(f, T', CT') = D'$ (rule \rulename{S-Field}), where  $\sigma(T_f) = D'$. We distinguish to cases with respect to the subtyping relation between $D$ and $\sigma(T)$: 
	\begin{enumerate}[1)]
		\item $D > \sigma (T)$\\
		$D :> \sigma(T)$, $D :> \sigma(T')$, let us consider 
		$(\ast)$ $(\sigma(T).extends = D\in CT' $, $(\sigma(T').extends = D)\in CT'$ and $(D.f : D')\in CT'$ . The class requirements we are interested in are $(T.f : U , cond)$, $(T'.f : U', cond')$. After applying merging for the two requirements and remove for the two extend clauses the resulting valid requirements, that is the requirements where their conditions hold, are $(D.f : U, cond_t)$ and $(D.f : U', cond_{t'})$ (for sake of brevity we omit the detailed steps and the non interesting requirements for us). Then after applying remove for the field clause results that $\sigma(U) = \sigma(U') = D'$. $field (f, \sigma(T'), CT') = D' = \sigma(U')$, $field (f, \sigma(T), CT')   = D' =\sigma(U)$, therefore $CT'\ \satisfy\ \sigma(T.f : U,\emptyset )$ by rule \rulename{S-Field}. We take $CT = CT'$. $CT\ \satisfy\ \sigma(CR')$, and $CT \ \satisfy\ \sigma(T.f: U, \emptyset)$, as a result $CT\ \satisfy\ \sigma(CR)$ by Lemma~\ref{lem:satisfy}.
		\item $D <: \sigma(T)$\\
		By transitivity $\sigma(T') <: \sigma(T)$.  $D$ is subtype of $\sigma(T)$ and $D.f$ is unnecessary to determine the type of $\sigma(T).f$ by field lookup rule. We take $CT = CT'\cup translate(\sigma(T. f: U))$. $CT\ \satisfy\ \sigma(T.f U, \emptyset )$ by class table construction, and $\sigma(T') <: D <: \sigma(T)$ then $CT\ \satisfy\ \sigma(CR')$ by Class Table Weakening Lemma~\ref{lem:CTweak}.
		  \\ As a result $CT \ \satisfy\ \sigma(CR)$ by Lemma~\ref{lem:satisfy}.
	\end{enumerate} 
\item $CReq = (T. m :\seq{U}\rightarrow U )$	
Proof is analogous to case field clause.
\end{proof}


%
\begin{lemma}[Add Clause Definition in CT]\label{lem:AddCTcls}
	Given a class table clause $CTcls$ declaration, a class table $CT$ and a ground set of requirements $CR$, if $CTcls\notin CT$, and 
	$CT\ \satisfy \ CR$, then $CT\cup CTcls\ \satisfy\ CR$
\end{lemma}
\begin{proof}
Tedious but straightforward. 
\end{proof}

 \begin{theorem}[Equivalence of expressions: $\Leftarrow$]\label{theo:ExprL}
	Given $e,\ T,\ S,\ R ,\ CR,\ \Sigma,$ if $\cojudge{e}{T}{S}{R}{CR}$, $\func{solve}(\Sigma ,S) = \sigma$, and $\sigma$ is a ground solution, then there exists $C,\ \ctxcolor{ \Gamma,\ CT}$, such that \\
$\judge{\ctxplus{\Gamma}{CT}}{e}{C}$, $(C, \Gamma, CT)\vartriangleright\sigma(T, R, CR)$ and  $\projExt{CT}=\Sigma$.
 \end{theorem}

 We proceed by induction on the typing derivation.
\begin{itemize}
   \item Case \rulename{TC-Var} with \cojudge{x}{U}{\emptyset}{x:U}{\emptyset}
   
   Let $\sigma$ be a ground solution, such that $\sigma(U)$ is ground by assumption. 
   
   By inversion, $U$ is fresh, $S= \emptyset$, $R =\{x: U\}$, $CR= \emptyset$.
   
   By IH, $\Gamma_x = \{x : \sigma(U)\} $
   
   Let $\sigma(U) = C$, for some $C$ we know it is ground. 
   
   Then \judge{\ctxplus{\Gamma}{CT}}{x}{C} by rule $T-Var$, and the correspondence relation holds:
   \begin{enumerate}[a)]
   	\item $\sigma(U) = C$
   	\item We take $\Gamma = \Gamma_x$, and $\Gamma = \{x :C\} \supseteq \sigma(R) = \sigma(\{x : U\})$.
   	\item We take $CT = \emptyset$, since $CR = \emptyset$, and $\sigma(CR) = \emptyset$, then $CT\ \satisfy \ \sigma(CR)$
   \end{enumerate}
   
    \item Case \rulename{TC-Field} with \cojudge{e.f_i}{U}{S}{R_e}{CR}
    
    Let $S = S_e \cup S_f$, $\sigma$ be a ground solution, such that $solve(S,\Sigma )=\sigma$, i.e., it solves $S_e$, $S_f$, and $\sigma(U)$, $\sigma(R_e)$, $\sigma(CR)$ are ground by assumption. 

    By inversion, \cojudge {e}{T_e}{S_e}{R_e}{CR_e}, $\sigma(T_e)$, $\sigma(R_e)$, $\sigma(CR_e)$ are ground.
    $CR|_{\typectxcolor{S_f}} = merge_{CR}(CR_e, (T_e.f_i : U, \emptyset )) $, and $U$ is fresh.
    

    By IH, \judge{\ctxplus{\Gamma_e}{CT_e}} {e} {C_e}, the correspondence relation holds, i.e., $ C_e = \sigma(T_e)$, $\Gamma_e \supseteq \sigma(R_e)$, $CT_e\ satisfy\ \sigma(CR_e)$. $\projExt{CT_e}= \Sigma_e$ 
        
    Let $C_i = \sigma(U)$, for some $C_i$ we know is ground.\\
     We consider three cases to construct the class table $CT$:
     \begin{enumerate}[(1)]    	
    	\item  $\{(C_e.f : C_i, CT_e)\}_{\ll^*} = \emptyset$. Since no entry of class $C_e$ or its superclasses exist for field f in the given class table $CT_e$, we add a new entry in the class table, i.e., $CT = CT_e \cup (C_e.f : C_i)$.
    	\item $\{(T'_e.f : U, CR_e)\}_{\ll}\cup \{(T'_e.f : U, CR_e)\}_{\gg} = \emptyset$, $(D.f : D')\in CT_e$ for some $D, D'$, then by Lemma~\ref{lem:cCTnCR} $CT$ is constructed. 
    	\item $(T'.f: T_f, cond' )\in CR_e$, for some $T'$, $cond'$, $(D.f :D') \in CT_e$ for some $D$, $D'$, $\sigma(T') <:D$, then by Lemma~\ref{lem:cCTcCR} $CT$ is constructed.  
    \end{enumerate}
    From above we have that $field(f_i, C_e, CT) = C_i$, and no extends clauses are added to the class table $CT_e$, 
    therefore $\projExt{CT} = \Sigma_e = \Sigma$. 
       
      Then \judge{\ctxplus{\Gamma}{CT}}{e.f_i}{C_i} holds by rule \rulename{T-Field} , and the correspondence relation holds because: 
    \begin{enumerate}[a)]
        \item $\sigma(U) = C_i$ 
      	\item  We take $\Gamma = \Gamma_e$, and $\Gamma \supseteq \sigma(R_e)$ by $IH$.
    	\item  What it is left to be shown is that $CT\ satisfy\ \sigma(CR)$, we distinguish the following cases depending on the class table construction: 
    	\begin{enumerate}[(1)']
    		\item In addition to $(1)$, $\sigma(T'_e.f : U )= \sigma(T'_e).f : \sigma(U) = C_e.f : C_i$, therefore \\
    	 $CT\ satisfy\ \sigma(T'_e.f : U,\emptyset )$ by construction of $CT$.\\
    	  $CT_e\ \satisfy\ \sigma(CR_e)$ by $IH$, and $\{(C_e.f : C_i)\}_{\ll^*} \notin CT_e$ therefore $CT\ \satisfy\ \sigma(CR_e)$ by Class Table Weakening Lemma~\ref{lem:CTweak}. \\
    	As a result $CT\ satisfy\  \sigma(CR_e) \cup \sigma(T'_e.f : U, \emptyset )$, i.e., 
    	$CT\ satisfy\  \sigma(CR)$ by Lemma~\ref{lem:satisfy}. 
    	\item In addition to $(2)$, $CT_e\ satisfy\ \sigma(CR_e) $ by $IH$, then there is $CT$, $CT\ satisfy\ \sigma(CR)$ by Lemma~\ref{lem:cCTnCR}.
  	   \item In addition to $(3)$, $CT_e\ satisfy\ \sigma(CR_e) $ by $IH$, then there is $CT$, $CT\ satisfy\ \sigma(CR)$ by Lemma~\ref{lem:cCTcCR}.  
  	 \end{enumerate}
   \end{enumerate}
    
   \vskip2ex  
     \item Case \rulename{TC-Invk} with \cojudge{e.m(\seq{e})}{U}{S}{R}{CR}.
     
     Let $S = S_e \cup \seq{S} \cup S_r\cup S_s \cup \typectxcolor{S_{cr}} \cup \{\seq{T} <: \seq{U}\}$, and $\sigma$ be a ground solution, such that it solves $S$, i.e., $\sigma$ solves $S_e, \seq{S}$, $S_s$, $S_{cr}$, $\{\seq{T} <: \seq{U}\}$, and $\sigma(U')$, $\sigma(R)$, $\sigma(CR)$ are ground.

     By inversion, \cojudge{e}{T_e}{S_e}{R_e}{CR_e}, $\sigma(T_e)$, $\sigma(R_e)$, $\sigma(CR_e)$ are ground, 
      \cojudge{\seq{e}}{\seq{T}}{\seq{S}}{\seq{R}}{\seq{CR}}, $\forall i \in [1..n] $. $\sigma(T_i)$, $\sigma(R_i)$, $\sigma(CR_i)$ are ground,
     $ R|_{S_r} = merge_R(R_e, R_1, \ldots, R_n)$,
   $CR'|_{\typectxcolor{S_{s}}} = merge_{CR}(CR_e, CR_1, \ldots, CR_n)$, \\
   $CR|_{\typectxcolor{S_{cr}}} = merge_{CR}(CR', (T_e.m: \seq{U} \rightarrow U',\emptyset ))$, and $U'$, $\seq{U}$ are fresh.
                  
     By IH, \judge{\ctxplus{\Gamma_e}{CT_e}}{e}{C_e}, the correspondence relation holds, with $ C_e = \sigma(T_e)$, $\Gamma_e \supseteq \sigma(R_e)$, 
     $CT_e\ satisfy\ \sigma(CR_e)$. $\projExt{CT_e} = \Sigma_e$\\
     By IH, \judge{\ctxplus{\seq{\Gamma}}{\seq{CT}}}{\seq{e}}{\seq{C}}, the correspondence relation holds, $\forall i\in [1..n]$. $C_i = \sigma(T'_i)$, $\Gamma_i \supseteq  \sigma(R_i)$, $CT_i\ satisfy\  \sigma(CR_i)$. $\projExt{CT_s} = \Sigma_s$, where:  \begin{equation*}
     	\Gamma_s = \bigcup_{i\in [1..n]} \{\Gamma_i \} \qquad CT_s = \bigcup_{i\in [1..n]} \{CT_i\}
     \end{equation*}
    $(\ast)$ $\{freshU(CT_e)\cap freshU(CT_s)\} = \emptyset$, and \( \bigcap_{i\in [1..n]} \{freshU(CT_i) \}= \emptyset \), by Proposition~\ref{prop:indep}. 
   $\{CT_e \cup CT_s\}\ \satisfy\ \sigma(CR_e)$. \\
   $\forall i\in [1..n]$. $\{CT_e \cup CR_s \} \ \satisfy\ \sigma(CR_i)$ by Class Table Weakening Lemma~\ref{lem:CTweak}, 
   therefore $\{CT_e \cup CT_s\}\ satisfy\ \sigma(CR_e)\cup \sigma(CR_1)\ldots \cup \sigma(CR_n)$, i.e., \\
    $\{CT_e\cup CT_s\}\ satisfy\ \sigma(CR')$ by Lemma~\ref{lem:satisfy}.
     
  Let $C = \sigma(U')$, $\seq{D} = \sigma(\seq{U})$ for some $C$, $\seq{D}$ we know are ground. 
      $\seq{C}<: \seq{D}$ holds because $\sigma(\{\seq{T}<: \seq{U}\})$ holds. \\
      We consider three cases to construct the class table $CT$:
     \begin{enumerate}[(1)]    	
    	\item  $\{(C_e.m : \seq{D} \rightarrow C , \{CT_e \cup CT_s\})\}_{\ll^*} = \emptyset$. Since no entry of class $C_e$ or its superclasses exist for method $m$ in the given class table $\{CT_e \cup CT_s\}$, we add a new entry in the class table, i.e., 
    	$CT = \{CT_e \cup CT_s\}\cup (C_e.m :\seq{D}\rightarrow C)$.
    	\item $\{(T_e. m:\seq{U}\rightarrow U' , CR')\}_{\ll}\cup \{(T_e.m:\seq{U}\rightarrow U', CR')\}_{\gg} = \emptyset$, $(D.m :\seq{D}\rightarrow D')\in \{CT_e \cup CT_s\}$ for some $D, \seq{D}, D'$, then by Lemma~\ref{lem:cCTnCR} $CT$ is constructed. 
    	\item $(T'.m:\seq{T}\rightarrow T_r, cond') \in CR'$, for some $T', \seq{T}, T_r, cond'$, $(D.m:\seq{D}\rightarrow D') \in \{CT_e \cup CT_s\}$ for some $D, \seq{D}, D'$, $\sigma(T') <:D$, then by Lemma~\ref{lem:cCTcCR} $CT$ is constructed.  
    \end{enumerate}
    From above we have that $mtype(m, C_e, CT)= \seq{D}\rightarrow C$, and no extends clauses are added to the class table $\{CT_e\cup CT_s\}$, therefore $\projExt{CT} = \Sigma_e \cup \Sigma_s = \Sigma$.

%

      Then \judge{\ctxplus{\Gamma}{CT}}{e.m(\seq{e})}{C} holds by rule \rulename{T-Invk}, and the correspondence relation holds because:      
       \begin{enumerate}[a)]
      	\item  $C = \sigma(U)$
      	\item  We take $\Gamma = \Gamma_e \cup \Gamma_s$. $\Gamma \supseteq \sigma(R_e)$, because $\Gamma_e \supseteq \sigma(R_e)$ by $IH$ and Context Weakening Lemma~\ref{lem:conxtWeak}, $\Gamma \supseteq \sigma(R_1) \ldots \Gamma \supseteq \sigma(R_n)$, because $\Gamma_i \supseteq R_i$ by $IH$ and Context Weakening Lemma~\ref{lem:conxtWeak}, therefore $\Gamma \supseteq \sigma(R)$ by definition of $merge_R$.
      	\item What is left to be shown is that $CT\ satisfy\ \sigma(CR)$. We distinguish the following cases:
      	  \begin{enumerate}[(1)']
      		\item In addition to $(1)$, $\sigma(T_e.m :\seq{U}\rightarrow U')= \sigma(T_e).m :\sigma(\seq{U})\rightarrow \sigma(U') = C_e.m :\seq{D}\rightarrow C$ therefore 
    	 $CT\ satisfy\ \sigma(T_e.m :\seq{U}\rightarrow U',\emptyset )$ by construction of $CT$. 
    	 $\{CT_e\cup CT_s\}\ \satisfy\ \sigma(CR')$ by $(\ast)$ therefore $CT\ \satisfy\ \sigma(CR')$ by Class Table Weakening Lemma~\ref{lem:CTweak}. \\
    	As a result $CT\ satisfy\  \sigma(CR') \cup \sigma(T'_e.m :\seq{U}\rightarrow U, \emptyset )$, i.e., \\
    	$CT\ satisfy\  \sigma(CR)$ by Lemma~\ref{lem:satisfy}.
  			\item  In addition $(2)$, $\{CT_e\cup CT_s\}\ satisfy\ \sigma(CR') $ by $(\ast)$, then there is $CT$,\\ $CT\ satisfy\ \sigma(CR)$ by Lemma~\ref{lem:cCTnCR}.
  	   \item In addition to $(3)$ , $\{CT_e\cup CT_s \}\ satisfy\ \sigma(CR') $ by $(\ast)$, then there is $CT$, $CT\ satisfy\ \sigma(CR)$ by Lemma~\ref{lem:cCTcCR}.  
		 \end{enumerate}
      \end{enumerate}
             
    \vskip2ex
    \item Case \rulename{TC-New} with \cojudge{new\ C(\seq{e})}{C}{S}{R}{CR}
   
	Let $S  = \seq{S} \cup S_r \cup \typectxcolor{S_{cr}} \cup \{\seq{T} <: \seq{U}\}$, $\sigma$ be a ground solution, such that it solves $S$, i.e., $\sigma$ solves $\seq{S}, S_r$, $S_{cr}$, $\{\seq{T} <: \seq{U}\}$, and $\sigma(C)$, $\sigma(R)$, $\sigma(CR)$ are ground. 

    By inversion, \cojudge{\seq{e}}{\seq{T}}{\seq{S}}{\seq{R}}{\seq{CR}}, $\forall i \in [1..n] $. $\sigma(T_i)$, 
     $\sigma(R_i)$, $\sigma(CR_i)$ are ground, $ R|_{S_r} = merge_R(R_1, \ldots, R_n)$,
   $CR_s|_{\typectxcolor{S_{s}}} = merge_{CR}(CR_1, \ldots, CR_n) $, $CR|_{\typectxcolor{S_{cr}}} = merge_{CR}(CR_s, \\(C.init(\seq{U} ),\emptyset )) $, and $\{U_i\}_{i \in [1..n] }$ are fresh.\\
     By IH, \judge{\ctxplus{\seq{\Gamma}}{\seq{CT}}}{\seq{e}}{\seq{C}}, the correspondence relation holds, 
     $\forall i\in [1..n]$. $C_i = \sigma(T_i)$, $\Gamma_i\supseteq  \sigma(R_i)$, $CT_i\ satisfy\  \sigma(CR_i)$.  
     $\projExt{CT_s} = \Sigma_s$, where:
    \begin{equation*}
     	\Gamma_s = \bigcup_{i\in [1..n]} \{\Gamma_i \} \qquad CT_s = \bigcup_{i\in [1..n]} \{CT_i\}
     \end{equation*}
    $(\ast)$ $\bigcap_{i \in [1 .. n]}\{freshU(CT_i)\}= \emptyset$, by Proposition~\ref{prop:indep}. \\
    $\forall i\in 1\ldots n$. $CT_s\ \satisfy\ \sigma(CR_i)$ by Class Table Weakening Lemma~\ref{lem:CTweak}, therefore \\
    $CT_s\ \satisfy\ \sigma(CR_1)\ldots \cup \sigma(CR_n)$, i.e., $CT_s \ \satisfy\ \sigma(CR_s)$ by Lemma~\ref{lem:satisfy}.
  
      Let $\{U_i = D_i \}_{i \in [1..n]}$ for some $C$, $\seq{D}$ we know are ground. 
      $\seq{C}<: \seq{D}$ holds because $\sigma(\{\seq{T}<: \seq{U}\})$ holds.\\
       We consider three cases to construct the class table $CT$:
     \begin{enumerate}[(1)]    	
    	\item  $\{(C.init( \seq{D}) , CT_s)\}_{\ll^*} =\emptyset$. Since no entry of class $C$ exist for the constructor $init$ in the given class table $CT_s$, we add a new entry in the class table, i.e., $CT = CT_s\cup (C.init(\seq{D}))$.
    	\item $\{(C.init(\sigma(\seq{U})), \sigma(CR_s))\}_{\ll}\cup \{(C.init(\sigma(\seq{U})), \sigma(CR_s))\}_{\gg} = \emptyset$, \\ $(C.init(\seq{D'}))\in CT_s$, for some $\seq{D'}$, then by Lemma~\ref{lem:cCTnCR} $CT$ is constructed. 
    	\item $(C.init(\sigma(\seq{U'})), \sigma(cond')) \in \sigma(CR_s)$, for some $\seq{U'}, cond'$, $(C.init(\seq{D'})) \in  CT_s$, for some $\seq{D'}$, then by Lemma~\ref{lem:cCTcCR} $CT$ is constructed.  
    \end{enumerate}
    From above we have that $fields(C, CT) =C.init(\seq{D})$, and no extends clauses are added to the class table $CT_s$, 
    therefore $\projExt{CT} = \Sigma_s = \Sigma$. 
      
%
      Then \judge{\ctxplus{\Gamma}{CT}}{C.init(\seq{e})}{C} holds, the correspondence relation holds because:      
      \begin{enumerate}[a)]
      	\item  $C = \sigma(C)$
      	\item  We take $\Gamma = \Gamma_s$. $\Gamma_1 \supseteq \sigma(R_1) \ldots \Gamma_n \supseteq \sigma(R_n)$ by $IH$, then by Context Weakening Lemma~\ref{lem:conxtWeak} $\Gamma \supseteq \sigma(R)$ by definition of $merge_R$.
      	\item What is left to be shown is that $CT\ satisfy\ \sigma(CR)$. We distinguish the following cases:
      	  \begin{enumerate}[(1)']
      		\item In addition to $(1)$, $\sigma(C.init(\seq{U}))= \sigma(C).init(\sigma(\seq{U})) = C.init(\seq{D})$ therefore \\
    	 $CT\ satisfy\ \sigma(C.init(\seq{U}),\emptyset )$ by construction of $CT$. \\
    	  $CT_s\ \satisfy\ \sigma(CR_s)$ by $(\ast)$, therefore $CT\ \satisfy\ \sigma(CR_s)$ by Class Table Weakening Lemma~\ref{lem:CTweak}.
    	As a result $CT\ satisfy\  \sigma(CR_s) \cup \sigma(C.init(\seq{U}), $\\ 
    	$\emptyset )$, i.e., $CT\ satisfy\  \sigma(CR)$ by Lemma~\ref{lem:satisfy}.
  			\item  In addition to $(2)$, $CT_s\ satisfy\ \sigma(CR_s) $ by $(\ast)$, then there is $CT$, $CT\ satisfy\ \sigma(CR)$ by Lemma~\ref{lem:cCTnCR}.
  	   \item In addition to $(2)$, $\sigma(\seq{U'}) <: \seq{D'}$, and $ CT_s\ satisfy\ \sigma(CR_s) $ by $(\ast)$, then there is $CT$, $CT\ satisfy\ \sigma(CR)$ by Lemma~\ref{lem:cCTcCR}.  
		 \end{enumerate}
      \end{enumerate}
  
  \vskip2ex
  \item Case \rulename{TC-UCast} with \cojudge{(C) e}{C}{S}{R_e}{CR_e}
  
  Let $S= S_e \cup \{T'_e <: C\}$, $\sigma$ be a ground solution, such that it solves $S$, i.e., $\sigma$ solves $S_e$, $\{T'_e <: C\}$, and $\sigma(C)$, $\sigma(R_e)$, $\sigma(CR_e)$ are ground.

    By inversion, \cojudge {e}{T_e}{S_e}{R_e}{CR_e}, $\sigma(T_e)$, $\sigma(R_e)$, $\sigma(CR_e)$ are ground.
         

  By IH, \judge{\ctxplus{\Gamma_e}{CT_e}} {e} {D}, and the correspondence relation holds, i.e., $ C_e = \sigma(T_e)$, and $\Gamma_e \supseteq \sigma(R_e)$, $CT_e\ satisfy\ \sigma(CR_e)$. $\projExt{CT_e} = \Gamma_e$       
   
    $D<:C$ holds because $\sigma(\{T_e<: C\})$ holds.
    

	Then \judge{\ctxplus{\Gamma}{CT}}{(C) e}{C} holds by rule \rulename{T-UCast}, the correspondence relation holds because: 
	\begin{enumerate}[a)]
		\item  $C = \sigma(C)$ 
		\item $\Gamma = \Gamma_e$, $\Gamma \supseteq \sigma(R_e)$ by $IH$
		\item $CT = CT_e$, $CT\ satisfy\  \sigma(CR_e)$ by $IH$
	\end{enumerate}
	From above we have that no extends clauses are added to the class table $CT_e$, 
    therefore $\projExt{CT} = \Sigma_e = \Sigma$. 

\end{itemize}

The proof is symmetric for \rulename{T-DCast}, and \rulename{T-SCast}, as in the case of \rulename{T-UCast}.

\end{proof}

\newpage
 \begin{definition}[Correspondence relation for methods]
 Given judgments \judgeM{\ctxplus{C}{CT}}{C_0 \ m(\seq{C}\  \seq{x}) \{return\ e\}\ OK},   
  $\cojudgeOM{C_0 \ m(\seq{C}\  \seq{x}) \{return\ e\} \\ OK}{S}{T}{CR}$, and $\func{solve}(\Sigma, S)=\sigma$, where $\projExt{CT}=\Sigma$. The correspondence relation between CT and CR, written $(C, CT)\vartriangleright_m \sigma(T, CR)$, is defined as  
  \begin{enumerate}[a)]
  	\item $C = \sigma(T)$
  	\item $CT\ satisfy\ \sigma(CR)$
  \end{enumerate}
 \end{definition}
 \begin{theorem}[Equivalence of methods: $\Rightarrow$]
Given $m,\ C,\ CT, $ if $\judgeM{\ctxplus{C}{CT}}{C_0 \ m(\seq{C}\  \seq{x}) \{return\ e\}\\ OK}$, 
then there exists $ S,\ T,\ CR,\ \Sigma,\ \sigma$, where $\projExt{CT}=\Sigma$ and $\func{solve}(\Sigma, S)= \sigma$, such that
$\cojudgeOM {C_0\ m(\seq{C}\ \seq{x})\ \{ return\ e_0\} \ OK}  {S} {T}{CR}$ holds, $\sigma$ is a ground solution and \\
$(C, CT)\vartriangleright_m \sigma(T, CR)$ holds. 
\label{theo:MR}
 \end{theorem}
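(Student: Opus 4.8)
The plan is to observe that a derivation of $\judgeM{\ctxplus{C}{CT}}{C_0\ m(\seq C\ \seq x)\{\keyw{return}\ e\}\ \term{OK}}$ can end only in rule \rulename{T-Method}, so the induction on the method typing judgement has a single case, which reduces to the already-established Theorem~\ref{theo:ExprR} for the body $e$. First I would invert \rulename{T-Method}, obtaining $\judge{\ctxplus{\seq x: \seq C; \keyw{this}: C}{CT}}{e}{E_0}$ with $E_0 <: C_0$, $\func{extends}(C, CT) = D$, and the conditional override clause: if $\func{mtype}(m, D, CT) = \seq D\rightarrow D_0$ then $\seq C = \seq D$ and $C_0 = D_0$. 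Applying Theorem~\ref{theo:ExprR} to $e$ in context $\Gamma = \seq x: \seq C; \keyw{this}: C$ and class table $CT$ then yields $T_e, S_e, R_e, CR_e$, a relation $\Sigma$ with $\projExt{CT} = \Sigma$, and a ground solution $\sigma_e$ of $S_e$ relative to $\Sigma$ such that $\cojudge{e}{T_e}{S_e}{R_e}{CR_e}$, $E_0 = \sigma_e(T_e)$, $\Gamma\supseteq\sigma_e(R_e)$, and $CT\ \satisfy\ \sigma_e(CR_e)$. Since $\dom{\Gamma} = \{\seq x,\keyw{this}\}$ and $\Gamma\supseteq\sigma_e(R_e)$ forces $\dom{R_e}\subseteq\dom{\Gamma}$, the side condition $R_e - \keyw{this} - \seq x = \emptyset$ of \rulename{TC-Method} is discharged.

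Next I would choose fresh $U_c, U_d$ and build the constraints $S_c$, the $S_x$, $S_s = \{T_e <: C_0\}$, and $CR|_{S_{cr}} = \func{merge}_{CR}(CR_e, (\extendsCR{U_c}{U_d},\emptyset), (U_d.m: \seq C\rightarrow C_0,\emptyset)_{opt})$, so that \rulename{TC-Method} derives $\cojudgeOM{C_0\ m(\seq C\ \seq x)\{\keyw{return}\ e\}\ \term{OK}}{S}{U_c}{CR}$ with $S = S_e\cup S_s\cup S_c\cup S_{cr}\cup \seq{S}_x$ and $T = U_c$. Put $\sigma = \{U_c\smap C\}\circ\{U_d\smap D\}\circ\sigma_e$; since $U_c, U_d$ are fresh, by Proposition~\ref{prop:indep} and Corollary~\ref{cor:assoc} this $\sigma$ agrees with $\sigma_e$ on everything occurring in $S_e$, $R_e$, and $CR_e$. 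I then verify that $\sigma$ satisfies every constraint of $S$ relative to $\Sigma$ (which is what ``$\func{solve}(\Sigma,S)=\sigma$'' means in the convention used for the expression cases): it solves $S_e$ because $\sigma_e$ does; it solves $S_s$ because $\sigma(T_e) = \sigma_e(T_e) = E_0 \subtpe_\Sigma C_0$ (FJ subtyping with respect to $CT$ being exactly $\subtpe_{\projExt{CT}}$); and it solves $S_c$ and each $S_x$ because whenever $\keyw{this}$ (resp.\ $x_i$) lies in $\dom{R_e}$, $\Gamma\supseteq\sigma_e(R_e)$ forces $\sigma_e(R_e(\keyw{this})) = C$ (resp.\ $\sigma_e(R_e(x_i)) = C_i$), which matches $\sigma(U_c) = C$ (resp.\ $\sigma(C_i) = C_i$). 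For $S_{cr}$ I would invoke the iterated form of Lemma~\ref{lem:mergeCR-cons}, for which it suffices to exhibit, for each of the three merged components, a class table satisfying it under $\sigma$ with a ground argument; here $CT$ works, since it satisfies $\sigma(CR_e) = \sigma_e(CR_e)$ by Theorem~\ref{theo:ExprR}, it satisfies $(\extendsCR{C}{D},\emptyset) = \sigma(\extendsCR{U_c}{U_d},\emptyset)$ by rule \rulename{S-Extend} using $\func{extends}(C, CT) = D$, and it satisfies $(D.m: \seq C\rightarrow C_0,\emptyset)_{opt} = \sigma(U_d.m: \seq C\rightarrow C_0,\emptyset)_{opt}$ (treated below).

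Groundness of $\sigma$ then follows: $\sigma(T) = \sigma(U_c) = C$ is ground, and $\sigma(CR)$ is ground because each merged component is — $\sigma_e(CR_e)$ by Theorem~\ref{theo:ExprR}, and $(\extendsCR{C}{D},\emptyset)$, $(D.m: \seq C\rightarrow C_0,\emptyset)_{opt}$ by construction — so the result of $\func{merge}_{CR}$ is too. Finally the correspondence relation $(C, CT)\vartriangleright_m \sigma(U_c, CR)$ reads off directly: part (a) is $C = \sigma(U_c)$, and part (b), $CT\ \satisfy\ \sigma(CR)$, follows from $CT$ satisfying each merged component together with the iterated form of Lemma~\ref{lem:satisfy}; and $\projExt{CT} = \Sigma$ holds since $CT$ is never modified.

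The main obstacle will be the optional method requirement, which is precisely where the argument departs from the expression cases. I expect to need a case distinction mirroring the conditional premise of \rulename{T-Method}: if $\func{mtype}(m, D, CT)$ is undefined then $(D.m: \seq C\rightarrow C_0,\emptyset)_{opt}$ is vacuously satisfied, and if it is defined then the inversion hypothesis forces $\func{mtype}(m, D, CT) = \seq C\rightarrow C_0$, so the optional requirement is satisfied again. Making this precise also requires pinning down the satisfaction relation for optional class requirements, which is not among the rules of Figure~\ref{fig:satisfy} and must be supplied, and then threading it through $\func{merge}_{CR}$ and Lemma~\ref{lem:satisfy}; this extra ingredient is why the method case needs a different strategy than the expression cases established above.
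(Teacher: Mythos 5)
Your proposal is correct and follows essentially the same route as the paper's proof: invert \rulename{T-Method}, apply the expression-equivalence theorem to the body, extend $\sigma_e$ with the fresh $U_c, U_d$, verify each constraint component via the merge lemmas, and resolve the optional method requirement by the case split on whether $\func{mtype}(m,D,CT)$ is defined. The only differences are presentational — you collapse the paper's four-way case analysis on the occurrence of $\seq{x}$ and $\keyw{this}$ in $R_e$ into one groundness argument, and you correctly observe that satisfaction of optional requirements is not among the rules of Figure~\ref{fig:satisfy} and must be supplied, a gap the paper itself papers over informally.
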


\begin{proof} By induction on the typing judgment.	

  Case \rulename{T-Method} with  \judgeM{\ctxplus{C}{CT}} {C_0 \ m(\seq{C}\  \seq{x}) \{return\ e\}\ OK}.
  
  By inversion, \judge{\ctxplus{\seq{x}: \seq{C}; this: C}{CT}} {e} {E_0},  $ \{E_0 <: C_0\}$, $extends(C, CT) = D$, i.e., $(C.extends = D) \in CT$ by rule \rulename{Extends}, and  
  $ if\ mtype(m, D, CT) = \seq{D}\rightarrow D_0, then\ \seq{C} = \seq{D};\ C _0= D_0$.
  
  By Theorem~\ref{theo:ExprR}
, \cojudge {e_0} {T_e} {S_e} {R_e} {CR_e}, where $\mathit{solve}(\projExt{CT}, S_e) = \sigma_e$, $\sigma_e(T_e)$, $\sigma_e(R_e)$, $\sigma_e(CR_e)$ are ground and the  relation holds, i.e., $E_0 = \sigma_e(T_e)$, 
  $\{\seq{x} : \seq{C} ; this: C\} \supseteq  \sigma_e(R_e)$, $CT\ satisfy\  \sigma_e(CR_e)$.
  
   We define the set of constraints $S'$ and the solution $\sigma'$ depending on the occurrence of $\seq{x}, this$ in $R_e$, and $U_c$ is fresh.
      
  \begin{itemize}
  \item If $\seq{x} \in dom(R_e)$ and $this \in dom(R_e)$, then $\{R_e(x_i) = U_i\}_{i\in [1..n]}$, $R_e(this) = U_c$, for $\seq{U}$ fresh. 
  We choose $S' =  \{C_i = R_e(x_i)\}_{i\in [1..n]}; \{C = R_e(this)\}$, $\sigma' = \{U_c\smap C\}\circ \{U_i \smap C_i\}_{i\in [1..n]}$.
     \item If $\seq{x} \in dom(R_e)$ and $this \not\in dom(R_e)$, then $\{R_e(x_i) = U_i\}_{i\in [1..n]}$, for $\seq{U}$ fresh. 
  We choose $S' = \{C_i = R_e(x_i)\}_{i\in [1..n]}$, $\sigma' = \{U_c\smap C\}\circ \{U_i \smap C_i\}_{i\in [1..n]}$.
   \item If $\seq{x} \not\in dom(R_e)$ and $this \in dom(R_e)$, then $R_e(this) = U_c$. 
  We choose $S' = \{C = R_e(this)\}$, $\sigma' = \{U_c\smap C\}$.
     \item If $\seq{x} \not\in dom(R_e)$ and $this \not\in dom(R_e)$.
  We choose $S' = \emptyset$, $\sigma' =\{U_c \smap C\}$.
  \end{itemize}
In all the cases above we have $\{U_c \smap C\}$, regardless the occurrence of $this$ in $R_e$, because $U_c$ serves as a placeholder for the current class where the method $m$ is declared as part of.
     
   Let $U_d$ be fresh, $R = R_e - this - \seq{x}$, $CR|_{S_{cr}} = merge_{CR}(CR_e, (U_c.extends: U_d, \emptyset), (U_d.m : \seq{C}\rightarrow C_0, \emptyset )_{opt})$, $S = S_e \cup \{T_e <: C_0\} \cup S_{cr} \cup  S'$, 
   \\$\sigma = \{U_d\smap D \} \circ \sigma' \circ \sigma_e $.
     

 We show why $R$ is $\emptyset$. The intuition behind it is that we know the actual types of the parameters since we have method declaration for $m$, and we know the actual type of $this$ since it is given the current class $C$ where method $m$ is declared as part of. $\Gamma \supseteq \sigma(R_e)$ by $IH$, i.e., all possible elements in $R_e$ are $\seq{X},\ this$ and 
	$\Gamma = \{\seq{x} : \seq{C} ; this: C\} - \seq{x} - this = \emptyset$, therefore $R= R_e -\seq{x} - this = \emptyset$. 

Then \cojudgeOM {C_0\ m(\seq{C}\ \seq{x})\ \{ return\ e_0\} \ OK}  {S} {U_c}{CR} holds by rule \rulename{T-Method}. \\
$\sigma$ solves $S$ because it solves $S_e,\ S'$, $S_{cr}$, and $\{T_e <: C_0\}$ as shown below:
\begin{itemize}
 \item $solve(\projExt{CT}, S_e) = \sigma_e$ and $\sigma = \{U_d\smap D \} \circ \sigma' \circ \sigma_e$
implies that $\sigma$ solves $S_e$
\item $\sigma$ solves $S'$ by Lemma~\ref{lem:merge-cons}.
\item $\sigma_e(CR_e)$ is ground  by Theorem~\ref{theo:ExprR}.\\
$(\ast)$ $\sigma(U_c.extends: U_d, \emptyset )$ is ground because $\sigma(U_c.extends : U_c)= (C.extends: D)$ and $C.extends : D$ is ground. \\
$(\ast\ast)$ 	$\sigma((U_d.m :\seq{C} \rightarrow C_0, \emptyset )_{opt})$ is ground because $\sigma(U_d.m: \seq{C} \rightarrow C_0 ) = 
  (\sigma(U_d).m: \sigma(\seq{C}) \rightarrow \sigma(C_0) ) = D.m: \seq{C}\rightarrow C_0$ and $D.m : \seq{C}\rightarrow C_0$ is ground. \\
$CT\ \satisfy\ \sigma_e(CR_e)$ by Theorem~\ref{theo:ExprR}.\\
$(\star)$ $CT\ satisfy\ \sigma(U_c.extends: U_d, \emptyset )$ because 
	$(C.extends = D) \in CT$ hence by rule \rulename{S-Extends} holds that 
	 $CT\ satisfy\ (C.extends: D,\emptyset )$, and $\sigma(U_c.extends : U_d)= (C.extends: D)$.\\
To show that $CT\ satisfy\ \sigma((U_d.m :\seq{C}\rightarrow C_0,\emptyset )_{opt})$ we distinguish the following cases:
\begin{itemize}
 \item[] $(\star')$ if $mtype(m, D, CT) = \seq{D}\rightarrow D_0$ is true then the optional class requirement $(U_d.m :\seq{C}\rightarrow C_0 )_{opt}$ is considered and $\seq{C}= \seq{D}, C_0 = D_0$, i.e., type of $m$ declared in $D$ is the same as the type of $m$ declared in $C$. 
$\sigma(U_d.m: \seq{D} \rightarrow D_0 ) = D.m: \seq{D}\rightarrow D_0$ and $mtype(m, D, CT) = \seq{D} \rightarrow D_0 = \seq{C}\rightarrow C_0$, therefore by rule \rulename{S-Method} holds that $CT \ satisfy\ \sigma(U_d.m : \seq{C}\rightarrow C_0,\emptyset)$. 
 \item[] $(\star'')$ if $mtype(m, D, CT) = \seq{D}\rightarrow D_0$  is false, then the optional class requirement $(U_d.m :\seq{C}\rightarrow C_0 )_{opt}$ is not considered. It is satisfiable by default since it is a not valid class requirement. 
   \end{itemize} 
 As a result $\sigma$ solves $S_{cr}$ by Lemma~\ref{lem:mergeCR-cons}.
  \item Since $\{E_0<: C_0\}$ holds and $\sigma(\{T_e<: C_0\})= \{E_0<: C_0\}$, then $\sigma(\{T_e<: C_0\})$ holds.
\end{itemize}

 $\sigma$ is ground solution because: 
\begin{enumerate}[1)]
	\item $\sigma(U)$ is ground because $\sigma(U) = C$ and $C$ is ground.
	\item $\sigma(CR_e)$ is ground because $\sigma(CR_e)= (\{U_d\smap D\}\circ \sigma' \circ\sigma_e )(CR_e) =(\{U_d\smap D\}\circ \sigma' \})(\sigma_e(CR_e)) \\ = \sigma_e(CR_e)$ because $\sigma_e(CR_e)$ is ground by Theorem~\ref{theo:ExprR}. \\
	$\sigma(U_c.extends: U_d, \emptyset )$ is ground by $(\ast)$. 	$\sigma((U_d.m :\seq{C} \rightarrow C_0, \emptyset )_{opt})$ is ground by $(\ast\ast)$. As a result $\sigma(CR)$ is ground by definition of $merge_{CR}$.
\end{enumerate}

The correspondence relation holds because:
\begin{enumerate}[a)]
	\item $C = \sigma(U_c)$
	\item $CT\ \satisfy\ \sigma(CR_e)$ because $CT\ \satisfy\ \sigma_e(CR_e)$ by Theorem~\ref{theo:ExprR} and from $3)$ $\sigma(CR_e) = \sigma_e(CR_e)$. 
	$CT\ satisfy\ \sigma(U_c.extends: U_d, \emptyset )$ by $(\star)$. To show that $CT\ satisfy\ \sigma(CR)$, is left to scrutinize $CT\ satisfy\ \sigma((U_d.m :\seq{C}\rightarrow C_0,\emptyset )_{opt})$. 
	 We distinguish the following cases:
	 \begin{itemize}
	 	\item if $mtype(m, D, CT) = \seq{D}\rightarrow D_0$ is true then $CT \ satisfy\ \sigma(U_d.m : \seq{C}\rightarrow C_0,\emptyset)$
	 	by $(\star')$. As a result $CT \satisfy\ \sigma(CR_e)\cup \sigma(U_c.extends: U_d, \emptyset ) \cup \sigma(U_d.m:\seq{C}\rightarrow C_0, \emptyset )$, i.e., $CT\ satisfy\ \sigma(CR)$ by Lemma~\ref{lem:satisfy}.
        \item if $mtype(m, D, CT) = \seq{D}\rightarrow D_0$  is false, then is not considered from $(\star'')$. As a result $CT \satisfy\ \sigma(CR_e)\cup \sigma(U_c.extends: U_d, \emptyset )$, i.e., \\
        $CT\ satisfy\ \sigma(CR)$ by Lemma~\ref{lem:satisfy}.
 	 \end{itemize}
	 
%
\end{enumerate}
\end{proof}

  \begin{theorem}[Equivalence of methods: $\Leftarrow$]
Given $m,\ T,\ S,\ CR,\ \Sigma, $ if $\cojudgeOM {C_0\ m(\seq{C}\ \seq{x})\ \{ return\ e_0\} \\ OK}  {S} {T}{CR}$, $\func{solve}(\Sigma, S) = \sigma$, and $\sigma$ is a ground solution, then there exists $ C,\ CT$, such that
$\judgeM{\ctxplus{C}{CT}}{C_0 \ m(\seq{C}\  \seq{x}) \{return\ e\}\ OK}$ holds, $(C, CT)\vartriangleright_m \sigma(T, CR)$ and $\projExt{CT}=\Sigma$. 
\label{theo:ML}
 \end{theorem}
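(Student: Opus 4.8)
The plan is to proceed by inversion on the co-contextual method judgment: the only applicable rule is \rulename{TC-Method}, so a single (non-inductive) case analysis suffices here, the inductive burden having already been discharged in Theorem~\ref{theo:ExprL} for expressions. Inverting \rulename{TC-Method} yields a co-contextual typing of the body $\cojudge{e}{T_e}{S_e}{R_e}{CR_e}$, the parameter/self constraints $\seq{S_x}$ and $S_c$, the return-subtype constraint $S_s = \{T_e <: C_0\}$, the emptiness side condition $R_e - \keyw{this} - \seq{x} = \emptyset$, fresh $U_c$ and $U_d$, and $CR|_{S_{cr}} = \func{merge}_{CR}(CR_e, (\extendsCR{U_c}{U_d}, \emptyset), (U_d.m : \seq{C}\rightarrow C_0, \emptyset)_{opt})$, with $T = U_c$ and $S = S_e \cup S_s \cup S_c \cup S_{cr} \cup \seq{S_x}$. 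Since $\func{solve}(\Sigma, S) = \sigma$ and $S \supseteq S_e$, $\sigma$ solves $S_e$, and as $\sigma$ is a ground solution, $\sigma(T_e), \sigma(R_e), \sigma(CR_e)$ are ground. I would then invoke Theorem~\ref{theo:ExprL} on the body to get $E_0, \Gamma_e, CT_e$ with $\judge{\ctxplus{\Gamma_e}{CT_e}}{e}{E_0}$, $E_0 = \sigma(T_e)$, $\Gamma_e \supseteq \sigma(R_e)$, $CT_e\ \satisfy\ \sigma(CR_e)$, and $\projExt{CT_e} = \Sigma$.

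Next I would set $C := \sigma(U_c)$ (ground, as $\sigma(T) = \sigma(U_c)$ is ground) and $D := \sigma(U_d)$ (ground, as $\sigma(CR)$ is ground). Using that $\sigma$ solves $\seq{S_x}$ and $S_c$ together with $R_e - \keyw{this} - \seq{x} = \emptyset$ (which forces $\dom{R_e} \subseteq \{\keyw{this}\}\cup\{\seq{x}\}$), I would show $\{\seq{x}: \seq{C}; \keyw{this}: C\} \supseteq \sigma(R_e)$: for $x_i \in \dom{R_e}$ the constraint $C_i = R_e(x_i)$ gives $\sigma(R_e(x_i)) = C_i$, and for $\keyw{this} \in \dom{R_e}$ the constraint $U_c = R_e(\keyw{this})$ gives $\sigma(R_e(\keyw{this})) = \sigma(U_c) = C$. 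By inspection of the proof of Theorem~\ref{theo:ExprL} the context it produces can be taken to have domain $\dom{\sigma(R_e)}$, so repeated context weakening (Lemma~\ref{lem:conxtWeak}) retypes the body as $\judge{\ctxplus{\seq{x}: \seq{C}; \keyw{this}: C}{CT_e}}{e}{E_0}$; and $E_0 <: C_0$ follows from $\sigma$ solving $S_s$ and $E_0 = \sigma(T_e)$.

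The core is then constructing $CT$ from $CT_e$ so that (i) $CT\ \satisfy\ \sigma(CR)$, (ii) $\func{extends}(C, CT) = D$, (iii) the \rulename{T-Method} override side condition holds, and (iv) $\projExt{CT} = \Sigma$. I would extend $CT_e$ with the clause $C.\keyw{extends} = D$ (unless already present) and, following the three cases used in the expression backward proof, possibly with a method clause for $m$ on $D$; that $\sigma$ solves $S_{cr}$ together with the merge lemmas (Lemma~\ref{lem:mergeCR-cons}, Lemma~\ref{lem:satisfy}) and the class-table construction/weakening lemmas (Lemmas~\ref{lem:CTweak}, \ref{lem:cCTnCR}, \ref{lem:cCTcCR}) then yields $CT\ \satisfy\ \sigma(CR_e)$ and $CT\ \satisfy\ \sigma(\extendsCR{U_c}{U_d}, \emptyset) = (C.\keyw{extends}: D)$, hence $\func{extends}(C, CT) = D$. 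Checking $\projExt{CT} = \Sigma$ reduces to $(C, D) \in \Sigma$, which holds because $D = \sigma(U_d)$ is precisely the value $\func{solve}$ assigns to $U_d$ relative to $\Sigma$ (any extends requirement on $U_c$ already present in $CR_e$ being unified with $U_d$ through the constraints $\func{merge}_{CR}$ emits into $S_{cr}$).

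The main obstacle is the \emph{optional method requirement} $(U_d.m : \seq{C}\rightarrow C_0, \emptyset)_{opt}$, the co-contextual counterpart of the conditional override check in \rulename{T-Method}. Here I would split on whether $\func{mtype}(m, D, CT)$ is defined, mirroring the case split in the proof of Theorem~\ref{theo:MR}. If it is undefined (neither $D$ nor any $CT$-supertype of $D$ declares $m$), the optional requirement is vacuously satisfied and the \rulename{T-Method} side condition is vacuously true. If $\func{mtype}(m, D, CT) = \seq{D}' \rightarrow D_0'$, then the relevant method clause was already in $CT_e$, so merging the optional requirement with $CR_e$ emitted the equality constraints $\seq{C = D'}$ and $C_0 = D_0'$ into $S_{cr}$; since $\sigma$ solves these and all types involved are ground, $\seq{C} = \seq{D}'$ and $C_0 = D_0'$, which is exactly the override side condition. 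Careful bookkeeping of which method clause for $m$ actually lands in $CT$ across these subcases is the delicate point; with it settled, \rulename{T-Method} applies, and the correspondence relation $(C, CT)\vartriangleright_m \sigma(T, CR)$ follows from $C = \sigma(U_c) = \sigma(T)$ and $CT\ \satisfy\ \sigma(CR)$.
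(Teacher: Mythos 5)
Your overall strategy coincides with the paper's: invert \rulename{TC-Method}, apply Theorem~\ref{theo:ExprL} to the body, set $C = \sigma(U_c)$ and $D = \sigma(U_d)$, construct $CT$ from $CT_e$ by case analysis on which clauses are already present, and split on whether $\func{mtype}(m, D, CT)$ is defined to discharge the override side condition. Your treatment of the context ($\{\seq{x}:\seq{C};\keyw{this}:C\} \supseteq \sigma(R_e)$ plus weakening) is if anything more explicit than the paper's, which simply asserts $\Gamma_e = \{\seq{x}:\seq{C};\keyw{this}:C\}$, and omitting the paper's final step of inserting the clause $C.m:\seq{C}\to C_0$ into $CT$ is harmless for the theorem as stated, since \rulename{T-Method} and the correspondence relation do not demand it.

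There is, however, one step that does not go through as you state it. In the case $\func{mtype}(m, D, CT) = \seq{D}'\to D_0'$ you claim that merging the optional requirement with $CR_e$ ``emitted the equality constraints $\seq{C} = \seq{D}'$ and $C_0 = D_0'$ into $S_{cr}$''. But $\func{merge}_{CR}$ only compares requirements against other \emph{requirements}, never against class-table clauses, and the constraints it emits for the pair $(U_d.m:\ldots)_{opt}$ versus some $(T.m:\ldots)\in CR_e$ are all guarded by the condition $T = U_d$; since $U_d$ is fresh (the paper's proof explicitly rules out a pre-existing requirement with receiver $U_d$ in $CR_e$, its case (6)), $\sigma$ need not validate these guards, so $S_{cr}$ does not in general contain the unconditional equalities you need. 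The paper instead extracts the override condition from the \emph{satisfaction} side: the constructed $CT$ must satisfy $\sigma((U_d.m:\seq{C}\to C_0)_{opt}) = (D.m:\seq{C}\to C_0)$, and by rule \rulename{S-Method} this pins $\func{mtype}(m,D,CT)$ to $\seq{C}\to C_0$; when a clause for $m$ on a supertype of $D$ already sits in $CT_e$, Lemma~\ref{lem:cCTnCR} is what reconciles the required signature with the declared one (its case $D :> \sigma(T)$ concludes $\sigma(U) = D'$). You already invoke that lemma for the class-table construction, so the repair is to route the equalities $\seq{C} = \seq{D}'$, $C_0 = D_0'$ through it rather than through $S_{cr}$.
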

 
 \begin{proof} By induction of the typing judgment, and case analysis of the class table construction. \\
Case \rulename{TC-Method} with \cojudgeOM{C_0 \ m(\seq{C}\  \seq{x}) \{return\ e\}\ OK} {S}{U_c}{CR}.
  
  Let  $S =S_e \cup \{T_e<: C_0\}  \cup S_c \cup S_{cr} \cup S_x $, $\sigma$ be a ground solution, such that $solve(\Sigma, S) =\sigma$, i.e., $\sigma$ solves $S_e, S_x, S_{cr}, S_c, \{T_e<: C_0\}$, and $\sigma(U_c)$, $\sigma(CR)$ are ground.
  
  By inversion, \cojudge {e} {T_e} {S_e} {R_e} {CR_e}, $\sigma(T_e)$, $\sigma(R_e)$, $\sigma(CR_e)$ are ground. 
   $CR'|_{S'} = merge_{CR}(CR_e, (U_c.extends: U_d , \emptyset ))$
  $CR|_{S_{cr}} = merge_{CR}(CR', (U_d.m :\seq{C}\rightarrow C_0,\emptyset )_{opt})$, where $U_c$, $U_d$ are fresh, 
  $ R_e - this - \seq{x} = \emptyset$. $\sigma$ solves $S'$ by Lemma~\ref{lem:merge-cons}. 
    
  By Theorem~\ref{theo:ExprL}
, \judge{\ctxplus{\Gamma_e}{CT_e}} {e} {E_0}, the correspondence relation holds, i.e., $E_0 = \sigma(T_e)$, 
  $\Gamma_e \supseteq  \sigma(R_e)$, $CT_e \ satisfy\ \sigma(CR_e)$. $\Gamma_e = \{\seq{x} : \seq{C} ; this: C\}$, because $ R_e - this - \seq{x} = \emptyset$ and $\Gamma_e\supseteq \sigma(CR_e)$. $\projExt{CT_e} =\Sigma_e$
  
   Let $C = \sigma(U_c)$, $D = \sigma(U_d)$ for some $C,\ D$ we know are ground. $E_0 <: C_0$ holds because $\sigma(T_e <: C_0)$ holds.\\
    $Context$ empty because $\Gamma_e - \{\seq{x}: \seq{C}; this: C\} =\emptyset$.\\
   
   We proceed by construction of the class table in steps. \\   
 First we consider three cases to construct the class table CT' with respect to the requirement for $extends$:
 \begin{enumerate}[(1)]
 	\item\textbf{clause not in class table}. $(C.extend = D) \notin CT_e$, then \\
 	$CT' = CT_e;\ (C.extends = D)$.
 	\item \textbf{clause in class table, but not in class requirements}. $(C.extends = D) \in CT_e$, and $(U_c.extends : U_d, \emptyset) \notin CR_e$, then $CT' = CT_e$. 
 	\item \textbf{clause in class table and class requirements}. $(C.extends = D) \in CT_e$, and $(U_c.extends : U_d, \emptyset) \in CR_e$ is not a valid case, because $U_c$ is defined fresh and in $CR_e$ we do not have existing requirements regarding $U_c$ for extend, because in method body we can have recursive method call or field access and not in extends, i.e., $this$ can invoke the method itself or other methods and fields but not extends.
\end{enumerate}
%
From above and by rule \rulename{Extends} we have that $extends(C, CT') = D$, an extends is added to the class table $CT_e$, therefore $\projExt{CT'} = \Sigma_e \cup (C,D) = \Sigma'$\\\\
Second we consider three cases to construct the class table CT' with respect to the requirement for method $m$:
 \begin{enumerate}[(1)]
   \setcounter{enumi}{3}
  	\item \textbf{clauses of superclasses not in class table}. $\{(D.m : \seq{D}\rightarrow D_0 , CT')\}_{\ll^*} = \emptyset$, then $CT = CT' $
  	\item \textbf{compatible clauses in class table, but not in class requirements}. 
  	$\{(D. m:\seq{C}\rightarrow C' , CR_e)\}_{\ll}\cup \{(D.m:\seq{D}\rightarrow D', CR_e)\}_{\gg} = \emptyset$, 
  	$(D'.m :\seq{D'}\rightarrow D_r)\in CT'$ for some $D', \seq{D'}, D_r$, then by Lemma~\ref{lem:cCTnCR} $CT'$ is constructed. 
  	\item \textbf{compatible clauses in class table, and in class requirements}. $(D'.m : \seq{D'}\rightarrow D_r) \in CT'$ for some $D'$, $\seq{D'}, D_r$, and $(U_d.m \seq{C}\rightarrow C_0, \emptyset) \in CR_e$ is not a valid case because $U_d$ is defined fresh and $U_d \neq R_e(this)$, i.e., it is possible to have in the body of $m$ $this.m$, but it is impossible to have recursive call of $m$ invoked by $U_d$, as it is defined fresh and different type than $this$.
 \end{enumerate}
 From above we have that $if\ mtype(m, D, CT) = \seq{D}\rightarrow D_0\ then\ \seq{C}= \seq{D}; C_0 = D_0$, no extends clauses are added to the class table $CT'$, therefore $\projExt{CT} = \Sigma' = \Sigma$\\ \\
%
  Then \judgeM{\ctxplus{C}{CT}} {C_0 \ m(\seq{C}\  \seq{x}) \{return\ e\}\ OK} holds by rule \rulename{T-Method}, the correspondence relation holds because: 
  \begin{enumerate}[a)]
  	\item $C = \sigma(U_c)$
  	\item What is left to be shown is that $CT\ satisfy \sigma(CR)$, first we start by showing \\
  	$CT'\ satisfy\ \sigma(CR')$ and we distinguish the following cases: 
  	\begin{enumerate}[(1)']
  		\item In addition to $(1)$ $\sigma(U_c.extends: U_d)= \sigma(U_c).extends : \sigma(U_d) = C.extends : D$ therefore
    	 $CT'\ satisfy\ \sigma(U_c.extends : U_d,\emptyset )$ by construction. $CT_e\ \satisfy\ \sigma(CR_e)$ by Theorem~\ref{theo:ExprL}, and $\sigma(U_c.extends : U_d)  \notin CT_e$, therefore 
    	 $CT'\ \satisfy\ \sigma(CR_e)$ by Class Table Weakening Lemma~\ref{lem:CTweak}.\\
    	As a result $CT'\ satisfy\  \sigma(CR_e) \cup \sigma(U_c.extends : U_d, \emptyset )$, i.e., \\
    	$CT'\ satisfy\  \sigma(CR)$ by Lemma~\ref{lem:satisfy}.
  	\item  In addition to $(2)$, $CT' \ \satisfy\ (C.extends: D, \emptyset )$ by rule \rulename{S-Extend}, and $(C.extends : D) = \sigma(U_c.extends: U_d)$, therefore $CT' \ \satisfy\ (U_c.extends: U_d, \emptyset)$. \\
  	$CT'\ \satisfy\ \sigma(CR_e)$ by Theorem~\ref{theo:ExprL}. 
  	As a result $CT'\ satisfy\  \sigma(CR_e) \cup \sigma(U_c.extends : U_d, \emptyset )$, i.e., 
    	$CT'\ satisfy\  \sigma(CR)$ by Lemma~\ref{lem:satisfy}.\\
    	
 \hspace{-2.4em} Second we show that $CT\ satisfy\ \sigma(CR)$, and we distinguish the following cases:
   	  	\item In addition to $(4)$, the class requirement $(U_d.m : \seq{C}\rightarrow C_0 )_{opt}$ is not considered since it is an optional requirement, therefore $CR = CR'$, $CT'\ satisfy\ \sigma(CR')$. As a result $CT\ \satisfy\ \sigma(CR)$.
  	 \item  In addition to $(5)$, $CT\ satisfy\ \sigma(CR)$ by Lemma \ref{lem:cCTnCR}.
  	\end{enumerate} 
  	\vskip 1em
  	Method declaration consist in adding method clause $m$ in $CT$, whether it is already member of the $CT$ or not. Also, adding the method $m$ in $CT$ does not affect the satisfaction of the class requirements. We are interested that the clause $m$ with its actual ty
  	pe is part of class table. Namely resulting class table $CT_r$, such that $(C.m: \seq{C}\rightarrow C_0) \in CT_r$, $CT_r \satisfy\ \sigma(CR)$. \\
    Lastly we show that $CT_r\ satisfy\ \sigma(CR)$ and we distinguish the following cases: 
  	\begin{description}		 
  	   \item[$\bullet$] $(C.m : \seq{C}\rightarrow C_0) \notin CT$ then we add declaration in the class table, i.e., $CT_r = CT\cup\ (C.m:\seq{C}\rightarrow C_0)$ and $CT_r\ satisfy\ \sigma(CR)$ by Lemma \ref{lem:AddCTcls}.
  		\item[$\bullet$] $C.m \in dom(CT)$ then $CT_r = CT$. Hence, $CT_r\ \satisfy\ \sigma(CR)$.
  	\end{description} 
  \end{enumerate}
 \end{proof}

\newpage
\begin{definition}[Correspondence relation for classes]
 Given \judgeM{CT}{class\ C\ extends\ D\ \{\seq{C}\ \seq{f}; K\ \seq{M}\}\ OK} and 
  $\cojudgeOC {class\ C\ extends\ D\ \{\seq{C}\ \seq{f};\ K\ \seq{M}\}\ OK} {S} {CR}$, and $\func{solve}(\Sigma,S)=\sigma$, where $\projExt{CT}=\Sigma$. The correspondence relation between CT and CR, written $(CT)\vartriangleright_c \sigma(CR)$, is defined as:
  \begin{enumerate}[a)]
  \item $CT\ satisfy\ \sigma(CR)$
 \end{enumerate}
 \end{definition}
 \begin{theorem}[Equivalence of classes: $\Rightarrow$]
Given $C,\ CT, $ if \judgeM{CT}{class\ C\ extends\ D\ \{\seq{C}\ \seq{f}; K\\\seq{M}\}\ OK},
then there exists $ S,\ CR,\ \Sigma,\ \sigma$, where $\projExt{CT}=\Sigma$ and $\func{solve}(\Sigma, S)=\sigma$, such that
\cojudgeOC{class\ C\ extends\ D \{\seq{C}\ \seq{f};\ K\ \seq{M}\}\ OK} {S} {CR} holds, $\sigma$ is a ground solution and 
$(CT)\vartriangleright_c \sigma(CR)$ holds. 
\label{theo:CR}
 \end{theorem}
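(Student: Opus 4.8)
The plan is to argue by induction on the class typing judgment, which for classes reduces to the single case of rule \rulename{T-Class}; the inductive content has already been discharged in the method equivalence theorem (Theorem~\ref{theo:MR}), which I would invoke as a black box. Concretely, from $\judgeM{CT}{\keyw{class}\ C\ \keyw{extends}\ D\ \{\seq{C}\ \seq{f};\ K\ \seq{M}\}\ \term{OK}}$ I would invert \rulename{T-Class} to extract the matching constructor $K = C(\seq{D}'\ \seq{g}, \seq{C}'\ \seq{f})\{\keyw{super}(\seq{g}); \keyw{this}.\seq{f} = \seq{f}\}$, the fact $\func{fields}(D, CT) = \constrCT{D}{\seq{D}'}$, and $\judgeM{\ctxplus{C}{CT}}{\seq{M}\ \term{OK}}$, and set $\Sigma = \projExt{CT}$. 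Applying Theorem~\ref{theo:MR} to each method declaration $M_i$ of $\seq{M}$ yields $\cojudgeOM{M_i\ \term{OK}}{S_i}{U_i}{CR_i}$ with $\func{solve}(\Sigma, S_i) = \sigma_i$ a ground solution, $C = \sigma_i(U_i)$, and $CT\ \satisfy\ \sigma_i(CR_i)$; Proposition~\ref{prop:indep} guarantees the $CR_i$ (hence the $\sigma_i$) use pairwise disjoint sets of fresh class variables.

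Next, mirroring rule \rulename{TC-Class}, I would take $CR'|_{S_{cr}} = \func{merge}_{CR}((\constrCT{D}{\seq{D}'}, \emptyset), \seq{CR})$, $S_{eq} = \seq{\{U = C\}}$, $S = \seq{S} \cup \seq{S}_{eq} \cup S_{cr}$, $CR = CR'$, and define $\sigma = \sigma_1 \circ \cdots \circ \sigma_n$, which is well-defined independently of the ordering by Corollary~\ref{cor:assoc} together with disjointness of supports. The derivation $\cojudgeOC{\keyw{class}\ C\ \keyw{extends}\ D\ \{\seq{C}\ \seq{f};\ K\ \seq{M}\}\ \term{OK}}{S}{CR}$ is then immediate from \rulename{TC-Class}. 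To see that $\func{solve}(\Sigma, S) = \sigma$: $\sigma$ solves each $S_i$ by disjointness of supports; it solves each $\{U_i = C\}$ since $\sigma(U_i) = \sigma_i(U_i) = C$; and it solves $S_{cr}$ by Lemma~\ref{lem:mergeCR-cons} (folded over the binary merges), whose hypotheses I would discharge by observing that $CT\ \satisfy\ (\constrCT{D}{\seq{D}'}, \emptyset)$ holds by rule \rulename{S-Constructor} from $\func{fields}(D, CT) = \constrCT{D}{\seq{D}'}$, and that this requirement as well as each $\sigma_i(CR_i)$ is ground.

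The remaining obligations are then routine. Groundness of $\sigma(CR)$ follows from groundness of $\sigma((\constrCT{D}{\seq{D}'}, \emptyset))$ (its entries are class names occurring in $CT$, hence fixed by $\sigma$) and of each $\sigma(CR_i) = \sigma_i(CR_i)$ (the other $\sigma_j$ act trivially on the fresh variables of $CR_i$), using the definition of $\func{merge}_{CR}$. The correspondence $(CT)\vartriangleright_c \sigma(CR)$ unfolds to $CT\ \satisfy\ \sigma(CR)$, which I would obtain by iterating Lemma~\ref{lem:satisfy}, starting from $CT\ \satisfy\ \sigma((\constrCT{D}{\seq{D}'}, \emptyset))$ and $CT\ \satisfy\ \sigma(CR_i)$ for all $i$.

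The main obstacle will be the bookkeeping around the merge of the constructor requirement with the per-method requirement sets: I must simultaneously verify that the constraints $S_{cr}$ are solved by the composite substitution and that $CT$ still satisfies the merged, condition-updated requirement set. Both reduce to Lemmas~\ref{lem:mergeCR-cons} and~\ref{lem:satisfy}, but reaching that reduction cleanly requires care that the fresh-variable supports are disjoint (Proposition~\ref{prop:indep}) so that $\sigma$ may legitimately be assembled as the composition of the $\sigma_i$, and that each substitution acts trivially outside its own support when rewriting $\sigma(CR_i)$ as $\sigma_i(CR_i)$.
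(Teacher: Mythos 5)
Your proposal matches the paper's proof essentially step for step: invert \rulename{T-Class}, apply the method equivalence theorem to each $M_i$, merge the per-method requirement sets with the constructor requirement $(\constrCT{D}{\seq{D}'},\emptyset)$, assemble $\sigma$ from the $\sigma_i$ using disjointness of their fresh-variable supports, and discharge constraint solvability and satisfaction via Lemmas~\ref{lem:mergeCR-cons} and~\ref{lem:satisfy} together with rule \rulename{S-Constructor}. The only divergence is that the paper's proof additionally folds the constraints $\{C_i = D'_i\}\cup\{C_i=C'_i\}$ (matching constructor parameter types to field types) into $S$, which you omit consistently with the displayed \rulename{TC-Class} rule, so this is a presentational difference rather than a gap.
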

\begin{proof} By induction on the typing judgment. \\
   Case \rulename{T-Class} with $\judgeM{CT}{class\ C\ extends\ D\ \{\seq{C}\ \seq{f}; K\ \seq{M}\}\ OK}$.

By inversion, $K=C(\seq{D'}\ \seq{g}, \seq{C'}\ \seq{f})\{super(\seq{g}); this.\seq{f} = \seq{f}\}$, i.e., the constructor initializes all fields of  $fields(D, \ctxcolor{CT})=D.init(\seq{D'})$, and \judgeM{\ctxplus{C}{CT}}{\seq{M}\ OK}. 

By Theorem~\ref{theo:MR}
, \cojudgeOM {\seq{M} \ OK}  {\seq{S}} {\seq{U}} {\seq{CR}}, $\forall i \in 1\ldots n$. $\mathit{solve}(\projExt{CT}, S_i) = \sigma'_i$, $\sigma_i = \{U_i \smap C\} \circ \sigma'_i $, $\sigma_i(U_i)$, $\sigma_i(CR_i)$ are ground and the correspondence relation holds, i.e., $C = \sigma_i(U_i)$, $CT\ satisfy\ \sigma_i(CR_i)$. 

Let $CR|_{\typectxcolor{S_{cr}}} = merge_{CR}(CR_1, \ldots, CR_n, D.init(\seq{D'})) $, $S= \seq{S} \cup\typectxcolor{S_{cr}}\cup \{U_i = C\}_{i\in [1..n]} \cup \{C_i = D'_i\}_{i\in 1..k} \cup \{ C_i = C'_i\}_{i\in k..n}$,  where $k = | \seq{D'} |$, $n = | \seq{C}|$, $n - k = |\seq{C'}|$, and $\sigma = \{\sigma_i\}_{i\in [1..n]}$.\\
%
%
Then \cojudgeOC {class\ C\ extends\ D \{\seq{C}\ \seq{f};\ K\ \seq{M}\}\ OK} {S} {CR} holds by rule \rulename{TC-Class}.\\
 $\sigma$ solves $\seq S$, $\typecolor{S_{cr}}$, $\{U_i = C\}_{i\in [1..n] }$ and $\{C_i = D'_i\}_{i\in 1..k} \cup \{ C_i = C'_i\}_{i\in k..n}$  as shown below:
 \begin{itemize}
 	\item $\sigma$ solves $\seq S$ because $\sigma =\{\{U_i \smap C\} \circ \sigma'_i\}_{i \in [1..n]}$, and $\forall i \in [1..n].\ $ \\$solve(\projExt{CT}, S_i) = \sigma_i$.
 	\item $\forall i \in [1..n].\ \sigma_i(CR_i)$ are ground by Theorem~\ref{theo:MR}.\\
 	$(\ast)$ $\sigma(D.init(\seq{D'}))$ is ground because $(D.init(\seq{D'}))$ is ground.\\
 	$\forall i \in [1..n].\ CT\ \satisfy\ \sigma_i(CR_i)$ by Theorem~\ref{theo:MR}.\\
 	$(\ast \ast)$ $CT\ \satisfy\ \sigma(D.init(\seq{D}), \emptyset )$ because $fields(D, CT) = D.init(\seq{D'})$ hence by rule \rulename{S-Constructor} holds that  $CT\ \satisfy\ \sigma(D.init(\seq{D'}),\emptyset)$.
 	As a result $\sigma$ solves $\typecolor{S_{cr}}$ by Lemma~\ref{lem:mergeCR-cons}.
 	\item $\sigma$ solves $\{U_i = C\}_{i\in [1..n] }$ because $\sigma = \{\{U_i \smap C\} \circ \sigma'_i\}_{i \in [1..n]}$.
 	\item $\{C_i = D'_i\}_{i\in 1..k} \cup \{ C_i = C'_i\}_{i\in k..n}$ holds because $K$ initializes all fields of class $C$ as it is given by inversion.
 \end{itemize}
$\sigma$ is ground solution because: 
\begin{enumerate}[1)]
	\item $\forall i\in 1 \ldots n $. $\sigma(CR_i)$ is ground because 
	$\sigma(CR_i) = (\{\sigma_j\}_{j\in [1..i-1, i+1..n] }\circ \sigma_i )(CR_i)$ by Corollary~\ref{cor:assoc}.\\
	$(\{\sigma_j\}_{j\in [1..i-1, i+1..n] }) (\sigma_i(CR_i)) = \sigma_i(CR_i)$ because $\sigma_i(CR_i)$ is ground by Theorem~\ref{theo:MR}. \\
  $\sigma(D.init(\seq{D'}))$ is ground by $(\ast)$. As a result $\sigma(CR)$ is ground by definition of $merge_{CR}$
\end{enumerate}
The correspondence relation holds because:
\begin{enumerate}[a)]
	\item $\forall i\in 1\ldots n$. $CT\ satisfy\ \sigma(CR_i)$ because $CT\ \satisfy\ \sigma_i(CR_i)$ by Theorem~\ref{theo:MR}, and from $1)$ $\sigma(CR_i) = \sigma_i(CR_i)$. $CT\ \satisfy\ \sigma(D.init(\seq{D}), \emptyset )$ by $(\ast\ast)$. As a result $CT\ \satisfy\ \sigma(CR_1)\ldots \cup \sigma(CR_n) \cup\sigma(D.init(\seq{D'}))$, i.e., $CT\ \satisfy\ \sigma(CR)$ by Lemma~\ref{lem:satisfy}. 
\end{enumerate}
\end{proof}
 
\begin{theorem}[Equivalence of classes: $\Leftarrow$]
Given $C,\ CR,\ \Sigma,$ if \cojudgeOC {class\ C\ extends\ D \{\seq{C}\ \seq{f};\ K\ \seq{M}\}\\ OK} {S} {CR}, 
$\func{solve}(\Sigma,S)=\sigma$, and 
$\sigma$ is a ground solution, then there exists $CT$, such that \\
 \judgeM{CT}{class\ C\ extends\ D\ \{\seq{C}\ \seq{f}; K\ \seq{M}\}\ OK} holds, $(CT)\vartriangleright_c \sigma(CR)$ holds and $\projExt{CT}=\Sigma$. 
 \label{theo:CL}
 \end{theorem}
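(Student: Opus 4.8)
The plan is to follow the template of the preceding equivalence theorems: induction on the derivation of $\cojudgeOC{\keyw{class}\ C\ \keyw{extends}\ D\ \{\seq{C}\ \seq{f};\ K\ \seq{M}\}\ OK}{S}{CR}$, which here is really a single case analysis because \rulename{TC-Class} is the only applicable rule, using Theorem~\ref{theo:ML} (equivalence of methods, $\Leftarrow$) as the induction hypothesis for the method bodies $\seq{M}$. First I would invert \rulename{TC-Class} to extract: that $K$ has the form $C(\seq{D}'\ \seq{g},\ \seq{C}'\ \seq{f})\{\keyw{super}(\seq{g});\ \keyw{this}.\seq{f} = \seq{f}\}$; a per-method co-contextual derivation $\cojudgeOM{M_i\ OK}{S_i}{U_i}{CR_i}$ for each $M_i$; the merge $CR|_{S_{cr}} = \func{merge}_{CR}((\constrCT{D}{\seq{D}'}, \emptyset), \seq{CR})$; and the equality constraints $S_{eq,i} = \{U_i = C\}$, with $S = \seq{S}\cup\seq{S_{eq}}\cup S_{cr}$. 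From $\func{solve}(\Sigma, S) = \sigma$ it follows that $\sigma$ solves every constraint of $S$, hence each $S_i$, each $S_{eq,i}$ (so $\sigma(U_i) = C$), and $S_{cr}$. Applying Theorem~\ref{theo:ML} to each $M_i$ (with the same $\Sigma$ and the fragment of the ground-solution hypothesis relevant to $M_i$) yields $C_i$ and $CT_i$ with $\judgeM{\ctxplus{C_i}{CT_i}}{M_i\ OK}$, $\projExt{CT_i} = \Sigma$, and the method correspondence $(C_i, CT_i)\vartriangleright_m\sigma(U_i, CR_i)$, i.e.\ $C_i = \sigma(U_i) = C$ and $CT_i\ \satisfy\ \sigma(CR_i)$.

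Next I would assemble the witness class table. By Proposition~\ref{prop:indep} the fresh class variables in distinct $CT_i$ are disjoint, so $CT_\ast := \bigcup_i CT_i$ is coherent, and by iterated Class Table Weakening (Lemma~\ref{lem:CTweak}) $CT_\ast\ \satisfy\ \sigma(CR_i)$ for every $i$; Lemma~\ref{lem:satisfy} then lifts this along the merges, so $CT_\ast$ satisfies the $\func{merge}_{CR}$ of the $\sigma(CR_i)$. It remains to incorporate the ground constructor requirement $(\constrCT{D}{\seq{D}'}, \emptyset)$: here I distinguish the standard three cases according to whether $D$'s field/constructor clauses already occur in $CT_\ast$ --- not present (add the super-field clauses $\overline{D.f : D'}$, using Lemma~\ref{lem:AddCTcls} and Lemma~\ref{lem:CTweak} to preserve all prior satisfactions), present and compatible but not required (Lemma~\ref{lem:cCTnCR}), present and compatible and required (Lemma~\ref{lem:cCTcCR}) --- obtaining $CT$ with $CT\ \satisfy\ \sigma(CR)$ and $\func{fields}(D, CT) = \constrCT{D}{\seq{D}'}$. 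I would also observe that merging the several $\keyw{extends}$-requirements inside $\func{merge}_{CR}$ emits equality constraints which, together with $\sigma(U_i) = C$, force the per-method superclass placeholders to collapse under $\sigma$ to a single class, identified with the declared $D$; hence $CT_\ast$ (and $CT$) already records $C.\keyw{extends} = D$, no $\keyw{extends}$ clause is added during assembly, and $\projExt{CT} = \Sigma$. Finally, re-basing the method judgments to the common table, $\judgeM{\ctxplus{C}{CT}}{M_i\ OK}$, I would conclude $\judgeM{CT}{\keyw{class}\ C\ \keyw{extends}\ D\ \{\seq{C}\ \seq{f};\ K\ \seq{M}\}\ OK}$ by \rulename{T-Class} from the inverted shape of $K$, the equation $\func{fields}(D, CT) = \constrCT{D}{\seq{D}'}$, and these re-based method derivations; the correspondence $(CT)\vartriangleright_c\sigma(CR)$ is exactly $CT\ \satisfy\ \sigma(CR)$, already established.

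The main obstacle I anticipate is the re-basing step: turning each $\judgeM{\ctxplus{C}{CT_i}}{M_i\ OK}$ into $\judgeM{\ctxplus{C}{CT}}{M_i\ OK}$ over the merged table requires a class-table weakening lemma for \emph{typing derivations} --- the derivation-level counterpart of Lemma~\ref{lem:conxtWeak} and of the satisfaction-level Lemma~\ref{lem:CTweak} --- stating that extending a class table by clauses that do not shadow an existing supertype clause preserves well-typedness. I would either establish this lemma by a routine induction on typing derivations (mirroring Lemma~\ref{lem:conxtWeak}) or, alternatively, thread a single class table through the method subproofs by building it incrementally, as is done for the subexpressions in the \rulename{TC-Invk} case of Theorem~\ref{theo:ExprL}. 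A secondary, more clerical difficulty is making precise that the superclass unknowns $U_d^{(i)}$ of the individual methods all resolve under $\sigma$ to the declaration's $D$: this is a bookkeeping consequence of the constraints produced when $\func{merge}_{CR}$ merges the $\keyw{extends}$-requirements alongside $(\constrCT{D}{\seq{D}'}, \emptyset)$, combined with $\sigma(U_i) = C$, but tracking the conditions through the merge is the fiddliest part of the argument.
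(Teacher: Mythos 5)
Your proposal matches the paper's proof essentially step for step: invert \rulename{TC-Class}, apply Theorem~\ref{theo:ML} to each method to obtain tables $CT_i$ with $C=\sigma(U_i)$ and $CT_i\ \satisfy\ \sigma(CR_i)$, union them via Proposition~\ref{prop:indep}, Lemma~\ref{lem:CTweak} and Lemma~\ref{lem:satisfy}, then handle the constructor requirement $(\constrCT{D}{\seq{D}'},\emptyset)$ by the same three-case analysis using Lemmas~\ref{lem:cCTnCR} and \ref{lem:cCTcCR}, and close with \rulename{T-Class}. The one place you go beyond the paper is the ``re-basing'' of the method judgments from $\ctxplus{C}{CT_i}$ to the merged table $\ctxplus{C}{CT}$: the paper silently assumes this step, whereas you correctly identify that it needs a derivation-level class-table weakening lemma (the paper only proves weakening for the $\satisfy$ relation, Lemma~\ref{lem:CTweak}), and your proposed fix is sound and, if anything, makes the argument more complete than the published one.
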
 

\begin{proof} By induction on the typing judgment.\\
   Case \rulename{TC-Class} with  \cojudgeOC {class\ C\ extends\ D \{\seq{C}\ \seq{f};\ K\ \seq{M}\}\ OK} {S} {CR}.

Let  $S= \seq{S} \cup\typectxcolor{S_{cr}}\cup \{U_i = C\}_{i\in [1..n]} \cup \{C_i = D'_i\}_{i\in 1..k} \cup \{ C_i = C'_i\}_{i\in k..n}$,  where $k = | \seq{D'} |$, $n = | \seq{C}|$, $n - k = |\seq{C'}|$, $\sigma$ be a ground solution, such that it solves $S$ and $\sigma(CR)$ is ground.

By inversion, \cojudgeOM {\seq{M} \ OK}  {\seq{S}} {\seq{U}} {\seq{CR}}, $\forall i\in 1\ldots n$. $\sigma(U_i)$, $\sigma(CR_i)$ are ground.\\
  $merge_{CR}(CR_1, \ldots, CR_n) = CR'|_{\typectxcolor{S_{c}}}$, 
 $merge_{CR}((D.init(\seq{D'})), CR') = CR|_{\typectxcolor{S_{cr}}}$.

%
Let $\forall i\in i\ldots n$. $\sigma(U_i) = C$ for $C$ we know it is ground. 

By Theorem~\ref{theo:ML}
 \judgeM{\ctxplus{C}{\seq{CT}}}{\seq{M}\ OK}, the correspondence relation holds, $\forall i\in 1\ldots n$. $C = \sigma(U_i)$, $CT_i \ \satisfy\ \sigma(CR_i)$. $\projExt{CT'} = \Sigma'$, where $CT' = \bigcup_{i\in [1..n]}\{CT_i\}$. \\
$(\ast)$ $\bigcap_{i\in [1..n]}\{freshU(CR_i)\}=\emptyset$ by Proposition~\ref{prop:indep}. $\forall i\in 1\ldots n$. $CT'\ \satisfy\ \sigma(CR_i)$ by Class Table Weakening Lemma, therefore $CT' \ \satisfy\ \sigma(CR')$ by Lemma~\ref{lem:satisfy}.  

The constructor $K$ initializes all fields of class $C$, i.e.,  $K=C(\seq{D'}\ \seq{g}, \seq{C'}\ \seq{f})\{ $\\
$super(\seq{g}); this.\seq{f} = \seq{f}\}$, because $\sigma$ solves $\{C_i = D'_i\}_{i\in 1..k} \cup \{ C_i = C'_i\}_{i\in k..n}$.

We consider three cases to construct the class table $CT$:
\begin{enumerate}[(1)]
\item  $\{(D.init( \seq{D'}) , CT')\}_{\ll^*} =\emptyset$. Since no entry of class $D$ exist for the constructor $init$ in the given class table $CT'$, we add a new entry in the class table, i.e., $CT = CT'\cup (D.init(\seq{D'}))$.
    	\item $\{(D.init( \seq{D'}) , \sigma(CR'))\}_{\ll} \cup \{(D.init( \seq{D'}) , \sigma(CR'))\}_{\gg} = \emptyset$, $(D.init(\seq{D''}))\in CT'$, for some $\seq{D''}$, then by Lemma~\ref{lem:cCTnCR} $CT$ is constructed. 
    	\item $(D.init(\seq{A}) cond') \in \sigma(CR')$, for some $\seq{A}, cond'$, $(D.init(\seq{D''})) \in  CT'$, for some $\seq{D''}$, then by Lemma~\ref{lem:cCTcCR} $CT$ is constructed.  

\end{enumerate}
From above we have that $fields(D, CT) = D.init(\seq{D'})$, no extends clauses are added to the class table $CT'$, therefore $\projExt{CT} = \Sigma' = \Sigma$\\
%
%
Then \judgeOK{CT}{class\ C\ extends\ D\ \{\seq{C}\ \seq{f}; K\ \seq{M}\}\ OK}{\seq{S}} holds by rule \rulename{T-Class}. \\
The correspondence relation holds because: 
\begin{enumerate}[a)]
	\item We have to show is that $CT\ satisfy\ \sigma(CR)$, and we distinguish the following cases:
	\begin{enumerate}[(1)']
	\item In addition to $(1)$ $CT\ \satisfy\ \sigma(D.init(\seq{D'}))$ by construction, \\
	 $CT'\ \satisfy\ \sigma(CR')$ by $(\ast)$, therefore $CT\ \satisfy\ \sigma(CR')$ by Class Table Weakening Lemma~\ref{lem:CTweak}. As a result $CT\ \satisfy\ \sigma(D.init(\seq{D'})) \cup \sigma(CR')$, i.e., $CT\ \satisfy\ \sigma(CR)$ by definition of $merge_{CR}$.
	\item In addition to $(2)$, $CT'\ \satisfy\ \sigma(CR')$ by $(\ast)$, then there is $CT$, 
	 $CT\ \satisfy \ \sigma(CR)$ by Lemma~\ref{lem:cCTnCR}.	
	\item In addition to $(3)$, $CT'\ \satisfy\ \sigma(CR')$ by $(\ast)$, then there is $CT$,
	 $CT\ \satisfy \ \sigma(CR)$, by Lemma~\ref{lem:cCTcCR}.\\
\end{enumerate} 
\vspace{-0.5em}
   	Class declaration consists in adding the class $C$ with all of its fields, methods, constructor and extend clauses in the class table, whether they are already member of the $CT$ or not. Also, adding these clauses does not affect the satisfaction of the class requirements. Namely resulting class table $CT_r$, such that $C.extends= D\in CT_r, K \in CT_r, \{C.f_i : C_i\}_{i\in [1..n]} \in CT_r, \seq M \in CT_r$, $CT_r \satisfy\ \sigma(CR)$. We distinguish the following cases: 
  	\begin{description}
     \item[$\bullet$] $(C.extends= D) \notin CT,\ or\ (C.init(\seq{C})) \notin CT ,\ or\ \{C.f_i : C_i\}_{i\in [1..n]} \notin CT,\ or\  \{C.m_i: \seq{C}\rightarrow C_0 \}_{i\in [1..n]} \notin CT $ then $CT_r = CT\cup (C.extends= D); (C.init(\seq{C})) \cup \{C.f_i : C_i\}_{i\in [1..n]} \cup \{C.m_i: \seq{C}\rightarrow C_0 \}_{i\in [1..n]}$, and 
     $CT_r \ \satisfy\ \sigma(CR)$ by Lemma \ref{lem:AddCTcls}.
          \item[$\bullet$] $\forall CTcls \in \{(C.extends= D) \cup (C.init(\seq{C})) \cup \{C.f_i : C_i\}_{i\in [1..n]}\cup $
          $\{C.m_i: \seq{C}\rightarrow C_0 \}_{i\in [1..n]} \}$ such that $domCl(CTcls) \in \domC$ then $CT_r = CT$. \\
           Hence, $CT_r\ \satisfy\ \sigma(CR)$.
	\end{description}
\end{enumerate} 
\end{proof}

\newpage

\begin{lemma}
\label{lem:remove-cons}
Given a complete class table $CT$ constructed from all possible class declarations $\seq L$, a set of requirements $CR$, 
$\biguplus_{ L' \in \seq{L}}( \func{removeMs}(CR, L')\uplus \func{removeFs}(CR, L')\uplus \func{removeCtor}(CR, L')
	   \uplus \func{removeExt}(CR, L') ) = CR'|_S $, a substitution $\sigma$, 
	   such that $\sigma(CR)$ is ground, and $CT \satisfy\ \sigma(CR)$. Then $\sigma$ solves $S$.
\end{lemma}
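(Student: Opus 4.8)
The plan is to reduce the claim to a single ``atomic'' removal step and then discharge the resulting conditional constraints one at a time, in much the same style as the proof of Lemma~\ref{lem:mergeCR-cons}. First I would unfold the outer $\biguplus$: since $\uplus$ acts pointwise on pairs of a requirement set and a constraint set, and since each of the four operations is applied to the \emph{same} input $CR$, the constraint component $S$ is simply the union, over all $L' \in \seq L$, of the constraint sets produced by $\func{removeMs}(CR,L')$, $\func{removeFs}(CR,L')$, $\func{removeCtor}(CR,L')$ and $\func{removeExt}(CR,L')$; moreover $\func{removeMs}$ and $\func{removeFs}$ decompose further into one invocation of $\func{removeM}$ (resp.\ $\func{removeF}$) per declared method (resp.\ field). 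Hence it suffices to fix one declaration removed from one class $C \in \seq L$ and show that $\sigma$ satisfies every constraint that that removal contributes.

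Take $\func{removeM}$ for a method declaration $C'\ m(\seq C\ \seq x)\{\keyw{return}\ e\}$ in class $C$ as the representative case. By the definition of $\func{removeM}$, the contributed constraints are, for each requirement $(T.m : \seq T \rightarrow T', \mathit{cond}) \in CR$, the conditional equalities $(T' = C'\ \mathrm{if}\ T = C)$ and $(\seq{T} = \seq C\ \mathrm{if}\ T = C)$. Consider one such $s = (T' = C'\ \mathrm{if}\ T = C)$. By clause~3 of the definition of constraint satisfaction I only have to treat the situation in which the guard is satisfied by $\sigma$, i.e.\ $\sigma(T) = C$, and then derive $\sigma(T') = C'$ (note $C'$ is a class name, so $\sigma(C') = C'$). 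In that situation the requirement $(T.m : \seq T \rightarrow T', \mathit{cond})$ is, after substitution, a requirement on the method $m$ of $C$ itself; since $CT \satisfy \sigma(CR)$ and $\sigma(CR)$ is ground, rule \rulename{Satisfy} together with \rulename{S-Method} forces $\func{mtype}(m, C, CT) = \sigma(\seq T) \rightarrow \sigma(T')$ whenever this requirement is active. On the other hand $CT$ is the \emph{complete} class table assembled from $\seq L$, and $C \in \seq L$ declares $m$ with signature $\seq C \rightarrow C'$, so the clause $C.m : \seq C \rightarrow C'$ belongs to $CT$ and hence $\func{mtype}(m, C, CT) = \seq C \rightarrow C'$. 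Comparing the two expressions for $\func{mtype}(m,C,CT)$ yields $\sigma(T') = C'$ and $\sigma(\seq T) = \seq C$, which discharges $s$ and the companion parameter constraints; if instead $\sigma(T) \neq C$ the guard is false and $s$ holds vacuously. The cases of $\func{removeF}$, $\func{removeCtor}$ and $\func{removeExt}$ run identically, using \rulename{S-Field}, \rulename{S-Constructor} and \rulename{S-Extend} (via $\func{extends}$) respectively, and exploiting that the removed field / constructor / \keyw{extends} clause of $C$ occurs in the complete $CT$ and uniquely fixes the value returned by the corresponding lookup.

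The step I expect to be the main obstacle is making the handling of conditions watertight. A requirement $(T.m : \seq T \rightarrow T', \mathit{cond})$ is only guaranteed to be matched by $CT$ when $\sigma(\mathit{cond})$ holds, whereas the guard of the generated constraint mentions only $T = C$; the argument above therefore has to be arranged so that the constraint is examined exactly on the requirements that $\func{removeM}$ keeps ``active'' — mirroring the fact that $\func{removeM}$ simultaneously strengthens the surviving requirement's condition to $\mathit{cond} \cup (T \neq C)$, so that under $\sigma(T) = C$ only the requirements with $\sigma(\mathit{cond})$ true remain relevant. A second point that needs care is $\func{removeExt}$, which not only discharges \keyw{extends} requirements but also \emph{duplicates} field and method requirements onto the superclass $D$: one must check that these duplicated requirements contribute no new constraints (they do not, by the definition) and that completeness of $CT$ really does pin down $\func{extends}$, $\func{field}$, $\func{fields}$ and $\func{mtype}$ on the classes of $\seq L$. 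Once this bookkeeping is in place, the remaining reasoning is the same pointwise $\sigma$-chasing already carried out for $\func{merge}$ in Lemmas~\ref{lem:mergeCR-cons} and~\ref{lem:satisfy}, and I would also invoke Proposition~\ref{prop:indep} and Corollary~\ref{cor:assoc} to justify that the various per-class substitutions commute when they are combined.
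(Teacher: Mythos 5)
Your overall strategy is the same as the paper's: split $S$ by removal kind, fix one representative removal of a declaration in a class $C$, and for each generated conditional constraint discharge the non-vacuous case $\sigma(T)=C$ by playing the satisfaction judgment ($\rulename{S-Method}$/$\rulename{S-Field}$/\ldots, which pins down $\func{mtype}$ resp.\ $\func{field}$ on $\sigma(T)$) against the fact that the complete $CT$ contains the removed clause. That core step, and your observation that the guard of the generated constraint mentions only $T=C$ and not $\mathit{cond}$ (a wrinkle the paper's own proof also elides by simply assuming $\sigma(\mathit{cond})$ holds), are both consistent with the paper.

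The one substantive divergence is your claim that the requirements duplicated by $\func{removeExt}$ ``contribute no new constraints.'' The paper's proof makes precisely the opposite case central: when the required member of $\sigma(T)=C$ is in fact declared in a proper superclass $D$, the constraint generated directly from $(T.f:T',\mathit{cond})$ by removing $D$'s declaration has guard $T=D$ and is vacuous (since $\sigma(T)=C\neq D$), and the \emph{only} non-vacuous constraint comes from first applying $\func{removeExt}$ for $C$, which duplicates the requirement to $(D.f:T',\mathit{cond}\cup T=C)$, and then applying $\func{removeF}$ for $D$ to that duplicate, yielding $(T'=C_f\ \mathit{if}\ D=D)$ with an irrefutably true guard; satisfaction then gives $\sigma(T')=\func{field}(f,\sigma(T),CT)=C_f$. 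Your reading of $\biguplus$ as four independent applications to the same input $CR$ makes your claim self-consistent (and then the inherited-member case is trivially vacuous), but it is not the reading the paper's proof of this lemma uses, nor the one its companion Lemma on emptiness of the residual requirements depends on. So you should either commit to the parallel reading and say explicitly that the inherited-member case produces only vacuously guarded constraints, or adopt the sequential reading and add the paper's second case, tracking the duplicated requirement through the subsequent field/method removal. A minor further point: there are no per-class substitutions to commute here --- a single $\sigma$ is given --- so the appeal to the independence proposition and the associativity corollary is not needed.
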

\begin{proof}
  By the definitions of remove for different clauses, $S = S_c \cup S_e \cup S_k \cup S_f \cup S_m$.
 Let us consider constrains generated from field remove $S_f$.
Suppose there exist  $f \in \dom{CR}$ such that $(T.f : T', cond)\in CR$ and $\sigma(cond)\ hold$.
 
 Let $\sigma(T)= C$ and $\sigma(T') = C_f$, since $\sigma(CR)$ ground, $C$, $C_f$ are ground.
 
Since $CT\ satisfy\ \sigma(CR)$, by inversion $field(f,C, CT)= C_f$, i.e, exist $D > C$ such that $D.f:\sigma(T') \in CT$.  
We distinguish two cases when $f$ is declared in $C$ or in one of its superclasses $D$:
\begin{enumerate}[1)]
	\item $D = C$. By rule~\rulename{S-Field}; $C.f : C_f \in CT$. We apply remove for field clause $C.f : C_f$. By definition of removeF the correspondent requirement is $(T.f : T', cond)\in CR$ and the new constraint generated is $S_f = (T' = C_f\ if\ T= C)$. This constraint is solved, because the condition holds and $\sigma(T') = C_f$.
	\item $D > C$. Then there exist $C\ extends\ D \in CT$ and $D. f : C_f \in CT$. In this case we have to apply remove for extends and field clauses. 
	First, we apply remove of extends. By definition of removeExt the requirement under scrutiny is duplicated, i.e., $(T.f : T', cond \cup T\neq C),\ (D.f : T', cond\cup T =C)$.\\
	Second we apply remove of field $f$. By definition of removeFs the generated  constrains are $S_f= \{(T' = C_f\ if\ T = D), (T'= C_f)\ if\ D = D\}$. The first constraint is not valid because the condition does not hold ($\sigma(T) \neq D$), therefore is not considered. the second constraint is solved because the condition holds and $\sigma(T') =C_f$
\end{enumerate}
  
The same procedure we follow for extends, constructors and methods clauses. 
\end{proof}

\begin{lemma}[Class requirements empty] \label{thm:CRempty}
Given a complete class table $CT$ constructed from all possible class declarations $\seq L$, a set of requirements $CR$, 
$\biguplus_{ L' \in \seq{L}}( \func{removeMs}(CR, L')\uplus \func{removeFs}(CR, L')\uplus \func{removeCtor}(CR, L')
	   \uplus \func{removeExt}(CR, L') ) = CR'|_S $, 
and a substitution $\sigma$, such that $\sigma(CR)$ is ground, $\sigma$ solves $S$, we have that if $CT \ satisfy\ \sigma(CR)$, then and $\sigma(CR') = \emptyset$.
\end{lemma}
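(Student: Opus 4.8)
The plan is to show that, under the ground substitution $\sigma$, \emph{every} requirement that ends up in $CR'$ is discharged --- that is, it carries a condition unsatisfiable under $\sigma$. Since a requirement with an unsatisfiable condition is vacuously satisfied (rule \rulename{Satisfy}) and imposes no real constraint, this yields $\sigma(CR') = \emptyset$. I would first unfold the statement: by the definitions of $\uplus$ and $\biguplus$, $CR'$ is obtained from $CR$ by applying $\func{removeMs}$, $\func{removeFs}$, $\func{removeCtor}$ and $\func{removeExt}$ for every $L' \in \seq{L}$, in the subclass-to-superclass order described in Section~\ref{sec:co-cont-FJ}, and $S$ collects all constraints generated along the way; that $\sigma$ solves $S$ is already supplied by Lemma~\ref{lem:remove-cons}, so no constraint-solvability obligation remains.

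The argument then proceeds by tracing the \emph{lineage} of a requirement. I would take an arbitrary $(\mathit{CReq}, \mathit{cond}) \in CR'$ and suppose, towards a contradiction, that $\sigma(\mathit{cond})$ holds; then $\sigma(\mathit{CReq})$ is a ground class requirement of the form $C_0.f : C'$, $C_0.m : \seq{C} \rightarrow C'$, \extendsCR{C_0}{D}, or \constrCT{C_0}{\seq{C}}, for a concrete class $C_0$ occurring in $\seq{L}$. Because $CT\ \satisfy\ \sigma(CR)$, and satisfaction is carried along the removal steps (an argument parallel to Lemma~\ref{lem:satisfy}, together with Class Table Weakening, Lemma~\ref{lem:CTweak}), the relevant lookup succeeds: rules \rulename{S-Field} and \rulename{S-Method} give a nearest ancestor $D_0$ with $C_0 \subtpe D_0$ that actually declares the member, while rules \rulename{S-Extend} and \rulename{S-Constructor} let the receiver class itself carry the declaration. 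Since $CT$ is built from \emph{all} of $\seq{L}$, both $D_0$'s declaration and the whole \keyw{extends}-chain $C_0, C_1, \dots, D_0$ are contributed by declarations in $\seq{L}$, hence by corresponding removal steps.

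The heart of the proof is an induction along that chain. For the field/method case I would show that the successive $\func{removeExt}$ steps for $C_0, C_1, \dots$ rewrite the receiver of a copy of the requirement one level up at a time, accumulating in its condition the equalities $T = C_0$ (and trivially-true equalities $C_i = C_i$), all holding under $\sigma$, while every ``non-shifted'' sibling instead receives a false disequality $T \neq C_i$; and that the $\func{removeM}$ / $\func{removeF}$ step for $D_0$'s own clause turns the last surviving copy's condition into one containing $D_0 \neq D_0$, so it too is false under $\sigma$ --- the desired contradiction. The \keyw{extends} case is immediate, since $\func{removeExt}$ for $C_0$'s header adds $C_0 \neq C_0$ to the single surviving \keyw{extends}-requirement on $C_0$. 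The constructor case is the easiest: constructor requirements always carry a ground receiver and are not duplicated by $\func{removeExt}$, so $\func{removeCtor}$ for $C_0$'s own constructor clause discharges it directly. Optional method requirements are handled just like ordinary method requirements, with the additional observation that an optional requirement whose ground receiver does not itself declare the method is vacuously discharged.

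The step I expect to be the main obstacle is the bookkeeping around $\func{removeExt}$, which simultaneously refines the conditions of the requirements it matches (adding $T \neq C$) \emph{and} introduces duplicates with the superclass as receiver under condition $T = C$. One must check carefully that (i) the duplicate is always carried exactly one level further up, with its accumulated condition being precisely the one satisfied by $\sigma$, and (ii) once the declaring class's member clause is removed, the chain's disequality and the last duplicate's $D_0 \neq D_0$ are both false, so nothing ground and valid survives. A related subtlety is to make the argument independent of the precise interleaving of the removal operations across the various $L'$; this should follow from the observation that the only conditions that matter are comparisons of a receiver against a declaring class, each of which is decided the same way, regardless of order, once $\sigma$ is fixed.
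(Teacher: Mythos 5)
Your proof is correct and follows essentially the same route as the paper's: both hinge on the observation that a requirement surviving with a $\sigma$-satisfiable condition would mean no matching declaration (for the receiver or, via the \func{removeExt} duplication chain, any of its ancestors) was encountered during removal, which is incompatible with $CT\ \satisfy\ \sigma(CR)$. The only difference is presentational --- the paper argues by contradiction from $\sigma(CR') \neq \emptyset$ toward $\neg(CT\ \satisfy\ \sigma(CR))$, whereas you run the contrapositive, tracing each requirement forward along the inheritance chain to the clause-removal step that falsifies its condition; your version is, if anything, more explicit about the chain induction and the subclass-to-superclass ordering subtlety that the paper's proof leaves implicit.
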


\begin{proof} By contradiction. \\
 By assumption $\sigma(CR') \neq \emptyset$, and $CR' = \{(CReq, cond)\WHERE \exists (T\neq T') \in cond \}$. From this, follows $\forall (CReq, cond) \in CR'$. $cond$ holds, i.e.,  all conditions of $cond$ do hold. 
This is derived after performing remove, we already know the exact types for classes and their extends, constructor, fields, method clauses. Therefore from remove we add inequalities to invalidate requirements for which we know their exact types, as result exist one their conditions that does not hold. Since by assumption the set is not empty then all conditions hold.
 For sake of brevity we consider only the conditions that are added after performing remove, because are the ones we are interested in.
 
First we consider the extend clauses in the requirement set.
%
 All conditions of the requirements corresponding extends clause do hold. Let us consider $\exists (T.extends T', cond) \in CR'.\ \forall (T \neq C) \in cond.\ \sigma(T) \neq C\ holds$. 
By definition it is given that $\sigma(CR)$ is ground, namely $\sigma(T) = C', \sigma(T') = D'$, such that $C', D'$ are ground. Since all the inequalities in $cond$ hold, this means that in the class table was not added any $extends\ clause$, such that $(C'.extends = D' ) \notin CT$. 
Therefore $CT\ \satisfy\ \sigma(CR)$ does not hold.\\
This strategy of proof is used for constructor since from the definition of $removeCtor$ only inequality conditions are added, and not considered while removing extends clause. 

Second we consider field clauses. From the definition of $removeFs$ and $removeExt$ for every field clause we have a duplicated requirement corresponding to the parents type.
All conditions of the requirements corresponding field clause do hold. By definition it is given that $\sigma(CR)$ is ground, namely $\sigma(T) = C', \sigma(T_f) = C_f$, such that $C', C_f$ are ground. Let us consider $(C'.extends = D) \in CT$, and $\exists (T.f : T_f, cond\cup T\neq C'), (D.f : T_f,  cond'\cup T= C') \in CR' \ such\ that\ \forall (T \neq C), (T = C) \in cond \cup cond' \cup  T= C'\cup  T\neq C'$ $(\sigma(T)\neq C ), (\sigma(T) = C)$ hold. Since all the conditions in $cond \cup cond'$ hold, this means that in the class table was not added any $field\ clause$, such that $f$ is declared in $C'$ or in 
 its parents, i.e., $\forall\ C''\ such\ that\ C'<: C'',\ then\ (C''.f : C_f ) \notin CT$. \\
 Therefore $CT\ \satisfy\ \sigma(CR)$ does not hold.
 

The same strategy of proving is used for methods. In contrast for the optional methods regardless all the conditions might hold they are removed in any case, because they are optional. The lack of inequality conditions that do not hold, only shows the given method is declared in a class of the class table but not in its parents. 

\end{proof}

\begin{theorem}[Equivalence for programs: $\Rightarrow$]  \label{thm:equivP}
Given $\seq L $, if \judgeP{\seq{L}\ OK}, then there exists S, $\Sigma$, $\sigma$, where $\projExt{\seq L} = \Sigma$ and $\func{solve}(\Sigma, S)=\sigma$, such that
\judgeOP{\seq{L}\ OK}{S} holds and $\sigma$ ground solution. 
\end{theorem}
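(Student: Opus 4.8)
The plan is to prove Theorem~\ref{thm:equivP} by reducing it to the already-established equivalence theorem for class declarations (Theorem~\ref{theo:CR}) plus the two bookkeeping lemmas about removal, Lemma~\ref{lem:remove-cons} and Lemma~\ref{thm:CRempty}. Concretely, I would start by inverting the contextual rule \rulename{T-Program}: from \judgeP{\seq{L}\ OK} we obtain a class table $CT = \bigcup_{L'\in\seq{L}}(\func{addExt}(L')\cup\func{addCtor}(L')\cup\func{addFs}(L')\cup\func{addMs}(L'))$ together with a derivation $\judgeM{CT}{L'\ \term{OK}}$ for each $L'\in\seq{L}$. Set $\Sigma = \projExt{CT}$; since the class table is built precisely from the \keyw{extends} clauses of the declarations, $\projExt{\seq L} = \projExt{CT} = \Sigma$, discharging one of the side conditions.

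Next I would apply Theorem~\ref{theo:CR} to each $L'$: for every $L'$ there exist $S_{L'}$, $CR_{L'}$, and a ground solution $\sigma_{L'}$ with $\func{solve}(\Sigma,S_{L'})=\sigma_{L'}$ such that $\cojudgeOC{L'\ \term{OK}}{S_{L'}}{CR_{L'}}$ holds and $(CT)\vartriangleright_c \sigma_{L'}(CR_{L'})$, i.e.\ $CT\ \satisfy\ \sigma_{L'}(CR_{L'})$. By Proposition~\ref{prop:indep} the fresh class variables of the different $CR_{L'}$ are pairwise disjoint, so (using Corollary~\ref{cor:assoc}) I can combine the $\sigma_{L'}$ into a single substitution $\sigma_0 = \bigcirc_{L'}\sigma_{L'}$ that simultaneously is ground on each $CR_{L'}$ and solves each $S_{L'}$, and $CT\ \satisfy\ \sigma_0(\seq{CR})$ follows by repeated application of Lemma~\ref{lem:satisfy}. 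Then I would follow the structure of rule \rulename{TC-Program}: let $CR'|_{S'} = \func{merge}_{CR}(\seq{CR})$, which by Lemma~\ref{lem:mergeCR-cons} (applied to the pieces) is solved by $\sigma_0$, and $CT\ \satisfy\ \sigma_0(CR')$ again by Lemma~\ref{lem:satisfy}. Finally, let
\[
\biguplus_{L'\in\seq{L}}\bigl(\func{removeMs}(CR',L')\uplus\func{removeFs}(CR',L')\uplus\func{removeCtor}(CR',L')\uplus\func{removeExt}(CR',L')\bigr)=CR''|_{S''}.
\]
Since $CT$ is the complete class table built from all of $\seq{L}$ and $CT\ \satisfy\ \sigma_0(CR')$, Lemma~\ref{lem:remove-cons} gives that $\sigma_0$ solves $S''$ and Lemma~\ref{thm:CRempty} gives $\sigma_0(CR'')=\emptyset$, which is exactly the side condition $\emptyset|_{S}$ demanded in the premise of \rulename{TC-Program} (with $S = S''$). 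Assembling, \rulename{TC-Program} yields $\judgeOP{\seq{L}\ \term{OK}}{\seq{S}\cup S'\cup S''}$, so take $S = \seq{S}\cup S'\cup S''$ and $\sigma = \sigma_0$; $\sigma$ solves $S$ componentwise and is ground because each constituent substitution is ground on its fragment.

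The main obstacle I expect is the careful combination of the per-declaration ground solutions into one coherent ground solution $\sigma_0$ and verifying that it really is simultaneously ground and solving on all the merged and removed constraint sets. This requires leaning on Proposition~\ref{prop:indep} (disjointness of fresh variables across independent derivations) and Corollary~\ref{cor:assoc} to reorder compositions without side effects, and then threading $\sigma_0$ through the sequence merge $\to$ remove while maintaining the ``$CT$ satisfies $\sigma_0(\cdot)$'' invariant at each stage via Lemma~\ref{lem:satisfy}. A secondary subtlety is checking that $\func{removeCtor}$, $\func{removeExt}$, etc.\ applied here are the program-level variants that take $CR'$ and a declaration $L'$ (rather than the class-internal helpers), and that the order of removal assumed by Lemma~\ref{lem:remove-cons} and Lemma~\ref{thm:CRempty} — subclass clauses before superclass clauses, members before the \keyw{extends} clause — is consistent with the $\biguplus$ used in \rulename{TC-Program}; once that is fixed the rest is routine.
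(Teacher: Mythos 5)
Your proposal is correct and follows essentially the same route as the paper's proof: inversion of \rulename{T-Program}, application of the class-level equivalence theorem to each declaration, combination of the per-declaration ground solutions into one substitution (which the paper writes as $\sigma = \{\sigma_i\}_{i\in[1..n]}$, justified by the same Proposition~\ref{prop:indep} and Corollary~\ref{cor:assoc}), solving the merge constraints via Lemma~\ref{lem:mergeCR-cons}, the removal constraints via Lemma~\ref{lem:remove-cons}, and emptiness of the residual requirements via Lemma~\ref{thm:CRempty}. The only difference is presentational: you are somewhat more explicit about assembling $\sigma_0$ and threading the ``$CT$ satisfies'' invariant through merge and remove, which the paper leaves partly implicit.
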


\begin{proof} By induction on the typing judgment.\\
   Case \rulename{T-Program} with \judgeP{\seq{C}\ \seq{L}\ OK}.

 By inversion,  \textit{Class table construction CT} is \\
  $CT = \bigcup_{L'\in\seq{L}}  (\func{addExt}(L')\cup \func{addCtor}(L')\cup \func{addFs}(L') \cup \func{addMs}(L'))$ and \\
 \judgeOP{CT}{\seq{L}\ OK}.
 
 By Theorem~\ref{theo:CR}
,  \cojudgeOC{\seq{L} \ OK}{\seq{S}}{\seq{CR}}, $\forall i \in 1 .. n$. $solve(\projExt{CT}, S_i) = \sigma_i$, $\sigma_i(CR_i)$ is ground and the correspondence relation holds, i.e., $CT \ satisfy\  \sigma_i(CR_i)$. \\

Let $CR_{|S_{cr}} = merge_{CR}(CR_1, \ldots, CR_n) $, $\biguplus_{ L' \in \seq{L}}( \func{removeMs}(CR, L')\uplus$ \\
$ \func{removeFs}(CR, L')\uplus \func{removeCtor}(CR, L')
	   \uplus \func{removeExt}(CR, L') ) = CR_f|_{S_r} $, \\
$S = \seq{S} \cup S_{cr} \cup S_r$, and $\sigma = \{\sigma_i\}_{i\in [1..n] }$.
  From the Lemma ~\ref{thm:CRempty} we have $\sigma(CR_f) = \emptyset$.\\\\
Then \judgeOP{\seq{L}\ OK}{S} holds by rule \rulename{TC-Program}.\\
$\sigma$ solves $\seq{S}$, and $S_{cr}$ as shown below:
\begin{itemize}
	\item $\sigma$ solves $\seq{S}$ because $\sigma = \{\sigma_i\}_{i\in [1..n] }$.
	\item $\forall i \in [1..n].\ \sigma_i(CR)_i$ are ground by Theorem~\ref{theo:CR}.\\
 	$\forall i \in [1..n].\ CT\ \satisfy\ \sigma_i(CR_i)$ by Theorem~\ref{theo:CR}.\\
As a result $\sigma$ solves $S_{cr}$ by Lemma~\ref{lem:mergeCR-cons}.
    \item $\sigma(CR)$ is ground and $CT\ \satisfy\ \sigma(CR)$ by Theorem~\ref{theo:CR}, and given the class table $CT$, then $\sigma$ solves $S_r$ by Lemma~\ref{lem:remove-cons}.
\end{itemize}
$\sigma$ is ground solution because: 
\begin{enumerate}[1)]
	\item $\forall i \in [1..n]$. $\sigma(\projExt{CT}, CR_i)$ is ground because 
    $\sigma(CR_i) = (\{\sigma_i\}_{i\in [1..n] })$ \\ $(CR_i) = $
    $(\{\sigma_j\}_{j\in [1..i-1, i+1..n] }\circ \sigma_i )(CR_i) $ by Corollary~\ref{cor:assoc}.\\
  $( \{\sigma_j\}_{j\in [1..i-1, i+1..n ]})(\sigma_i(CR_i))= \sigma_i(CR_i)$ because $\sigma_i(CR_i)$ is ground by Theorem~\ref{theo:CR}.
  As a result $\sigma(CR)$ is ground by definition of $merge_{CR}$.
\end{enumerate}
\end{proof}

\begin{lemma}[Class table satisfy class requirements] \label{thm:CRsat}
Given class declarations $\seq L$, such that $CT = \bigcup_{L'\in\seq{L}}  (\func{addExt}(L')\cup \func{addCtor}(L')\cup \func{addFs}(L') \cup \func{addMs}(L'))$, a set of requirements $CR$, $\biguplus_{ L' \in \seq{L}}( \func{removeMs}(CR, L')\uplus \func{removeFs}(CR, L')\uplus \func{removeCtor}(CR, L')
	   \uplus \func{removeExt}(CR, L') ) = CR'|_S $, and a substitution $\sigma$, such that $\sigma(CR)$ is ground, $\sigma$ solves $S$,
 we have that if $\sigma(CR') = \emptyset$, then $CT \ satisfy\ \sigma(CR)$.
\end{lemma}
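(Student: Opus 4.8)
The plan is to prove the contrapositive, so that this lemma becomes the exact converse of Lemma~\ref{thm:CRempty}: assuming $CT$ does not satisfy $\sigma(CR)$, I would exhibit a requirement that survives in $CR'$ with a condition $\sigma$ satisfies, contradicting $\sigma(CR') = \emptyset$. Together with Lemma~\ref{thm:CRempty} this shows that discharging all requirements via the $\func{remove}$ operations is equivalent to the fully-built class table $CT$ satisfying them, which is precisely what is needed for the $\Leftarrow$ direction of program equivalence (Theorem~\ref{thm:equivPl}). I would also reuse Lemma~\ref{lem:remove-cons}'s accounting of which constraints the $\func{remove}$ operations emit.

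First I would fix a witness: suppose $CT$ does not satisfy $\sigma(CR)$, so by the definition of $\func{satisfy}$ (Figure~\ref{fig:satisfy}) there is a pair $(CReq, cond) \in CR$ with $\sigma(cond)$ holding but $CT$ not satisfying $\sigma(CReq)$. I would proceed by case analysis on $CReq$; take the representative case $CReq = (T.f : T')$ with $\sigma(T) = C'$ and $\sigma(T') = C_f$ both ground. Not satisfying $\sigma(CReq)$ means either (ii) $\func{field}(f, C', CT)$ is defined but differs from $C_f$, or (i) it is undefined, i.e.\ neither $C'$ nor any supertype of $C'$ declares $f$. Case (ii) is immediate: the $\func{removeF}$ step for the field clause $A.f : A'$ of the declaring class $A$ emits the constraint $(T' = A'\ \keyw{if}\ T = A)$, and since $\sigma(T)$ is that very $A$, $\sigma$ would be forced to equate $C_f$ with $A' \neq C_f$, contradicting that $\sigma$ solves $S$.

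For case (i) I would track the requirement through the $\func{remove}$ operations along the inheritance path of $C'$. By the prescribed ordering, $\func{removeExt}$ for a header $\keyw{class}\ E\ \keyw{extends}\ E'$ turns a requirement $(T.f : T', c)$ into the refined $(T.f : T', c \cup (T \neq E))$ plus the duplicate $(E'.f : T', c \cup (T = E))$, while $\func{removeF}$ for a field clause $A.f : A'$ refines requirements on $f$ with the extra conjunct $(T \neq A)$. Because $CT$ is complete, the chain $C' <: D_1 <: \ldots <: \cname{Object}$ is finite and fully present in $CT$; walking it and using that no class on it declares $f$, an induction on the path length shows that one descendant of the original requirement, of the form $(D_k.f : T', c'')$ with $D_k$ a class on the path, survives in $CR'$, and its condition $c''$ is the conjunction of the original $cond$, of equalities $(T = E_j)$ recording the traversed prefix of the path, and of inequalities discharging classes that branch off the path — all satisfied by $\sigma$ since $\sigma(T) = C'$ lies on that path. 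Hence $\sigma(CR') \neq \emptyset$, a contradiction. The method, constructor, and extends cases are analogous; optional method requirements pose no obstruction since $\func{removeOptM}$ discards them unconditionally.

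The main obstacle will be the path-tracing induction in case (i): one must set up the invariant precisely — after processing the first $j$ headers and member clauses along the path, exactly one surviving descendant requirement has receiver equal to the $j$-th class on the path and a condition that is a conjunction satisfied by $\sigma$ — and verify it is preserved by each $\func{removeExt}$ and each $\func{removeF}$/$\func{removeM}$ step in the mandated subclass-before-superclass, members-before-header order. Getting the bookkeeping of the accumulated equality and inequality conjuncts right, and invoking completeness of $CT$ to guarantee the path terminates and is mirrored in $CR'$'s conditions, is the delicate part; the type-matching side (case (ii)) and the optional-method case are routine.
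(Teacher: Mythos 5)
Your proposal is correct and follows essentially the same route as the paper's own proof: both argue by contraposition, fixing a requirement $(CReq, cond) \in CR$ that $CT$ fails to satisfy and showing that a descendant of it survives all the \func{remove} operations with a condition that $\sigma$ satisfies, contradicting $\sigma(CR') = \emptyset$. Your version is in fact somewhat more careful than the paper's, which only treats the ``member not declared anywhere'' case (leaving the declared-with-wrong-type case to the hypothesis that $\sigma$ solves $S$, as you do explicitly) and glosses over the inheritance-chain bookkeeping that your path-tracing induction spells out.
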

\begin{proof} By contradiction. \\
	 By assumption $CT\ \satisfy \ \sigma(CR)$ does not hold. From this, follows $\exists (CReq, cond) \in CR$. $cond$ holds, i.e.,  all conditions of $cond$ do hold and no compatible clause with $CReq$ exists in $CT$.\\
As property of remove we add inequalities to invalidate requirements for which we know their exact types, as result exist at least one inequality condition that does not hold, and the requirement is removed, otherwise it remains in the requirements set. 

First we consider the extend clauses in the requirements set.
 Let us consider \\ $\exists (T.extends T', cond) \in CR\ such\ that\ cond\ hold$. By definition $\sigma(CR)$ is ground, namely $\sigma(T) = C', \sigma(T') = D'$. By assumption $(C'.extends : D') \notin CT$, i.e., the clause it is not member of any of the class declarations $\seq L$ that are used to realize removing. Therefore after performing remove $\nexists$  $\sigma(T)\neq C \in \sigma(cond')$ such that $\ \sigma(T)\neq C\ does\ not\ hold$, where $(T.extends : T', cond')\in CR'$, i.e., $\sigma(cond')$ hold. \\
Therefore $\sigma(CR') \neq\emptyset$.
  
This strategy of proof is used for constructor since from the definition of $removeCtor$ only inequality conditions are added, and not considered while removing extends clause. 

Second we consider field clauses. From the definition of $removeFs$ and $removeExt$ for every field clause we have a duplicated requirement corresponding to the parents type.\\
Let us consider $\exists (T.f : T_f, cond) \in CR.\ cond hold$. By definition $\sigma(CR)$ is ground, namely $\sigma(T) = C'$, $\sigma(T_f)= D'$. By assumption  $\nexists (D.f : D') \in CT$, such that $\sigma(T) <: D$. This means that in the class table was not added any $field\ clause$, such that $f$ is declared in $C'$ or in its parents. Therefore after performing remove 
$(T.f : T_f, cond') \in CR'$ we have that $\nexists (\sigma(T) \neq C) \in \sigma(cond').\ (\sigma(T) \neq C)\ does\ not\ hold$. i.e, $\sigma(cond')$ hold. \\
Therefore $\sigma(CR) \neq \emptyset$.

The same strategy of proving is used for methods.
\end{proof}

\begin{theorem}[Equivalence for programs: $\Leftarrow$]  \label{thm:equivPl}
Given $\seq L $, if \judgeOP{\seq{L}\ OK}{S}, $\func{solve}(\Sigma, S)=\sigma $, where $\projExt{\seq L} =\Sigma$, and $\sigma$ is a ground solution, then  \judgeP{ \seq{L}\ OK} holds.
\end{theorem}
\begin{proof} By induction on the typing judgment.\\
   Case \rulename{TC-Program} with \judgeOP{ \seq{L}\ OK}{S}.
   
   Let $S = \seq{S} \cup S_{cr} \cup S_r$, $\sigma$ is ground solutions and $solve(\projExt{\bar{L}}, S )=\sigma$, i.e., $\sigma$ solves $\seq S$, $S_{cr}$, $S_r$.

 By inversion, \cojudgeOC{\seq{L} \ OK}{\seq{S}}{\seq{CR}}, $\forall i\in 1\ldots n$. $\sigma(CR_i)$ are ground.\\
  $ CR|_{\typectxcolor{S_{c}}} = merge_{CR}(CR_1, \ldots, CR_n)$. \\
   $\biguplus_{ L' \in \seq{L}}( \func{removeMs}(CR, L')\uplus \func{removeFs}(CR, L') \uplus \func{removeCtor}(CR, L')
   \uplus \func{removeExt}(CR, L') ) = CR_f|_{S_r}$, and $\sigma(CR_f)= \emptyset$

 By Theorem~\ref{theo:CL}
, \judgeOP{CT}{\seq{L}\ OK}, and the correspondence relation holds, i.e., $\forall i\in [1..n]$. $CT \ satisfy\  \sigma(CR_i)$. 
 $CT\ \satisfy \ \sigma(CR_1)\cup\ldots \cup \sigma(CR_n)$, i.e., $CT\ \satisfy\ \sigma(CR)$ by Lemma~\ref{lem:satisfy}.

\textit{Class table construction CT} is $CT = \bigcup_{L'\in\seq{L}}  (\func{addExt}(L')\cup \func{addCtor}(L')\cup \func{addFs}(L') \cup \func{addMs}(L'))$ by Lemma ~\ref{thm:CRsat}.\\\\
Then \judgeP{\seq{L}\ OK} holds by rule \rulename{T-Program}.
\end{proof}

\end{document}